\documentclass[journal,12pt,onecolumn,draftclsnofoot]{IEEEtran}

\usepackage{amsmath}
\usepackage{graphicx,psfrag} 
\usepackage{amsmath,amsthm,amsfonts,amssymb,bm} 
\usepackage[ruled,vlined]{algorithm2e}
\usepackage{algpseudocode}
\usepackage{xcolor}
\usepackage{bbm}             
\usepackage{tabularx}        
\usepackage{bbm}
\usepackage{setspace}
\usepackage[latin1]{inputenc}
\linespread{1.25}
\allowdisplaybreaks[3]

\newcommand{\bs}{\mathbf}

\newtheorem{proposition}{Proposition}

\newtheorem{definition}{Definition}
\newtheorem{assumption}{Assumption}
\newtheorem{remark}{Remark}

\begin{document}

\title{
Multi-Layer Bilinear Generalized Approximate Message Passing
}

\author{
Qiuyun Zou, Haochuan Zhang, and Hongwen Yang

\thanks{
Q. Zou and H. Yang are with Beijing University of Posts and Telecommunications,
Beijing 100876, China (email: qiuyun.zou@bupt.edu.cn; yanghong@bupt.edu.cn).  \ (\textit{Corresponding author: H. Yang.})
}
\thanks
{
H. Zhang is with School of Automation, and with Research Institute of Integrated Circuit Innovation, both in Guangdong University of Technology, Guangzhou 510006, China (email: haochuan.zhang@qq.com).
}

\thanks{

}
}

\maketitle

\begin{abstract}
In this paper, we extend the bilinear generalized approximate message passing (BiG-AMP) approach, originally proposed for high-dimensional generalized bilinear regression, to the multi-layer case for the handling of cascaded problem such as matrix-factorization problem arising in relay communication among others.
Assuming statistically independent matrix entries with known priors, the new algorithm called ML-BiGAMP could approximate the general sum-product loopy belief propagation (LBP) in the high-dimensional limit enjoying a substantial reduction in computational complexity.
We demonstrate that, in large system limit, the asymptotic MSE performance of ML-BiGAMP could be fully characterized via a set of simple one-dimensional equations termed state evolution (SE). We establish that the asymptotic MSE predicted by ML-BiGAMP' SE matches perfectly the exact MMSE predicted by the replica method, which is well-known to be Bayes-optimal but infeasible in practice. This consistency indicates that the ML-BiGAMP may still retain the same Bayes-optimal performance as the MMSE estimator in high-dimensional applications, although ML-BiGAMP's computational burden is far lower. As an illustrative example of the general ML-BiGAMP, we provide a detector design that could estimate the channel fading and the data symbols jointly with high precision for the two-hop amplify-and-forward relay communication systems.

\end{abstract}

\begin{IEEEkeywords}
Multi-layer generalized bilinear regression, Bayesian inference, message passing, state evolution, replica method.
\end{IEEEkeywords}

\section{Introduction}
In the context of matrix completion \cite{Kabashima2016Phase}, robust principal component analysis \cite{candes2011robust},  dictionary learning \cite{kreutz2003dictionary,tosic2011dictionary}, and representation learning \cite{bengio2013representation},  the matrix factorization problem could be formalized as the following generalized bilinear regression problem: the signal recovery of $\bs{H}$ and $\bs{X}$ from $\bs{Y}=\boldsymbol{\phi}(\bs{Z},\bs{W})$ with $\bs{Z}=\bs{HX}$ and $\mathcal{P}(\bs{Y}|\bs{Z})=\int \delta(\bs{Y}-\boldsymbol{\phi}(\bs{Z},\bs{W}))\mathcal{P}(\bs{W})\text{d}\bs{W}$, where $\bs{Y}$ is observed from  $\bs{Z}$ and noise $\bs{W}$ through a deterministic and element-wise mapping $\boldsymbol{\phi}(\cdot)$, and $\bs{H}$ and $\bs{X}$ are matrices to be factorized. To solve this inference problem, Parker \textit{et al} proposed bilinear generalized approximate message passing (BiG-AMP) \cite{parker2014bilinear} algorithm, which achieved the Bayes-optimal error in large system setting with affordable computational complexity. Inspired by this seminal work, we consider in this paper an even more ambitious problem, i.e., multi-layer generalized bilinear regression.
\begin{figure}[!t]
\centering
\includegraphics[width=0.6\textwidth]{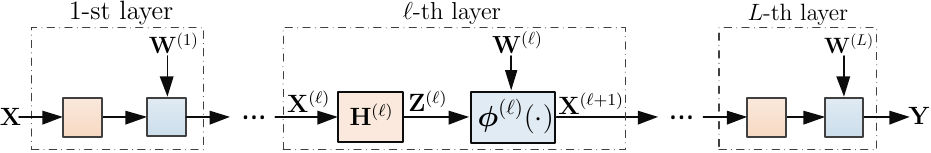}
\caption{The multi-layer generalized bilinear inference problem is to estimate the input signal $\bs{X}^{(\ell)}$ and measurement matrix $\bs{H}^{(\ell)}$ of each layer from the observation $\bs{Y}$.
}
\label{Fig:System}
\end{figure}
The multi-layer generalized bilinear model\footnote{Note that in \cite{pandit2020inference}, each layer of the model (\ref{alg:system}) was divided into two layer: odd-indexed layer (linear mixing space) and even-indexed layer (element-wise mapping).} can be described as
\begin{align}
\bs{X}^{(\ell+1)}=\boldsymbol{\phi}^{(\ell)}\left(\bs{H}^{(\ell)}\bs{X}^{(\ell)},\bs{W}^{(\ell)}\right),\quad \ell=1,\cdots,L,
\label{alg:system}
\end{align}
where $\bs{X}=\bs{X}^{(1)}$ is the input of the network, $\{\bs{X}^{(\ell)}\}_{\ell=2}^L$ are hidden layer signals, and $\bs{Y}=\bs{X}^{(L+1)}$ is the observation. In addition, $\bs{Z}^{(\ell)}\in \mathbb{R}^{N_{\ell+1}\times K}$ is obtained from $\bs{X}^{(\ell)}\in \mathbb{R}^{N_{\ell}\times K}$ going through a linear mixing defined by $\bs{Z}^{(\ell)}=\bs{H}^{(\ell)}\bs{X}^{(\ell)}$, while $\bs{X}^{(\ell+1)}$ is further generated from $\bs{Z}^{(\ell)}$ and random variable $\bs{W}^{(\ell)}$, whose probability distribution is $\mathcal{P}(\bs{W}^{(\ell)})$, using a deterministic and element-wise function $\boldsymbol{\phi}^{(\ell)}(\cdot)$.

The multi-layer generalized bilinear inference problem (\ref{alg:system}) arises in many contexts, such as, deep generative prior \cite{pandit2020inference,pandit2020inference2,yeh2017semantic,jalali2019solving}, massive multiple-input multiple-output (MIMO) relay system \cite{yang2020bayes,wen2010asymptotic}, and machine learning \cite{emami2020generalization,gabrie2018entropy}, where the correlations between sets of variables in different subsystems involve multiple layers of interdependencies. To address this issue, \cite{manoel2017multi,zou2020estimation} extended approximate message passing (AMP) \cite{bayati2011dynamics,donoho2009message} to provide inference algorithms for multi-layer region.
The  AMP algorithm, an approximation to sum-product loopy belief propagation (LBP), was firstly proposed for sparse signal reconstruction in standard linear inverse inference. The AMP's mean square error (MSE) performance could be predicted by a scalar formula called state evolution (SE) under the assumption of \textit{i.i.d.} sub-Gaussian random matrix regimes.  Further, it was shown that the AMP's SE matched perfectly the fixed point of the minimum mean square error (MMSE) estimator derived by replica method \cite{guo2005randomly}. In addition, the AMP algorithm is closely related to the celebrated iterative soft thresholding (IST) algorithm \cite{daubechies2004iterative}, in which the only difference is the \textit{Onsager term}.
Another algorithm for multi-layer inference refers to multi-layer vector AMP (ML-VAMP) \cite{pandit2020inference}, which extended the VAMP algorithm to cover the multi-layer case. Recently, it has been proven that VAMP and AMP have identical fixed points in their state evolutions \cite{zhang2020identical}. The VAMP algorithm holds under a much broader class of large random matrices (right-orthogonally invariant) than AMP algorithm but has higher computational complexity for their overlapping regions due to the singular value decomposition (SVD) operation, which is very close to expectation propagation (EP) \cite{minka2001family}, expectation consistent (EC) \cite{opper2005expectation,he2017generalized}, and orthogonal approximate message passing (OAMP) \cite{ma2017orthogonal}.  For the case of $K>1$, \cite{pandit2020inference} extended the ML-VAMP algorithm to the matrix case, called ``ML-Mat-VAMP''. Similar to AMP-like algorithms, the asymptotic MSE performance of ML-Mat-VAMP could be predicted in a certain random large system limits. However, the ML-Mat-VAMP algorithm is costly in computation due to the SVD operation.

To handle the multi-layer generalized bilinear inference problem, in the present work, we extend the celebrated bilinear generalized AMP (BiG-AMP) algorithm \cite{parker2014bilinear} to multi-layer case and propose the  multi-layer bilinear generalized approximate message passing (ML-BiGAMP). The ML-BiGAMP algorithm solves the vector-valued estimation problem into a sequence of scalar problems and linear transforms, and is thus low-complexity, which is an approximation of the sum-product LBP by performing Gaussian approximation and Taylor expansion. Similar to other AMP-like algorithms, by performing large system analysis, we give SE analysis of the ML-BiGAMP algorithm, which exactly predicts the asymptotic MSE performance of ML-BiGAMP when the latter should be run for a sufficiently large number of iterations. In addition, we apply replica method\footnote{ Although replica method is known as a non-rigorous tool, this method is widely believed to be exact in the context of theoretical statistical physics \cite{Kabashima2016Phase}. Recently,  several literatures have proven that the replica prediction is correct in the case of \textit{i.i.d.} Gaussian matrices (e.g., \cite{reeves2016replica}).} derived from statistic physics \cite{mezard1987spin} to analyze the achievable MSE performance of the exact MMSE estimator for multi-layer generalized bilinear inference problem. Indeed, a first cross-check of the correctness of our results is the fact that the asymptotic MSE predicted by ML-BiGAMP'SE agrees precisely with the exact MMSE as predicted by replica method in certain random large system limit. The main contributions of this work are summarized as follows:
\begin{itemize}
\item
We propose a computationally efficient iterative algorithm, \textit{multi-layer bilinear generalized approximate message passing} or \textit{ML-BiGAMP}, for estimating $\{\bs{X}^{(\ell)}\}_{\ell=1}^L$ and $\{\bs{H}^{(\ell)}\}_{\ell=1}^L$ from the network output $\bs{Y}$ of the form in (\ref{alg:system}).
\item
Under the \textit{i.i.d.} Gaussian measurement matrices, we show that the asymptotic MSE performance of the ML-BiGAMP algorithm could be fully characterized by a set of one-dimensional iterative equations termed state evolution.
\item
We establish that the asymptotic MSE predicted by ML-BiGAMP'SE matches perfectly the exact MMSE predicted by the replica method, which is well known to be Bayes-optimal but infeasible in practice. The fixed point equations of the exact MMSE estimator further reveal the decouple principle, that is, in large system limit, the input output relationship of the model (\ref{alg:system}) is decoupled into a bank of scalar additive white Gaussian noise (AWGN) channels w.r.t. the input signal $\bs{X}$ and measurement matrices $\{\bs{H}^{(\ell)}\}_{\ell=1}^L$.
\item
Based on the proposed algorithm, we develop a joint channel and data (JCD) estimation method for massive amplify-and-forward (AF) relay communication, where the estimated payload data are utilized to aid the channel estimation. The simulation results confirm that our JCD method improves the performance of the pilot-only method, and validate the consistency of MSE performance of ML-BiGAMP and its SE.
\end{itemize}

%
%
%

The remainder of this work is organized as follows. Section II presents several examples of the multi-layer generalized bilinear inference problem (\ref{alg:system}). In Section III, we introduce the proposed ML-BiGAMP algorithm. In Section IV, we give the SE analysis of the ML-BiGAMP algorithm. In Section V, we apply the replica method to analyze the asymptotic MSE performance of the exact MMSE estimator. Finally, Section VI gives numeric simulations to validate the accuracy of these theoretic results.

\textit{Notations:} $\bs{A}$ denotes a matrix with $a_{ij}$ being its $(i,j)$-th element. $\|\bs{A}\|_{\text{F}}$ denotes the Frobenius norm. $\mathcal{N}(x|a,A)$ denotes a Gaussian distribution with mean $a$ and variance $A$:
$$
\mathcal{N}(x|a,A)=\frac{1}{\sqrt{2\pi A}}\exp \left[-\frac{(x-a)^2}{2A}\right].$$
$\mathcal{N}_{x|z}^{(\ell)}(a,A,b,B)=\mathcal{P}(x^{(\ell+1)}|z^{(\ell)})\mathcal{N}(z^{(\ell)}|a,A)\mathcal{N}(x^{(\ell+1)}|b,B)$, where $\mathcal{P}(x^{(\ell+1)}|z^{(\ell)})$ is the  transition distribution from $z^{(\ell)}$ to $x^{(\ell+1)}$.
$\text{D}\xi$ denotes Gaussian measure i.e., $\text{D}\xi=\mathcal{N}(\xi|0,1)\text{d}\xi$.

\section{Examples of Multi-Layer Generalized Bilinear Regression}
\label{Sec:Example}
For the model in (\ref{alg:system}), it is assumed that the transition distribution of each layer is componentwise, which is given by
\begin{align}
\nonumber
\mathcal{P}(\bs{X}^{(\ell+1)}|\bs{Z}^{(\ell)})
&=\int \delta\left(\bs{X}^{(\ell+1)}-\boldsymbol{\phi}^{(\ell)}(\bs{Z}^{(\ell)},\bs{W}^{(\ell)})\right)\mathcal{P}(\bs{W}^{(\ell)})\text{d}\bs{W}^{(\ell)},
\label{Equ:Transition}
\end{align}
where $\delta(\cdot)$ denotes Dirac delta function. Additionally, the componentwise mapping means $\mathcal{P}(\bs{X}^{(\ell+1)}|\bs{Z}^{(\ell)})=\prod_{m=1}^{N_{\ell+1}}\prod_{k=1}^{K}\mathcal{P}\left(x_{mk}^{(\ell+1)}|z_{mk}^{(\ell)}\right)$. The multi-layer generalized bilinear inference problem is to estimate the input signals $\{\bs{X}^{(\ell)}\}_{\ell=1}^L$ and measurement matrices $\{\bs{H}^{(\ell)}\}_{\ell=1}^L$ from the output $\bs{Y}$ of the model. In doing so, it is assumed that $\bs{X}$ and $\bs{H}^{(\ell)}$ are composed of random variables $\textsf{X}$ and $\textsf{H}^{(\ell)}$, respectively, which are drawn from the known distributions $\mathcal{P}(x)$ and $\mathcal{P}(h^{(\ell)})$, i.e.,
\begin{align}
\mathcal{P}(\bs{X})&=\prod_{n=1}^{N_1}\prod_{k=1}^K \mathcal{P}(x_{nk}),\\
\mathcal{P}(\bs{H}^{(\ell)})&=\prod_{m=1}^{N_{\ell+1}}\prod_{n=1}^{N_{\ell}}\mathcal{P}(h_{mn}^{(\ell)}).
\end{align}
We consider the \textit{large system limit}, in which the dimensions of the system go into infinity, i.e., $\forall \ell, N_{\ell},K\rightarrow \infty$ but the ratios $\alpha=\frac{N_1}{K}$ and $\beta_{\ell}=\frac{N_{\ell+1}}{N_{\ell}}$ are fixed and bounded. Actually, the model in (\ref{alg:system}) is a general model with many important problems as its special cases. We give a brief review in the following.

\subsection{Single-Layer Inference Problem}
When $L=1$, the multi-layer inference problem (\ref{alg:system}) reduces to a matrix factorization problem or generalized bilinear inverse problem, in which the target is to estimate the signal of interest $\bs{X}$ and the measurement matrix $\bs{H}$ from the observation $\bs{Y}$:
\begin{align}
\bs{Y}=\boldsymbol{\phi}(\bs{Z},\bs{W}), \quad \text{s.t.} \quad \bs{Z}=\bs{HX}.
\label{Equ:Single_System}
\end{align}
This degenerated model has a wide range of applications. One example is the joint channel and user data estimation \cite{wen2015bayes,zou2020low} considering a quantized massive MIMO communication system, in which the function $\boldsymbol{\phi}(\cdot)$ is particularized as $\bs{Y}=\textsf{Q}(\bs{HX}+\bs{W})$ with $\textsf{Q}(\cdot)$ being an uniform quantizer. More applications could be found in dictionary learning, blind matrix calibration, sparse principal component analysis (PCA) and blind source separation \cite{Kabashima2016Phase}. It is worthy of noting that when the function $\boldsymbol{\phi}(\cdot)$ is particularized as a linear function, i.e., $\bs{Y}=\bs{HX}+\bs{W}$, and the measure matrix is already known, the model is degenerated to multiple measurement vector (MMV) problem, which has been widely applied in compressed sensing \cite{kim2011belief,ziniel2012efficient,haghighatshoar2018multiple}, user activity detection in communication \cite{liu2018massive,liu2019generalized}, and direction of arrival (DOA) estimation \cite{tzagkarakis2010multiple}.


\subsection{Multi-Hop Relay Communication}
\begin{figure}[!t]
\centering
\includegraphics[width=0.55\textwidth]{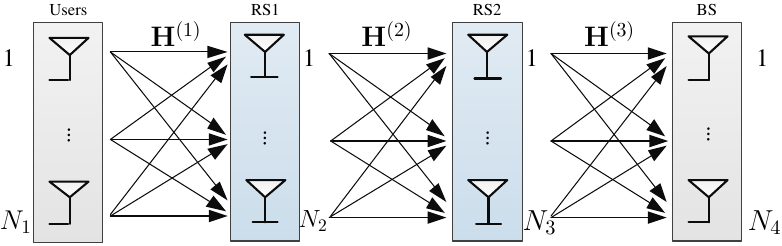}
\caption{Massive MIMO AF relay system.
}
\label{Fig:Relay_System}
\end{figure}
The multi-layer inference problem (\ref{alg:system}) can also be applied to multi-hop massive MIMO amplify-and-forward (AF) relay system \cite{yang2020bayes,wen2010asymptotic}, which has been regarded as an attractive solution to improve the quality of wireless communication. Fig.~\ref{Fig:Relay_System} shows a special case of multi-hop massive MIMO AF relay system in $L=3$. The multi-hop massive MIMO AF relay system can be modeled as
\begin{align}
\begin{cases}
\bs{X}^{(2)}=\textsf{Q}_{\textsf{c}}\left(\bs{H}^{(1)}\bs{X}^{(1)}+\bs{W}^{(1)}\right)\\
\bs{X}^{(3)}=\textsf{Q}_{\textsf{c}}\left(\rho^{(2)}\bs{H}^{(2)}\bs{X}^{(2)}+\bs{W}^{(2)}\right)\\
\quad \qquad \ \ \vdots\\
\ \ \ \bs{Y}=\textsf{Q}_{\textsf{c}}\left(\varrho^{(L)}\bs{H}^{(L)}\bs{X}^{(L)}+\bs{W}^{(L)}\right)
\end{cases},
\end{align}
where the matrices $\bs{H}^{(1)}$, $\{\bs{H}^{(\ell)}\}_{2}^{L-1}$, and $\bs{H}^{(L)}$ denote the channels from users to $1$st relay station (RS), $(\ell-1)$-th RS to $\ell$-th RS, and $(L-1)$-th RS to BS, respectively. $\{\bs{W}^{(\ell)}\}_{\ell=1}^L$ are the corresponding additive white Gaussian noises (AWGNs). $\{\varrho^{(\ell)}\}_{\ell=2}^L$ are amplification coefficient. $\textsf{Q}_{\textsf{c}}(\cdot)$ refers to a complex-valued quantizer including two separate real-valued quantizer $\textsf{Q}(\cdot)$. In \cite{yang2020bayes}, the authors considered a two-hop massive MIMO AF relay system with perfect channel information and  developed a EC based method to estimate the user data, which can be regarded as a special case of ML-VAMP in $L=2$.

\subsection{RIS-Aided Massive MIMO System}
A reconfigurable intelligent surfaces (RIS)-aided massive MIMO system \cite{he2019cascaded} is presented in Fig.~\ref{Fig:RIS_system}, where RIS includes $N_2$ low-cost passive elements and the BS is equipped  with $N_1$ antennas. Each user is equipped with $N_3$ antennas. In a coherent block (block length $K$), the received signal of the reference user can be expressed as\footnote{Here, we consider that the RIS not only reflects the signal, but also reflects nearby stray electromagnetic signal.}
\begin{align}
\bs{Y}=\bs{H}^{(2)}(\bs{S}\odot (\bs{H}^{(1)}\bs{X}^{(1)})+\bs{W}^{(1)})+\bs{W}^{(2)},
\end{align}
where $\odot$ represents  componentwise vector multiplication, $\bs{X}^{(1)}\in \mathbb{R}^{N_1\times K}$ is the transmitted signal, and $(\bs{W}^{(1)}, \bs{W}^{(2)})$ are additive noise with power $\sigma_w^2$.  $\bs{H}^{(1)}\in \mathbb{R}^{N_2\times N_1}$ and $\bs{H}^{(2)}\in \mathbb{R}^{N_3\times N_2}$ are the channels from BS to RIS and RIS to user, respectively. In addition, $\bs{S}$ is phase shift matrix and is known beforehand. Such system corresponds to the MMSE estimation of multi-layer generalized bilinear model in $L=2$. By defining $\bs{Z}^{(1)}=\bs{H}^{(1)}\bs{X}^{(1)}$ and $\bs{Z}^{(2)}=\bs{H}^{(2)}\bs{X}^{(2)}$, the transition distributions of the two layers are given by $\mathcal{P}(x^{(2)}_{mk}|z^{(1)}_{mk})=\mathcal{N}(x^{(2)}_{mk}|s_{mk}z^{(1)}_{mk},\sigma_w^2)$ and $\mathcal{P}(y_{pk}|z^{(2)}_{pk})=\mathcal{N}(y_{pk}|z^{(2)}_{pk},\sigma_w^2)$.  As the RIS only reflects the signal, then the model degenerates $\bs{Y}=\bs{H}^{(2)}(\bs{S}\odot (\bs{H}^{(1)}\bs{X}^{(1)}))+\bs{W}^{(2)}$. Accordingly, the transition distribution becomes
$\mathcal{P}(x^{(2)}_{mk}|z^{(1)}_{mk})=\delta(x^{(2)}_{mk}-s_{mk}z^{(1)}_{mk})$. Indeed, the Dirac delta function $\delta(x)$ can be regarded as the limit of standard Gaussian: $\lim_{v\rightarrow 0}(2\pi v)^{-\frac{1}{2}}\exp(-\frac{x^2}{2v})$, which is useful in realization.\\
\begin{figure}[!t]
\centering
\includegraphics[width=0.45\textwidth]{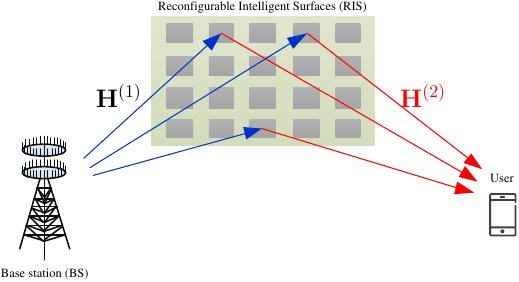}
\caption{A RIS-aided massive MIMO system.
}
\label{Fig:RIS_system}
\end{figure}

\subsection{Compressive Matrix Completion}
\label{Sec:CMC}
In matrix completion (MC) \cite{parker2014bilinearII}, only a fraction of entries of observation are valid. In other words, the observation in MC problem is generally sparse. To reduce the memory, we here consider a more practical scheme: compressive matrix completion, i.e., MC + \textit{compressive sampling}. The problem of compressive MC can be modeled as
\begin{align}
\begin{cases}
\bs{X}^{(2)}=\boldsymbol{f}(\bs{H}^{(1)}\bs{X}^{(1)}+\bs{W}^{(1)})\\
\ \ \  \bs{Y}=\bs{H}^{(2)}\bs{X}^{(2)}+\bs{W}^{(2)}
\end{cases},
\end{align}
where $\bs{H}^{(2)}\in \mathbb{R}^{N_3\times N_2}$ ($N_3\ll N_2$), and $\boldsymbol{f}$ is a componentwise mapping which is specified by
\begin{align}
\mathcal{P}(x^{(2)}_{mk}|z_{mk}^{(1)})=
\begin{cases}
\mathcal{N}(x^{(2)}_{mk}|z_{mk}^{(1)},\sigma_w^2)  & (m,k)\in \Omega\\
\mathbbm{1}_{y}  &(m,k)\notin \Omega
\end{cases},
\label{Equ:MC}
\end{align}
where $\Omega$ is a subset of valid entries of $\bs{X}^{(2)}$ and $\mathbbm{1}_y$ denotes a point mass at $y=0$. The goal is to recover a rank $N_1\ll \min (N_2,K)$ matrix $\bs{Z}^{(1)}=\bs{H}^{(1)}\bs{X}^{(1)}\in \mathbb{R}^{N_2\times K}$ from the observation $\bs{Y}\in \mathbb{R}^{N_3\times K}$.
For convenience, we here consider the rank $N_1$ is given. However, for the case of unknown $N_1$, similar to BiG-AMP \cite{parker2014bilinearII}, the proposed ML-BiGAMP can also be combined with rank selection method. The penalized log-likelihood is given by
\begin{align}
\hat{N}_1=\underset{N_1=1,\cdots, N_{\text{max}}}{\arg\max}\ 2\log p(\hat{\bs{X}}_{N_1}^{(2)}|\hat{\bs{Z}}_{N_1}^{(1)})-\eta(N_1),
\end{align}
where subscript ${N_1}$ indicates the restriction to rank $N_1$, $\hat{\bs{X}}_{N_1}^{(2)}$ and $\hat{\bs{Z}}_{N_1}^{(1)}=\hat{\bs{H}}_{N_1}^{(1)}\hat{\bs{X}}_{N_1}^{(1)}$ are provided by ML-BiGAMP, and $\eta(N_1)$ is penalty function (see \cite{parker2014bilinearII}). Specially, when compressive sampling is not considered, such compressive MC degenerates the classical MC. Note that compared to compressive sampling \cite{donoho2009message}, the prior of $\bs{X}^{(2)}$ is unknown in our compressive MC.


\section{ML-BiGAMP}
\subsection{Problem Formulation}
Considering the multi-layer generalized bilinear inference problem (\ref{alg:system}), all the input signals $\{\bs{X}^{(\ell)}\}_{\ell=1}^L$ and measurement matrices $\{\bs{H}^{(\ell)}\}_{\ell=1}^L$ of each layer should be estimated with the known distributions $\mathcal{P}(x)$ and $\mathcal{P}(h^{(\ell)})$. To address this joint estimation problem, we treat it under the framework of Bayesian inference, which provides several analytical and optimal estimators. Among them, we are interested in minimum mean square error (MMSE) estimator \cite[Chapter 10]{kay1993fundamentals}, which is optimal in MSE sense. The MMSE estimator of $\bs{X}^{(\ell)}$ and $\bs{H}^{(\ell)}$ are given by
\begin{align}
&\forall n, k,\ell \ :\ \hat{x}_{nk}^{(\ell)}=\mathbb{E}\left[x_{nk}^{(\ell)}|\bs{Y}\right],
\label{Eq_MMSE_x}\\
&\forall m,n,\ell: \ \hat{h}_{mn}^{(\ell)}=\mathbb{E}\left[h_{mn}^{(\ell)}|\bs{Y}\right],
\label{Eq_MMSE_h}
\end{align}
where the expectations are taken over the marginal distributions $\mathcal{P}(x_{nk}^{(\ell)}|\bs{Y})$ and $\mathcal{P}(h_{mn}^{(\ell)}|\bs{Y})$, respectively, which are the marginalization of $\mathcal{P}(\bs{X}^{(\ell)},\bs{H}^{(\ell)}|\bs{Y})$. The posterior distribution $\mathcal{P}(\bs{X}^{(\ell)},\bs{H}^{(\ell)}|\bs{Y})$ is written as
\begin{align}
\nonumber
\mathcal{P}(\bs{X}^{(\ell)},\bs{H}^{(\ell)}|\bs{Y})=&\frac{1}{\mathcal{P}(\bs{Y})}\int \prod_{l\ne \ell}^L\text{d}\bs{H}^{(l)}\prod_{l\ne \ell}^L\text{d}\bs{X}^{(l)}\\
&\quad \times   \left[\mathcal{P}(\bs{X})\prod_{\iota=1}^L\mathcal{P}(\bs{H}^{(\iota)})\mathcal{P}(\bs{X}^{(\iota+1)}|\bs{H}^{(\iota)},\bs{X}^{(\iota)})\right],
\label{JP1}
\end{align}
where $\mathcal{P}(\bs{Y})$ is the partition function. The MMSE estimators minimize the MSEs defined as
\begin{align}
\textsf{mse}(\bs{X}^{(\ell)})&=\frac{1}{N_{\ell}K}\mathbb{E}\left\{\|\hat{\bs{X}}^{(\ell)}-\bs{X}^{(\ell)}\|_{\text{F}}^2\right\},\\
\textsf{mse}(\bs{H}^{(\ell)})&=\frac{1}{N_{\ell+1}N_{\ell}}\mathbb{E}\left\{\|\hat{\bs{H}}^{(\ell)}-\bs{H}^{(\ell)}\|_{\text{F}}^2\right\},
\end{align}
where the expectations are taken over  $\mathcal{P}(\bs{X}^{(\ell)},\bs{Y})$ and $\mathcal{P}(\bs{H}^{(\ell)},\bs{Y})$, respectively. Additionally,  $\hat{\bs{H}}^{(\ell)}=\{\hat{h}_{mn}^{(\ell)}, \forall m,n\}$ and $\hat{\bs{X}}^{(\ell)}=\{\hat{x}_{nk}^{(\ell)}, \forall n,k\}$.

Actually, the exact MMSE estimator is generally prohibitive due to the high-dimensional integrals. Recent advances in signal processing \cite{donoho2009message}, \cite{2010arXiv1010.5141R} showed that the exact estimator can be efficiently approximated by the sum-product LBP, and a renowned solution for the single-layer case was BiG-AMP \cite{parker2014bilinear}. The multi-layer generalized bilinear regression problem is more general and complex than the single layer, and the technical challenge lies in the design of message passing in the middle layer. In this context, we propose multi-layer bilinear generalized approximate message passing (ML-BiGAMP) as an extension of the BiG-AMP to the multi-layer case.

\begin{algorithm}[!t]
\caption{ML-BiGAMP}
\label{alg:ML-BiGAMP}
{
\begingroup
\textbf{1.Initialization:} Choosing $\{Z_{mk}^{(\ell)}(1),V_{mk}^{(\ell)}(1)\}$, $\{\hat{h}_{mn}^{(\ell)}(1),v^{(h,\ell)}_{mn}(1)\}$, $\{\hat{x}_{nk}^{(\ell)}(1),v_{nk}^{(x,\ell)}(1)\}$.\\
\textbf{2.Output:} $\hat{\bs{X}}^{(\ell)}$, $\hat{\bs{H}}^{(\ell)}$.\\
\textbf{3.Iteration:} (for $t=1,\cdots,T$)\\
\For{$\ell=L,\cdots,1$}
{
 $\{\text{Module A}^{(\ell)}\}$
 \setlength\abovedisplayskip{0pt}
 \setlength\belowdisplayskip{0pt}
 \begin{align}
 \tilde{z}_{mk}^{(\ell)}(t)&=\mathbb{E}[\zeta_{mk}^{(\ell)}(t)]
 \tag{R1} \label{R1}\\
 \tilde{v}_{mk}^{(\ell)}(t)&=\text{Var}[\zeta_{mk}^{(\ell)}(t)]
 \tag{R2} \label{R2}\\
  \hat{s}_{mk}^{(\ell)}(t)&=(\tilde{z}_{mk}^{(\ell)}(t)-Z_{mk}^{(\ell)}(t))/(V_{mk}^{(\ell)}(t))
 \tag{R3} \label{R3}\\
  v^{(s,\ell)}_{mk}(t)&=(V_{mk}^{(\ell)}(t)-\tilde{v}_{mk}^{(\ell)}(t))/((V_{mk}^{(\ell)}(t))^2)
 \tag{R4} \label{R4}\\
  \Sigma_{nk}^{(x,\ell)}(t)&= \left(\sum\nolimits_{m=1}^{N_{\ell+1}}|\hat{h}_{mn}^{(\ell)}(t)|^2v_{mk}^{(s,\ell)}(t)\right)^{-1}
 \tag{R5} \label{R5}\\
  R_{nk}^{(x,\ell)}(t)&=\hat{x}_{nk}^{(\ell)}(t)\left[1-\Sigma_{nk}^{(x,\ell)}(t)\sum_{m=1}^{N_{\ell+1}}v^{(h,\ell)}_{mn}(t)v_{mk}^{(s,\ell)}(t)\right]+\Sigma_{nk}^{(x,\ell)}(t)\sum_{m=1}^{N_{\ell+1}}(\hat{h}_{mn}^{(\ell)}(t))^*\hat{s}^{(\ell)}_{mk}(t)
  \tag{R6} \label{R6}\\
  \Sigma_{mn}^{(h,\ell)}(t)&=\left(\sum_{k=1}^{K}|\hat{x}_{nk}^{(\ell)}(t)|^2v_{mk}^{(s,\ell)}(t)\right)^{-1}
  \tag{R7} \label{R7}\\
  R_{mn}^{(h,\ell)}(t)&=\hat{h}_{mn}^{(\ell)}(t)\left[1-\Sigma_{mn}^{(h,\ell)}(t)\sum_{k=1}^{K}v^{(x,\ell)}_{nk}(t)v^{(s,\ell)}_{mk}(t)\right]+\Sigma_{mn}^{(h,\ell)}(t)\sum_{k=1}^{K}(\hat{x}_{nk}^{(\ell)}(t))^{*}\hat{s}_{mk}^{(\ell)}(t)
  \tag{R8} \label{R8}
 \end{align}
}
\For{$\ell=1,\cdots,L$}
{
    $\{\text{Module B}^{(\ell)}\}$
    \setlength\abovedisplayskip{0pt}
    \setlength\belowdisplayskip{0pt}
    \begin{align}
    \hat{x}_{nk}^{(\ell)}(t+1)&=\mathbb{E}[\xi_{nk}^{(x,\ell)}(t+1)]
    \tag{R9}    \label{R9}\\
    v_{nk}^{(x,\ell)}(t+1)&=\text{Var}[\xi_{nk}^{(x,\ell)}(t+1)]
    \tag{R10}   \label{R10}\\
   \hat{h}_{mn}^{(\ell)}(t+1)&=\mathbb{E}[\xi_{mn}^{(h,\ell)}(t+1)]
   \tag{R11}    \label{R11}\\
    v_{mn}^{(h,\ell)}(t+1)&=\text{Var}[\xi_{mn}^{(h,\ell)}(t+1)]
    \tag{R12}   \label{R12}\\
    \overline{V}_{mk}^{(\ell)}(t+1)&=\sum\nolimits_{n=1}^{N_{\ell}}\left[
   |\hat{x}_{nk}^{(\ell)}(t+1)|^2v_{mn}^{(h,\ell)}(t+1)+|\hat{h}_{mn}^{(\ell)}(t+1)|^2v_{nk}^{(x,\ell)}(t+1)
   \right]
   \tag{R13}    \label{R13}\\
   \overline{Z}_{mk}^{(\ell)}(t+1)&=\sum\nolimits_{n=1}^{N_{\ell}}\hat{h}_{mn}^{(\ell)}(t+1)\hat{x}_{nk}^{(\ell)}(t+1)
   \tag{R14}    \label{R14}\\
   V_{mk}^{(\ell)}(t+1)&=\overline{V}_{mk}^{(\ell)}(t+1)+\sum_{n=1}^{N_{\ell}}v^{(h,\ell)}_{mn}(t+1)v^{(x,\ell)}_{nk}(t+1)
   \tag{R15}    \label{R15}\\
   Z_{mk}^{(\ell)}(t+1)&=\overline{Z}_{mk}^{(\ell)}(t+1)-\hat{s}_{mk}^{(\ell)}(t)\overline{V}_{mk}^{(\ell)}(t+1)
   \tag{R16} \label{R16}
  \end{align}
}
\endgroup
}
\end{algorithm}

\subsection{The ML-BiGAMP Algorithm}
\begin{figure}[!t]
\centering
\includegraphics[width=0.6\textwidth]{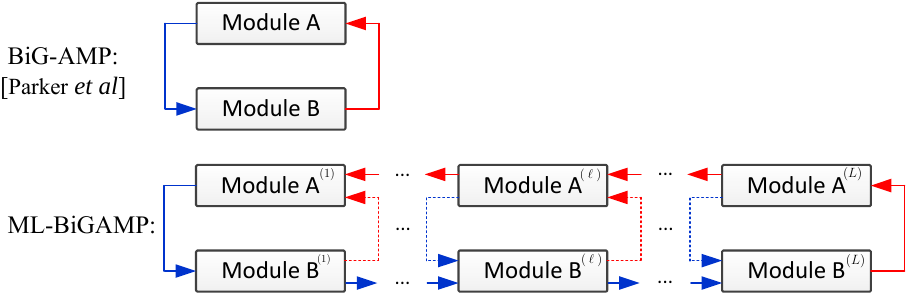}
\caption{The framework of ML-BiGAMP and BiG-AMP.
}
\label{Fig.ML-BiGAMP}
\end{figure}

The ML-BiGAMP algorithm described in Algorithm \ref{alg:ML-BiGAMP} operates in an iterative manner and thus organizes its message passing in two directions, one for the forward and the reverse. Per-iteration of the algorithm seen Fig.~\ref{Fig.ML-BiGAMP} works in a cyclic manner: $\text{Module A}^{(L)} \rightarrow \cdots \text{Module A} ^{(\ell)} \rightarrow \cdots  \text{Module A}^{(1)}\rightarrow \text{Module B}^{(1)} \rightarrow \cdots \text{Module B}^{(\ell)} \rightarrow \cdots  \text{Module B}^{(L)}$.

Module A$^{(\ell)}$ involves the \textit{scalar estimations} (R1)-(R2) and vector valued operations (R3)-(R8). In (R1)-(R2), the parameters $(\tilde{z}_{mk}^{(\ell)}(t),\tilde{v}_{mk}^{(\ell)}(t))$ represent the mean and variance of random variable (RV) $\zeta_{mk}^{(\ell)}(t)$ drawn by the approximate posterior distribution $\hat{\mathcal{P}}^t(z_{mk}^{(\ell)}|y)$ of $z_{mk}^{(\ell)}$, which for $\ell=L$ is expressed as,
\begin{align}
\zeta_{mk}^{(\ell)}(t)\sim
\begin{cases}
\frac{ \mathcal{P}(y_{mk}|z_{mk}^{(\ell)})\mathcal{N}(z_{mk}^{(\ell)}|Z_{mk}^{(\ell)}(t),V_{mk}^{(\ell)}(t))}{\int \mathcal{P}(y_{mk}|z^{(\ell)})\mathcal{N}(z^{(\ell)}|Z_{mk}^{(\ell)}(t),V_{mk}^{(\ell)}(t))\text{d}z^{(\ell)}} &\ell=L\\
\frac{\int \mathcal{N}_{x|z_{mk}}^{(\ell)}(Z_{mk}^{(\ell)}(t),V_{mk}^{(\ell)}(t),R_{mk}^{(x,\ell+1)}(t),\Sigma_{mk}^{(x,\ell+1)}(t))\text{d}x^{(\ell+1)}}
{\int \mathcal{N}_{x|z}^{(\ell)}(Z_{mk}^{(\ell)}(t),V_{mk}^{(\ell)}(t),R_{mk}^{(x,\ell+1)}(t),\Sigma_{mk}^{(x,\ell+1)}(t))\text{d}z^{(\ell)}\text{d}x^{(\ell+1)}}
& \ell<L
\end{cases}
\label{Equ:zeta},
\end{align}
where
$$\mathcal{N}_{x|z}^{(\ell)}(a,A,b,B)=\mathcal{P}(x^{(\ell+1)}|z^{(\ell)})\mathcal{N}(z^{(\ell)}|a,A)\mathcal{N}(x^{(\ell+1)}|b,B).$$
Note that the term $\mathcal{N}(z_{mk}^{(\ell)}|Z_{mk}^{(\ell)}(t), V_{mk}^{(\ell)}(t))$ is $t$-iteration approximate prior of $z_{mk}^{(\ell)}$ , i.e., $\hat{\mathcal{P}}^t(z_{mk}^{(\ell)})$; while $\mathcal{N}(x_{mk}^{(\ell+1)}|R_{mk}^{(x,\ell+1)}(t),\Sigma_{mk}^{(x,\ell+1)}(t))$ is $t$-iteration approximate likelihood function from $x_{mk}^{(\ell+1)}$ to observation, i.e., $\hat{\mathcal{P}}^t(y|x_{mk}^{(\ell+1)})$.

Similar to Module A$^{(\ell)}$, Module B$^{(\ell)}$ also includes \textit{scalar estimations} (R9)-(R12) and vector valued operations (R13)-(R16). The parameters $(\hat{x}_{nk}^{(\ell)}(t+1),v_{nk}^{(x,\ell)}(t+1))$ denote the mean and variance of RV $\xi_{nk}^{(x,\ell)}(t+1)$, which $\text{for}\ \ell=1$ follows
\begin{align}
\xi_{nk}^{(x,\ell)}(t+1)\sim
\begin{cases}
 \frac{\mathcal{P}(x_{nk}^{(\ell)})\mathcal{N}(x_{nk}^{(\ell)}|R^{(x,\ell)}_{nk}(t),\Sigma_{nk}^{(x,\ell)}(t))}{\int \mathcal{P}(x)\mathcal{N}(x|R^{(x,\ell)}_{nk}(t),\Sigma_{nk}^{(x,\ell)}(t))\text{d}x} &\ell=1\\
 \frac{\int \mathcal{N}_{x_{nk}|z}^{(\ell-1)}(Z_{nk}^{(\ell-1)}(t+1),V_{nk}^{(\ell-1)}(t+1),R^{(x,\ell)}_{nk}(t),\Sigma_{nk}^{(x,\ell)}(t))\text{d}z^{(\ell-1)}}
{\int \mathcal{N}_{x|z}^{(\ell-1)}(Z_{nk}^{(\ell-1)}(t+1),V_{nk}^{(\ell-1)}(t+1),R^{(x,\ell)}_{nk}(t),\Sigma_{nk}^{(x,\ell)}(t))\text{d}z^{(\ell-1)}x^{(\ell)}} &\ell>1
\end{cases}
\label{Equ:xi_x}.
\end{align}
Moreover, the parameters $\hat{h}_{mn}^{(\ell)}(t+1)$ and $v^{(h,\ell)}_{mn}(t+1)$ refer to the mean and variance of RV $\xi^{(h,\ell)}_{mn}(t+1)$ distributed as
\begin{align}
\xi^{(h,\ell)}_{mn}(t+1)\sim \frac{\mathcal{P}(h_{mn}^{(\ell)})\mathcal{N}(h_{mn}^{(\ell)}|R^{(h,\ell)}_{mn}(t),\Sigma_{mn}^{(h,\ell)}(t))}{\int \mathcal{P}(h)\mathcal{N}(h|R^{(h,\ell)}_{mn}(t),\Sigma_{mn}^{(h,\ell)}(t))\text{d}h},
\label{Equ:xi_h}
\end{align}
where the term $\mathcal{N}(h_{mn}^{(\ell)}|R^{(h,\ell)}_{mn}(t),\Sigma_{mn}^{(h,\ell)}(t))$ is $t$-iteration approximate likelihood function from $h_{mn}^{(\ell)}$ to observation, i.e., $\hat{\mathcal{P}}^t(y|h_{mn}^{(\ell)})$.

To derive the proposed ML-BiGAMP algorithm, we first use the factor graph to represent the joint posterior distribution $\mathcal{P}(\bs{X}^{(\ell)},\bs{H}^{(\ell)}|\bs{Y})$ in (\ref{JP1}), which includes variable nodes (sphere) and factor nodes (cube). Then the marginal posterior distribution can be approximated by sum-product loopy belief propagation (LBP), which is impractical in large system limit. To reduce the complexity of LBP, we simplify the messages from factor nodes to variable nodes by central limit theorem (CLT) and Taylor expansion. The messages from variable nodes to factor nodes are updated by Gaussian reproduction property. Besides, several new variables in belief distributions are defined to establish the relationship between belief distributions and the messages from variable nodes to factor nodes, where the Taylor expansion is applied again. By ignoring infinitesimals, the ML-BiGAMP is obtained. The detailed derivation of ML-BiGAMP is presented in Appendix \ref{Appendix:A}.
\\
\indent  Compared to BiG-AMP \cite{parker2014bilinear}, a major difference in our derivation of the ML-BiGAMP is that the factor node $\mathcal{P}(\bs{Y}|\bs{Z})$ in BiG-AMP only connects $\bs{X}$ and $\bs{H}$, which are all variable nodes of current layer, but in a multi-layer setup, it is generalized as $\mathcal{P}(\bs{X}^{(\ell+1)}|\bs{Z}^{(\ell)})$, which is a \textit{junction node} that connects  not only $\bs{X}^{(\ell)}$ and $\bs{H}^{(\ell)}$ in current layer, but also $\bs{X}^{(\ell+1)}$ in next layer. As a result, message update in multi-layer setup is more complex.  For example, as shown in Fig.~\ref{Fig.FG1-2layer}, the message (blue arrow) from factor node $\mathcal{P}(x_{mk}^{(2)}|z_{mk}^{(1)})$ to variable node $x_{nk}^{(1)}$ in Fig.~\ref{Fig.FG1-2layer} (b) should be updated by combining the message (red arrow) from the 2nd layer, while the message (blue arrow) in Fig.\ref{Fig.FG1-2layer} (a) is updated without adjacent layer.
\begin{figure*}[!t]
\centering
\includegraphics[width=0.8\textwidth]{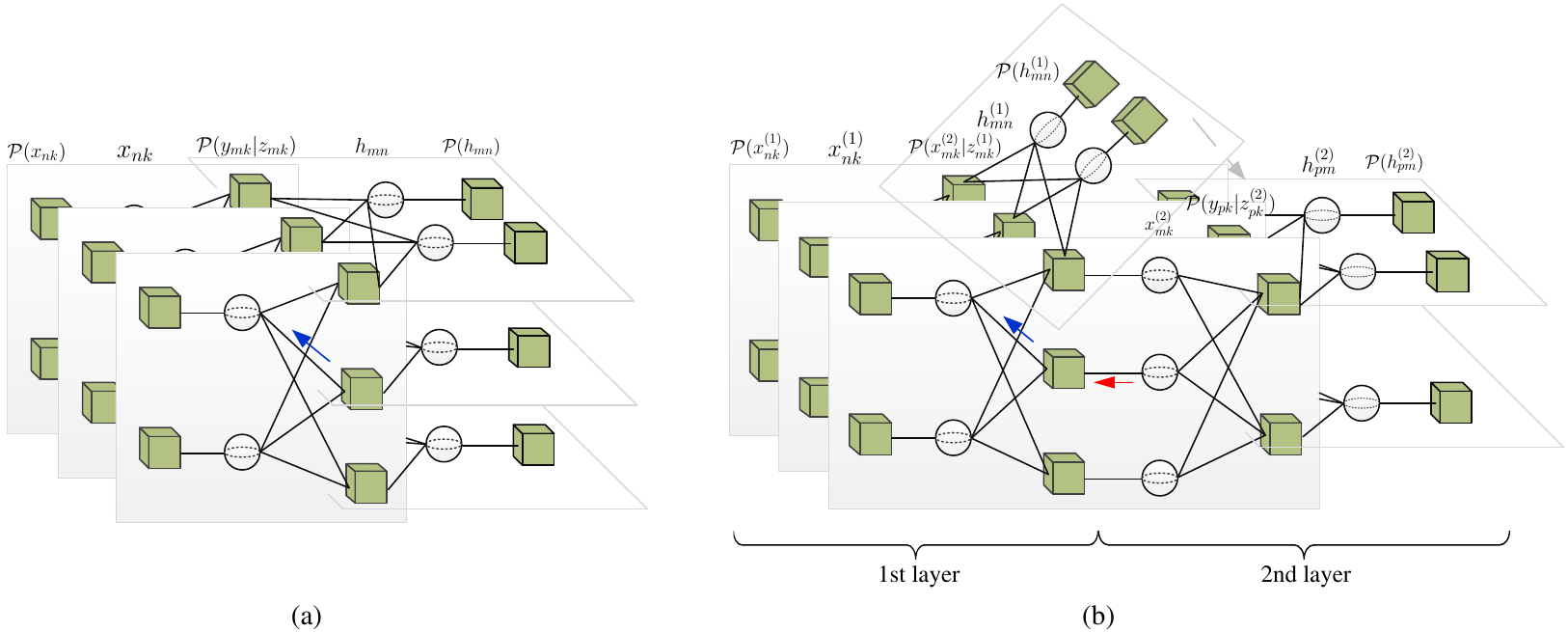}
\caption{(a) Factor graph of 1-layer model; (b) Factor graph of 2-layer model. The sphere denotes variable node while cube refers to factor node.
}
\label{Fig.FG1-2layer}
\end{figure*}

\subsection{Relation to Previous AMP-like Algorithms}
\begin{remark}
The ML-BiGAMP algorithm is a general algorithm, which degenerates smoothly to the existing AMP-like algorithms: BiG-AMP\cite{parker2014bilinear}, GAMP\cite{2010arXiv1010.5141R}, AMP\cite{2010arXiv1010.5141R}, as well as ML-AMP \cite{manoel2017multi}.
\end{remark}

 \textbf{($L=1$ and unknown $\bs{H}$)} By setting $L=1$, the ML-BiGAMP reduces to the BiG-AMP algorithm \cite[Table III]{parker2014bilinear}, where the RVs in (\ref{Equ:zeta}), (\ref{Equ:xi_x}), and (\ref{Equ:xi_h}) become
\begin{align}
\zeta_{mk}(t)&\sim \frac{ \mathcal{P}(y_{mk}|z_{mk})\mathcal{N}(z_{mk}|Z_{mk}(t),V_{mk}(t))}{\int \mathcal{P}(y_{mk}|z)\mathcal{N}(z|Z_{mk}(t),V_{mk}(t))\text{d}z},\\
\xi_{nk}^{(x)}(t+1)&\sim \frac{\mathcal{P}(x_{nk})\mathcal{N}(x_{nk}|R^{(x)}_{nk}(t),\Sigma_{nk}^{(x)}(t))}{\int \mathcal{P}(x)\mathcal{N}(x|R^{(x)}_{nk}(t),\Sigma_{nk}^{(x)}(t))\text{d}x},\\
\xi^{(h)}_{mn}(t+1)&\sim \frac{\mathcal{P}(h_{mn})\mathcal{N}(h_{mn}|R^{(h)}_{mn}(t),\Sigma_{mn}^{(h)}(t))}{\int \mathcal{P}(h)\mathcal{N}(h|R^{(h)}_{mn}(t),\Sigma_{mn}^{(h)}(t))\text{d}h}.
\end{align}

\textbf{($L=1$ and  known $\bs{H}$)} If the measurement matrix is further perfectly given, then  we have $\hat{h}_{mn}(t)=h_{mn}$ and $v^{(h)}_{mn}=0, \forall m,n$. Accordingly, the ML-BiGAMP algorithm  reduces to GAMP algorithm \cite[Algorithm 1]{2010arXiv1010.5141R} as below
\begin{subequations}
\begin{align}
\tilde{z}_{mk}(t)&=\mathbb{E}[\zeta_{mk}(t)],\\
\tilde{v}_{mk}(t)&=\text{Var}[\zeta_{mk}(t)],\\
\hat{s}_{mk}(t)&=(\tilde{z}_{mk}(t)-Z_{mk}(t))/(V_{mk}(t)),\\
v^{(s)}_{mk}(t)&=(V_{mk}(t)-\tilde{v}_{mk}(t))/((V_{mk}(t))^2),\\
\Sigma_{nk}^{(x)}(t)&= \left(\sum_{m}|h_{mn}|^2v_{mk}^{(s)}(t)\right)^{-1},\\
R_{nk}^{(x)}(t)&=\hat{x}_{nk}(t)+\Sigma_{nk}^{(x)}(t)\sum_{m}h_{mn}^*\hat{s}_{mk}(t),\\
\hat{x}_{nk}(t+1)&=\mathbb{E}[\xi_{nk}^{(x)}(t+1)],\\
v_{nk}^{(x)}(t+1)&=\text{Var}[\xi_{nk}^{(x)}(t+1)],\\
V_{mk}(t+1)&=\sum_n|h_{mn}|^2v_{nk}^{(x)}(t+1),\\
Z_{mk}(t+1)&=\sum_n h_{mn}\hat{x}_{nk}(t+1)-\hat{s}_{mk}(t)V_{mk}(t+1).
\end{align}
\end{subequations}

\textbf{($L=1$, known $\bs{H}$, and Gaussian transition)} Further, when the standard linear model is considered, where the transition distribution becomes $\mathcal{P}(y_{mk}|z_{mk})=\mathcal{N}(z_{mk}|y_{mk},\sigma_w^2)$, the ML-BiGAMP degenerates to the AMP algorithm \cite{donoho2009message}, where
\begin{align}
\Sigma_{nk}^{(x)}(t)&=\left(\sum_{m}\frac{|h_{mn}|^2}{\sigma_w^2+V_{mk}(t)}\right)^{-1},\\
R_{nk}^{(x)}(t)&=\hat{x}_{nk}(t)+\Sigma_{nk}^{(x)}(t)\sum_{m}\frac{h_{mn}^{*}(y_{mk}-Z_{mk}(t))}{\sigma_w^2+V_{mk}(t)}.
\end{align}

\textbf{($L\geq 1$ and known $\bs{H}$)} Besides, the ML-BiGAMP algorithm can also recover the ML-AMP algorithm \cite[(5)]{manoel2017multi}. For the case of $L\geq 1$ and known measurement matrix, we have $\hat{h}_{mn}^{(\ell)}(t)=h_{mn}$ and $v^{(h,\ell)}_{mn}=0, \forall m,n,\ell$. In the sequel, the ML-AMP algorithm can be obtained by plugging $\hat{h}^{(\ell)}_{mn}(t)=h_{mn}^{(\ell)}$ and $v_{mn}^{(h,\ell)}=0$ into ML-BiGAMP algorithm.

\subsection{Computational Complexity}
We now look at the ML-BiGAMP's computational complexity. As shown in Algorithm~\ref{alg:ML-BiGAMP}, the ML-BiGAMP algorithm involves two directions: reverse  and forward direction. Furthermore, there are linear steps and non-linear steps in both the forward and reverse directions.
\begin{itemize}
\item
The non-linear steps of the reverse direction refer to (R1)-(R2) in Algorithm \ref{alg:ML-BiGAMP}. The computation of the parameters $(\tilde{z}_{mk}^{(\ell)},\tilde{v}_{mk}^{(\ell)})$ does not change with the dimension.
\item
The linear steps of the reverse direction refer to (R3)-(R8) and their computational cost is dominated by the componentwise squares of $\hat{\bs{X}}^{(\ell)}$ in (R5) and $\hat{\bs{H}}^{(\ell)}$ in (R7).  The computational cost of the linear steps is $\mathcal{O}(N_{\ell+1}N_{\ell}K)$. As a result,  the total computational cost of reverse direction is $\mathcal{O}(N_{\ell+1}N_{\ell}K)$.
\item
Similarly, the computational cost of non-linear steps (R9)-(R12) in forward direction is $\mathcal{O}(N_{\ell}K)$.
Furthermore, the computational cost of linear steps of forward direction is dominated by componentwise squares of $\hat{\bs{H}}^{(\ell)}$ and $\hat{\bs{X}}^{(\ell)}$ in (R13), which is $\mathcal{O}(N_{\ell+1}N_{\ell}K)$.
\end{itemize}
Hence, the total computational cost of ML-BiGAMP is $\mathcal{O}(N_{\ell+1}N_{\ell}KLT)$ with $L$ and T being the number of layers and iteration numbers, respectively. By considering $K$ and $N_{\ell}$ with the same order and large system limit, the complexity of ML-BiGAMP is $\mathcal{O}(N_{\ell}^3)$, which is the same as BiG-AMP \cite{parker2014bilinear} and far less than ML-Mat-VAMP \cite{pandit2020inference} with $\mathcal{O}(N_{\ell}^4)$. Meanwhile, similar to BiG-AMP, the proposed ML-BiGAMP algorithm reduces the vector operation to a sequence of linear transforms and scalar estimation functions.

\subsection{Damping}
In practical applications, similar to other members in the AMP family, damping is applied to ensure convergence of the proposed ML-BiGAMP algorithm. Let $\beta(t)\in (0,1]$ denote the damping factor, then the following low-passed-filter damping is applied to the parameters $\hat{s}_{mk}^{(\ell)}(t), v_{mk}^{(s,\ell)}(t)$, $ \Sigma_{nk}^{(x,\ell)}(t), \Sigma_{mn}^{(h,\ell)}(t), \hat{x}_{nk}^{(\ell)}(t), \hat{h}_{mn}^{(\ell)}(t)$, and $V_{mk}^{(\ell)}(t)$:
\begin{align}
a(t) = \beta(t) a(t) + [1-\beta(t)] a(t-1),
\end{align}
where $a(t)$ is the parameter at $t$-iteration.
In particular, we used $\beta(t)=0.7$ for our multi-layer JCD simulations. In the case of single-layer with known measurement matrix, we only apply damping factor $\beta(t)=0.95$ to the parameters $V_{mk}^{(\ell)}(t)$ and $Z_{mk}^{(\ell)}(t)$.

\begin{algorithm}[!t]
\caption{State Evolution of ML-BiGAMP}
\label{alg:SE}
{
\begingroup
\textbf{Output:} $\textsf{mse}_{x}^{(\ell)}=\chi_x^{(\ell)}-q_x^{(\ell)}$, $\textsf{mse}_h^{(\ell)}=\chi_h^{(\ell)}-q_h^{(\ell)}$.\\
\For{$\ell=1,\cdots,L$}
{
    \begin{align*}
    \chi_x^{(\ell)}&=\begin{cases}
    \int x^2\mathcal{P}(x)\text{d}x   &\ell=1\\
    \int (x^{(\ell)})^2\mathcal{P}(x^{(\ell)}|z^{(\ell-1)})\mathcal{N}(z^{(\ell-1)}|0,\chi_z^{(\ell-1)})\text{d}z^{(\ell-1)}\text{d}x^{(\ell)} &\ell>1
    \end{cases}\\
    \chi_h^{(\ell)}&=\int (h^{(\ell)})^2\mathcal{P}(h^{(\ell)})\text{d}h^{(\ell)}\\
    \chi_z^{(\ell)}&=N_{\ell}\chi_x^{(\ell)}\chi_h^{(\ell)}
    \end{align*}
}
\For{$\ell=L,\cdots,1$}
{
\begin{align*}
q_z^{(\ell)}
&=\begin{cases}
q_z^{(L)}=\int \frac{\left[\int z^{(L)} \mathcal{P}(y|z^{(L)})\mathcal{N}(z^{(L)}|\sqrt{\chi_z^{(L)}-V^{(L)}}\xi,V^{(L)})\text{d}z^{(L)}\right]^2}
{\int  \mathcal{P}(y|z^{(L)})\mathcal{N}(z^{(L)}|\sqrt{\chi_z^{(L)}-V^{(L)}}\xi,V^{(L)})\text{d}z^{(L)}}\text{D}\xi\text{d}y
&\ell=L\\
q_z^{(\ell)}=\int \frac{\left[\int z^{(\ell)} \mathcal{N}_{x|z}^{(\ell)}(\sqrt{\chi_z^{(\ell)}-V^{(\ell)}}\xi,V^{(\ell)},\zeta,\Sigma^{(x,\ell+1)})\text{d}x^{(\ell+1)}\text{d}z^{(\ell)}\right]^2}{\int \mathcal{N}_{x|z}^{(\ell)}(\sqrt{\chi_z^{(\ell)}-V^{(\ell)}}\xi,V^{(\ell)},\zeta,\Sigma^{(x,\ell+1)})\text{d}x^{(\ell+1)}\text{d}z^{(\ell)}}\text{D}\xi\text{d}\zeta
 &\ell<L
\end{cases}\\
\Sigma^{(x,\ell)}
&=\frac{N_{\ell}(\chi_x^{(\ell)}\chi_h^{(\ell)}-q_x^{(\ell)}q_h^{(\ell)})^2}{\beta_{\ell}q_h^{(\ell)}(q_z^{(\ell)}-N_{\ell}q_x^{(\ell)}q_h^{(\ell)})}\\
\Sigma^{(h,\ell)}
&=\frac{\alpha \prod_{l=1}^{\ell-1}\beta_{l}N_{\ell}(\chi_x^{(\ell)}\chi_h^{(\ell)}-q_x^{(\ell)}q_h^{(\ell)})^2}{q_x^{(\ell)}(q_z^{(\ell)}-N_{\ell}q_x^{(\ell)}q_h^{(\ell)})}
\end{align*}
}
\For{$\ell=1,\cdots,L$}
{
  \begin{align*}
  q_x^{(\ell)}
  &=\begin{cases}
  \int \frac{\left[\int x\mathcal{P}(x)\mathcal{N}(x|\zeta,\Sigma^{(x,\ell)})\text{d}x\right]}
  {\int \mathcal{P}(x)\mathcal{N}(x|\zeta,\Sigma^{(x,\ell)})\text{d}x}\text{d}\zeta &\ell=1\\
  \int \frac{\left[\int x^{(\ell)} \mathcal{N}_{x|z}^{(\ell-1)}(\sqrt{\chi_z^{(\ell-1)}-V^{(\ell-1)}}\xi,V^{(\ell-1)},\zeta,\Sigma^{(x,\ell)})\text{d}x^{(\ell)}\text{d}z^{(\ell-1)}\right]^2}{\int \mathcal{N}_{x|z}^{(\ell-1)}(\sqrt{\chi_z^{(\ell-1)}-V^{(\ell-1)}}\xi,V^{(\ell-1)},\zeta,\Sigma^{(x,\ell)})\text{d}x^{(\ell)}\text{d}z^{(\ell-1)}}\text{D}\xi\text{d}\zeta &\ell>1
  \end{cases}\\
  q_h^{(\ell)}&=\int \frac{\left[\int h^{(\ell)}\mathcal{P}(h^{(\ell)})\mathcal{N}(h^{(\ell)}|\zeta,\Sigma^{(h,\ell)})\text{d}h^{(\ell)}\right]^2}
  {\int \mathcal{P}(h^{(\ell)})\mathcal{N}(h^{(\ell)}|\zeta,\Sigma^{(h,\ell)})\text{d}h^{(\ell)}}\text{d}\zeta\\
  V^{(\ell)}&=N_{\ell}(\chi_h^{(\ell)}\chi_x^{(\ell)}-q_h^{(\ell)}q_x^{(\ell)})
  \end{align*}
}
\endgroup
}
\end{algorithm}

\section{State Evolution}
In this section, we present the state evolution (SE) analysis for the ML-BiGAMP algorithm, which illustrates that the asymptotic MSE performance of the ML-BiGAMP algorithm can be fully characterized via a set of simple one-dimensional equations under the large system limit. Previous work pertaining to SE analysis for AMP-like algorithms was found in \cite{bayati2011dynamics}, in which the SE was mathematically rigorous.
In our derivation of SE analysis, we use some concepts (Definition 1, 2, and Assumption 1) from \cite{2010arXiv1010.5141R}. However, the SE of ML-BiGAMP presented here is different from SE of \cite{2010arXiv1010.5141R} in the following aspects. Firstly, \cite{2010arXiv1010.5141R} considered single layer with known measurement matrix, while we consider the multi-layer bilinear generalized model. Secondly, we give the detailed SE derivation (although heuristic) that features a special treatment towards the marginal density function: they are interpreted as the transitional probability of a Markov chain, c.f. (100) in the Appendix B, while the details of SE's proof of \cite{2010arXiv1010.5141R} are omitted. The SE analysis is extracted from the practical algorithm after averaging the observed signal and measurement matrices.  It is worthy of noting that the analysis is based on \textit{large system limit}, that is, when $\forall \ell, N_{\ell}, K\rightarrow \infty$ but the ratios
\begin{align}
\frac{N_1}{K}=\alpha, \ \frac{N_{\ell+1}}{N_{\ell}}=\beta_{\ell},
\end{align}
are fixed and finite.

\begin{proposition}
\label{Pro_SE}
In large system limit, by averaging the observation, the asymptotic MSE performance of the ML-BiGAMP algorithm can be fully characterized by a set of scalar equations termed state evolution shown in Algorithm \ref{alg:SE}.
\end{proposition}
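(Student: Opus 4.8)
The plan is to follow the large-system cavity/density-evolution analysis of \cite{Rangan2012RM} and \cite{manoel2017multi}, tracking a small set of deterministic scalar order parameters rather than the full high-dimensional state. First I would introduce, for each layer $\ell$, the macroscopic overlaps and second moments
\[
q_x^{(\ell)}=\lim \frac{1}{N_\ell K}\sum_{n,k}\hat{x}_{nk}^{(\ell)} x_{nk}^{(\ell)}, \quad \chi_x^{(\ell)}=\lim \frac{1}{N_\ell K}\sum_{n,k}(x_{nk}^{(\ell)})^2,
\]
and the analogous quantities $q_h^{(\ell)},\chi_h^{(\ell)}$ for the measurement matrices and $q_z^{(\ell)},\chi_z^{(\ell)}$ for the linear-mixing outputs $z_{mk}^{(\ell)}=\sum_n h^{(\ell)}_{mn}x^{(\ell)}_{nk}$. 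Under the \emph{i.i.d.} priors $\mathcal{P}_{\textsf{X}}$ and $\mathcal{P}_{\textsf{H}^{(\ell)}}$ these limits exist and, by self-averaging, are deterministic; the algorithm's MSE is then read off directly as $\textsf{MSE}_X^{(\ell)}=\chi_x^{(\ell)}-q_x^{(\ell)}$ and $\textsf{MSE}_H^{(\ell)}=\chi_h^{(\ell)}-q_h^{(\ell)}$, which is exactly the output of Algorithm~\ref{alg:SE}.

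The central analytical step is to establish the \emph{asymptotic Gaussianity} of the cavity variables. Because the entries $h_{mn}^{(\ell)}$ are \emph{i.i.d.} with finite moments, the inner sums appearing in (R6), (R8) and (R14)--(R16) are sums of many weakly dependent terms; applying a central-limit argument I would show that, conditioned on the order parameters, the quantities $R_{nk}^{(x,\ell)}$, $R_{mn}^{(h,\ell)}$ and $Z_{mk}^{(\ell)}$ converge to Gaussian random variables whose means are proportional to the true $x$, $h$, $z$ and whose variances are the deterministic $\Sigma^{(x,\ell)}$, $\Sigma^{(h,\ell)}$, $V^{(\ell)}$. Simultaneously the variance-type updates (R5), (R7), (R13) and (R15) concentrate on their expectations by a law-of-large-numbers argument, which is what collapses the $N_\ell$-fold sums into the closed forms $\Sigma^{(x,\ell)}$, $\Sigma^{(h,\ell)}$ and $V^{(\ell)}=N_\ell(\chi_h^{(\ell)}\chi_x^{(\ell)}-q_h^{(\ell)}q_x^{(\ell)})$ listed in Algorithm~\ref{alg:SE}. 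In effect each scalar estimator in Modules~A and~B sees an effective additive white Gaussian noise channel, i.e.\ the decoupling principle announced in the introduction.

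With the effective channels in hand I would close the recursion by feeding these Gaussian cavity variables through the scalar denoisers (R1)--(R2) and (R9)--(R12) and taking expectations over both the effective noise and the true priors. The posterior-mean and posterior-variance functions defined by the kernels $\mathcal{N}_{x|z}^{(\ell)}$ then yield precisely the integral expressions for $q_z^{(\ell)}$, $q_x^{(\ell)}$ and $q_h^{(\ell)}$ in Algorithm~\ref{alg:SE}, while the prior second moments propagated through the transition kernels give $\chi_x^{(\ell)}$, $\chi_h^{(\ell)}$ and $\chi_z^{(\ell)}=N_\ell\chi_x^{(\ell)}\chi_h^{(\ell)}$. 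The three \texttt{for}-loops of Algorithm~\ref{alg:SE} are then identified as the forward propagation of the second moments, the backward propagation of the effective noises through $q_z^{(\ell)}$, and the forward propagation of the overlaps, matching one-to-one the reverse/forward schedule of Algorithm~\ref{alg:ML-BiGAMP}.

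The hard part will be the bilinear coupling: unlike standard AMP the measurement matrix is itself estimated, so the errors of $\bs{H}^{(\ell)}$ and $\bs{X}^{(\ell)}$ are correlated and naively inject cross terms into the cavity means. The role of the Onsager corrections in (R6), (R8) and (R16) is precisely to cancel these cross terms to leading order, and the main technical effort is to verify that after cancellation the residuals stay asymptotically independent of the quantity being denoised, so the Gaussian-channel picture is self-consistent. A second, genuinely new difficulty relative to BiG-AMP \cite{parker2014bilinear} is that in the middle layers the effective prior $\mathcal{N}(z^{(\ell)}|Z^{(\ell)},V^{(\ell)})$ and likelihood $\mathcal{N}(x^{(\ell+1)}|R^{(x,\ell+1)},\Sigma^{(x,\ell+1)})$ are iteration-dependent rather than fixed a priori; I would handle this by coupling the backward pass for $\{q_z^{(\ell)},\Sigma^{(x,\ell)},\Sigma^{(h,\ell)}\}$ with the forward pass for $\{\chi^{(\ell)}\}$ and $\{q_x^{(\ell)},q_h^{(\ell)}\}$ in exactly the order dictated by Algorithm~\ref{alg:SE}, so that each layer's effective channel is already well defined at the moment it is used.
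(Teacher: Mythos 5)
Your proposal is sound and arrives at the same SE equations, but it is pitched one level of rigor above what the paper actually carries out, and the two routes place the work differently. The paper's proof (Appendix~\ref{Sec:SE}) first replaces all element-wise variances by their empirical averages to obtain a scalar-variance ML-BiGAMP --- your law-of-large-numbers concentration step, performed explicitly on the algorithm --- and then, instead of \emph{proving} Gaussianity of the cavity fields, it \emph{assumes} the PL(2) empirical convergence of $\{Z_{mk}^{(\ell)}, z_{mk}^{(\ell)}, R_{nk}^{(x,\ell)}, x_{nk}^{(\ell)}, R_{mn}^{(h,\ell)}, h_{mn}^{(\ell)}\}$ to limiting random variables in (\ref{Equ:em_con}) and posits the Markov factorization (\ref{Equ:JointZzXR}) with Gaussian kernels $\mathcal{N}(z^{(\ell)}|Z^{(\ell)},V^{(\ell)})$ and $\mathcal{N}(x^{(\ell+1)}|R^{(x,\ell+1)},\Sigma^{(x,\ell+1)})$; a CLT treats $z^{(\ell)}$ as $\mathcal{N}(0,N_{\ell}\chi_h^{(\ell)}\chi_x^{(\ell)})$, solving the consistency condition (\ref{Equ:cov_pZ}) gives $\mathcal{P}(Z^{(\ell)})=\mathcal{N}(Z^{(\ell)}|0,\chi_z^{(\ell)}-V^{(\ell)})$, and the $q$--$\chi$ integrals of Algorithm~\ref{alg:SE} then follow by direct averaging, exactly as in your third paragraph. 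The conditional-CLT-plus-Onsager-cancellation program you flag as the hard part is precisely what these assumptions encode and what the paper never verifies --- the only cross-term analysis it performs is Appendix~\ref{Sec:Ignored}, which justifies dropping the $\sum_m v_{mn}^{(h,\ell)}(|\hat{s}_{mk}^{(\ell)}|^2-v_{mk}^{(s,\ell)})$ terms --- so if you actually carried out that verification in the style of \cite{bayati2011dynamics} you would obtain a strictly stronger result than Proposition~\ref{Pro_SE}, which is claimed only at the heuristic level of \cite{Rangan2012RM} and \cite{manoel2017multi}. Two points to repair in your write-up: the paper's SE defines $q_x^{(\ell)}$ as the mean-square posterior mean $\mathbb{E}\{(\mathbb{E}\{x|\cdot\})^2\}$, not your overlap $\lim\frac{1}{N_{\ell}K}\sum_{n,k}\hat{x}_{nk}^{(\ell)}x_{nk}^{(\ell)}$; these coincide only because the denoisers are Bayes-optimal with matched priors (a Nishimori-type identity), which you should state rather than leave implicit. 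And your ``read off the MSE as $\chi-q$'' step requires the intermediate claim, isolated as Proposition~4 inside the paper's proof, that the empirical squared error converges to the average posterior variance $\overline{v^{(x,\ell)}(t)}$ --- this is the one place where the pseudo-Lipschitz order-2 convergence is genuinely used, and your outline should invoke it explicitly.
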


\proof: See Appendix \ref{Appendix:B}.

\begin{remark}
Given the SE of ML-BiGAMP, we can recover the SEs of BiG-AMP\cite{parker2014bilinear}, GAMP \cite{2010arXiv1010.5141R}, AMP \cite{donoho2009message}, and ML-GAMP\cite{zou2020estimation}.
\end{remark}

\textbf{($L=1$ and unknown $\bs{H}$)} By setting $L=1$ and using $\beta=\beta_1$, $N=N_1$, and $M=N_2$, the SE equations of the ML-BiGAMP become
\begin{subequations}
\begin{align}
q_z&=\int \frac{\left[\int z\mathcal{P}(y|z)\mathcal{N}(z|\sqrt{Nq_xq_h}\xi,V)\text{d}z\right]^2}{\int \mathcal{P}(y|z)\mathcal{N}(z|\sqrt{Nq_xq_h}\xi,V)\text{d}z}\text{D}\xi\text{d}y,\\
\Sigma^{(x)}&=\frac{N(\chi_x\chi_h-q_xq_h)^2}{\beta q_h(q_z-Nq_xq_h)},\\
\Sigma^{(h)}&=\frac{\alpha N(\chi_x\chi_h-q_xq_h)^2}{q_x(q_z-Nq_xq_h)},\\
q_x&=\int \frac{\left[\int x\mathcal{P}(x)\mathcal{N}(x|\zeta,\Sigma^{(x)})\text{d}x\right]^2}{\int \mathcal{P}(x)\mathcal{N}(x|\zeta,\Sigma^{(x)})\text{d}x}\text{d}\zeta,\\
q_h&=\int \frac{\left[\int h\mathcal{P}(h)\mathcal{N}(h|\zeta,\Sigma^{(h)})\text{d}h\right]^2}{\int \mathcal{P}(h)\mathcal{N}(h|\zeta,\Sigma^{(h)})\text{d}h}\text{d}\zeta,\\
V&=N(\chi_x\chi_h-q_xq_h),
\end{align}
\end{subequations}
where $\chi_x=\int x^2\mathcal{P}(x)$, $\chi_h=\int h^2\mathcal{P}(h)\text{d}h$, and $\text{D}\xi=\mathcal{N}(\xi|0,1)\text{d}\xi$.

\textbf{($L=1$ and known $\bs{H}$)} If the measurement matrix $\bs{H}$ is perfectly given, we then have $\chi_h=q_h$, i.e., $\textsf{mse}_h=0$. Considering $h_{mn}^2$ with order $\mathcal{O}(\frac{1}{N_2})$, the following can be obtained
\begin{subequations}
\label{Equ:GLM_SE}
\begin{align}
\!\!\!\!q_z&=\int \frac{\left[\int z\mathcal{P}(y|z)\mathcal{N}\left(z|\sqrt{\frac{q_x}{\beta}}\xi,\frac{\chi_x-q_x}{\beta}\right)\text{d}z\right]^2}{\int \mathcal{P}(y|z)\mathcal{N}\left(z|\sqrt{\frac{q_x}{\beta}}\xi,\frac{\chi_x-q_x}{\beta}\right)\text{d}z}\text{D}\xi\text{d}y,\\
\!\!\!\!\Sigma^{(x)}&=\frac{(\chi_x-q_x)^2}{\beta(\beta q_z-q_x)},\\
\!\!\!\!q_x&=\int \frac{\left[\int x\mathcal{P}(x)\mathcal{N}(x|\zeta,\Sigma^{(x)})\text{d}x\right]^2}{\int \mathcal{P}(x)\mathcal{N}(x|\zeta,\Sigma^{(x)})\text{d}x}\text{d}\zeta.
\end{align}
\end{subequations}

\textbf{($L=1$, known $\bs{H}$, and Gaussian transition)} When we further consider the Gaussian transition distribution, i.e.,  $\mathcal{P}(y_{mk}|z_{mk})=\mathcal{N}(z_{mk}|y_{mk},\sigma_w^2)$, by Gaussian reproduction property\footnote{$\mathcal{N}(x|a,A)\mathcal{N}(x|b,B)=\mathcal{N}(0|a-b, A+B) \mathcal{N}(x|c,C)$ with $C=(A^{-1}+B^{-1})^{-1}$ and $c=C\cdot(\frac{a}{A}+\frac{b}{B})$.} and $\textsf{mse}_x=\chi_x-q_x$ the SE of ML-BiGAMP becomes
\begin{align}
\Sigma^{(x)}=\sigma_w^2+\frac{1}{\beta}\textsf{mse}_x(\Sigma^{(x)}).
\end{align}
It is found that the SE of the ML-BiGAMP algorithm in standard linear model setting is precisely equal to the SE of AMP \cite{bayati2011dynamics,guo2005randomly}.

\textbf{($L\geq 1$ and known $\bs{H}$)} If we consider the case of $L\geq1$ and known measurement matrices $\{\bs{H}^{(\ell)}\}_{\ell=1}^L$, then SE of ML-BiGAMP algorithm degenerates into the previous SE of ML-GAMP algorithm \cite{zou2020estimation} (including ML-AMP \cite{manoel2017multi} as its special case).

\section{Relation to Exact MMSE estimator}
The proposed algorithm is derived from the sum-product LBP followed by AMP approximation, and it is well-known that the sum-product LBP generally provides a good approximation to MMSE estimator \cite{winn2005variational}. The MMSE estimator is known as Bayes-optimal in MSE sense but is infeasible in practice due to multiple integrals. In this section, we establish that the asymptotic MSE predicted by ML-BiGAMP'SE agrees perfectly with the MMSE estimator predicted by replica method. The key strategy of analyzing MSE of MMSE estimator is through averaging free energy
\begin{align}
\mathcal{F}=\lim_{N_{1}\rightarrow \infty}\frac{1}{N_1^2}\mathbb{E}_{\bs{Y}}\{\log \mathcal{P}(\bs{Y})\},
\label{Equ:FE}
\end{align}
where $\mathcal{P}(\bs{Y})$ is partition function. The analysis is based on large system limit and we simply apply $N_1\rightarrow \infty$ to denote the large system limit. Actually, even in large system limit the computation of (\ref{Equ:FE}) is difficult due to the expectation of the logarithm of $\mathcal{P}(\bs{Y})$. Using the note\footnote{The following formula is applied from right to left
$$
\lim_{\tau\rightarrow 0}\frac{\partial }{\partial \tau}\log \mathbb{E}\{\Theta^{\tau}\}=\lim_{\tau\rightarrow 0}\frac{\mathbb{E}\{\Theta^{\tau}\log \Theta\}}{\mathbb{E}\{\Theta^{\tau}\}}=\mathbb{E}\{\log \Theta \},
$$
where $\Theta$ is any positive random variable.
}, it can be facilitated by rewriting $\mathcal{F}$ as
\begin{align}
\mathcal{F}=\lim_{N_1\rightarrow \infty}\frac{1}{N_1^2}\lim_{\tau\rightarrow 0}\frac{\partial }{\partial \tau}\log \mathbb{E}_{\bs{Y}}\{\mathcal{P}^{\tau}(\bs{Y})\}.
\end{align}

To ease the statement, we firstly calculate the free energy considering a representative two-layer model, and it leads to the saddle point equations. By replica symmetry assumption, the fixed point equations can be obtained by solving the saddle point equations. Finally, we extend the results of the two-layer model into multi-layer regime with similar procedures where the Proposition \ref{Pro_De} and Proposition \ref{Pro_Optimal} can be obtained.


\subsection{Performance Analysis}
\begin{proposition}[Decoupling principle]
\label{Pro_De}
In large system limit, by replica method, the input output of the multi-layer generalized bilinear model is decoupled into a bank of scalar AWGN channels  w.r.t. the input signal $\bs{X}$ and measurement matrices $\{\bs{H}^{(\ell)}\}_{\ell=1}^L$
\begin{align}
y_{x}&=x+w_{x},
\label{SE_x}\\
y_{h^{(\ell)}}&=h^{(\ell)}+w_{h^{(\ell)}},
\label{SE_h}
\end{align}
where $w_{x}\sim \mathcal{N}(w_x|0,(2\hat{q}_x)^{-1})$, $x\sim \mathcal{P}(x)$, $h^{(\ell)}\sim \mathcal{P}(h^{(\ell)})$, and $w_{h^{(\ell)}}\sim \mathcal{N}(w_{h^{(\ell)}}|0, (2\hat{q}_h^{(\ell)})^{-1})$. The parameters $\hat{q}_x$ and $\hat{q}_h^{(\ell)}$ are from the fixed point equations in (\ref{TSE_qh})-(\ref{TSE_qx}) of the exact MMSE estimator, for $\ell=1,\cdots, L$,
\begin{subequations}
\begin{align}
\chi_h^{(\ell)}&=\int (h^{(\ell)})^2\mathcal{P}(h^{(\ell)}){\mathrm{d}}h^{(\ell)},
\label{TSE_qh}\\
\hat{q}_x^{(\ell)}&=\frac{\beta_{\ell}q_h^{(\ell)}}{2}\frac{q_z^{(\ell)}-N_{\ell}q_x^{(\ell)}q_h^{(\ell)}}{N_{\ell}(\chi_x^{(\ell)}\chi_h^{(\ell)}-q_x^{(\ell)}q_h^{(\ell)})^2},\\
\hat{q}_h^{(\ell)}&=\frac{q_x^{(\ell)}}{2\alpha\prod_{l=1}^{\ell-1}\beta_{l}}
\frac{q_z^{(\ell)}-N_{\ell}q_x^{(\ell)}q_h^{(\ell)}}{N_{\ell}(\chi_x^{(\ell)}\chi_h^{(\ell)}-q_x^{(\ell)}q_h^{(\ell)})^2},\\
q_h^{(\ell)}
&=\int \frac{[\int h^{(\ell)}\mathcal{P}(h^{(\ell)})\mathcal{N}(h^{(\ell)}|\zeta,\frac{1}{2\hat{q}_h^{(\ell)}}){\rm{d}}h^{(\ell)}]^2}
{\int \mathcal{P}(h^{(\ell)})\mathcal{N}(h^{(\ell)}|\zeta,\frac{1}{2\hat{q}_h^{(\ell)}}){\rm{d}}h^{(\ell)}}{\rm{d}}\zeta\\
\chi_x^{(\ell)}&=
\begin{cases}
\int x^2\mathcal{P}(x){\rm{d}}x &\ell=1\\
\chi_x^{(\ell)}=\int (x^{(\ell)})^2\mathcal{P}(x^{(\ell)}|z^{(\ell-1)})\mathcal{N}(z^{(\ell-1)}|0, \chi_z^{\ell-1}){\rm{d}}z^{(\ell-1)}{\rm{d}}x^{(\ell)} &\ell>1
\end{cases}\\
q_z^{(\ell)}&=
\begin{cases}
\int\frac{\left[\int z^{(\ell)}\mathcal{P}(y|z^{(\ell)})\mathcal{N}\left(z^{(\ell)}|\sqrt{N_{\ell}q_x^{(\ell)}q_h^{(\ell)}}\xi,N_{\ell}(\chi_x^{(\ell)}\chi_h^{(\ell)}-q_x^{(\ell)}q_h^{(\ell)})\right){\rm{d}}z^{(\ell)}\right]^2}
{\int \mathcal{P}(y|z^{(\ell)})\mathcal{N}\left(z^{(\ell)}|\sqrt{N_{\ell}q_x^{(\ell)}q_h^{(\ell)}}\xi,N_{\ell}(\chi_x^{(\ell)}\chi_h^{(\ell)}-q_x^{(\ell)}q_h^{(\ell)})\right){\rm{d}}z^{(\ell)}}{\rm{D}}\xi{\rm{d}}y & \ell=L\\
\int \frac{\left[\int z^{(\ell)}\mathcal{N}_{x|z}^{(\ell)}\left(\sqrt{N_{\ell}q_h^{(\ell)}q_x^{(\ell)}}\xi,N_{\ell}(\chi_h^{(\ell)}\chi_x^{(\ell)}-q_h^{(\ell)}q_x^{(\ell)}),\zeta,\frac{1}{2\hat{q}_x^{(\ell+1)}}\right){\rm{d}}x^{(\ell+1)}{\rm{d}}z^{(\ell)}\right]^2}
{\int \mathcal{N}_{x|z}^{(\ell)}\left(\sqrt{N_{\ell}q_h^{(\ell)}q_x^{(\ell)}}\xi,N_{\ell}(\chi_h^{(\ell)}\chi_x^{(\ell)}-q_h^{(\ell)}q_x^{(\ell)}),\zeta,\frac{1}{2\hat{q}_x^{(\ell+1)}}\right){\rm{d}}x^{(\ell+1)}{\rm{d}}z^{(\ell)}}{\rm{D}}\xi{\rm{d}}\zeta
&\ell<L
\end{cases}\\
q_x^{(\ell)}&=
\begin{cases}
\int \frac{\left[\int x\mathcal{P}(x)\mathcal{N}(x|\zeta,\frac{1}{2\hat{q}_x^{(1)}}){\rm{d}}x\right]^2}
{\int \mathcal{P}(x) \mathcal{N}(x|\zeta,\frac{1}{2\hat{q}_x^{(1)}}){\rm{d}}x}{\rm{d}}\zeta  &\ell=1\\
\int \frac{\left[\int x^{(\ell)}\mathcal{N}_{x|z}^{(\ell-1)}(\sqrt{N_{\ell-1}q_h^{(\ell-1)}q_x^{(\ell-1)}}\xi,N_{\ell-1}(\chi_h^{(\ell-1)}\chi_x^{(\ell-1)}-q_h^{(\ell-1)}q_x^{(\ell-1)}),\zeta,\frac{1}{2\hat{q}_x^{(\ell)}}){\rm{d}}z^{(\ell-1)}{\rm{d}}x^{(\ell)}\right]^2}
{\int \mathcal{N}_{x|z}^{(\ell-1)}(\sqrt{N_{\ell-1}q_h^{(\ell-1)}q_x^{(\ell-1)}}\xi,N_{\ell-1}(\chi_h^{(\ell-1)}\chi_x^{(\ell-1)}-q_h^{(\ell-1)}q_x^{(\ell-1)}),\zeta,\frac{1}{2\hat{q}_x^{(\ell)}}){\rm{d}}z^{(\ell-1)}{\rm{d}}x^{(\ell)}}{\rm{D}}\xi{\rm{d}}\zeta
&\ell>1
\end{cases}
\label{TSE_qx}
\end{align}
\end{subequations}
\end{proposition}
\textit{Proof}: See Appendix \ref{Appendix:C}.

\begin{proposition}[Optimality]
\label{Pro_Optimal}
In large system limit, the SE equations of the proposed ML-BiGAMP algorithm depicted in Algorithm \ref{alg:SE} match perfectly the fixed point equations in (\ref{TSE_qh})-(\ref{TSE_qx}) of the exact MMSE estimator as predicted by replica method under setting
\begin{align}
\Sigma^{(x,\ell)}= \frac{1}{2\hat{q}_x^{(\ell)}},\quad \Sigma^{(h,\ell)}= \frac{1}{2\hat{q}_h^{(\ell)}}.
\end{align}
\end{proposition}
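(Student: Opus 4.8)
The plan is to prove Proposition~\ref{Pro_Optimal} not by any new asymptotic argument but by a direct term-by-term identification of the two fixed-point systems, since the state-evolution recursion of Algorithm~\ref{alg:SE} and the replica fixed-point equations (\ref{TSE_qh})--(\ref{TSE_qx}) are assembled from the \emph{same} scalar MMSE and variance integrals and share the same ``forward'' order parameters $\chi_x^{(\ell)}$, $\chi_h^{(\ell)}$, and $\chi_z^{(\ell)}=N_\ell\chi_x^{(\ell)}\chi_h^{(\ell)}$. It therefore suffices to exhibit a dictionary between the effective noise variances of the two systems and to check that, under this dictionary, every remaining equation coincides. The claimed dictionary is precisely $\Sigma^{(x,\ell)}=1/(2\hat q_x^{(\ell)})$ and $\Sigma^{(h,\ell)}=1/(2\hat q_h^{(\ell)})$.

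First I would settle the variance bookkeeping. Substituting the defining expressions for $\hat q_x^{(\ell)}$ and $\hat q_h^{(\ell)}$ from (\ref{TSE_qh})--(\ref{TSE_qx}) and inverting yields
\begin{align}
\frac{1}{2\hat q_x^{(\ell)}}
&=\frac{N_\ell(\chi_x^{(\ell)}\chi_h^{(\ell)}-q_x^{(\ell)}q_h^{(\ell)})^2}{\beta_\ell q_h^{(\ell)}\bigl(q_z^{(\ell)}-N_\ell q_x^{(\ell)}q_h^{(\ell)}\bigr)},\\
\frac{1}{2\hat q_h^{(\ell)}}
&=\frac{\alpha\prod_{l=1}^{\ell-1}\beta_l\,N_\ell(\chi_x^{(\ell)}\chi_h^{(\ell)}-q_x^{(\ell)}q_h^{(\ell)})^2}{q_x^{(\ell)}\bigl(q_z^{(\ell)}-N_\ell q_x^{(\ell)}q_h^{(\ell)}\bigr)},
\end{align}
which are \emph{verbatim} the defining expressions for $\Sigma^{(x,\ell)}$ and $\Sigma^{(h,\ell)}$ in Algorithm~\ref{alg:SE}. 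Thus the substitution is not an extra hypothesis but the literal consequence of equating the two variance definitions, and this is the step that anchors the entire identification.

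Next I would verify the remaining Gaussian parameters. The quantity $V^{(\ell)}=N_\ell(\chi_h^{(\ell)}\chi_x^{(\ell)}-q_h^{(\ell)}q_x^{(\ell)})$ from the SE is identical to the variance appearing inside the replica expression for $q_z^{(\ell)}$, while the elementary identity $\chi_z^{(\ell)}-V^{(\ell)}=N_\ell q_x^{(\ell)}q_h^{(\ell)}$ shows that the SE mean $\sqrt{\chi_z^{(\ell)}-V^{(\ell)}}\,\xi$ equals the replica mean $\sqrt{N_\ell q_x^{(\ell)}q_h^{(\ell)}}\,\xi$. With the variances and means matched, the scalar integrals defining $q_z^{(\ell)}$, $q_x^{(\ell)}$, and $q_h^{(\ell)}$ in Algorithm~\ref{alg:SE} become character-for-character identical to those in Proposition~\ref{Pro_De}, whereas the $\chi$-equations already coincide by construction. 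I would run this comparison through both the $\ell=L$ / $\ell<L$ branches of $q_z^{(\ell)}$ and the $\ell=1$ / $\ell>1$ branches of $\chi_x^{(\ell)}$ and $q_x^{(\ell)}$ to complete the correspondence.

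The step I expect to be delicate is not any single algebraic manipulation but the \emph{global} self-consistency of the dictionary across the coupled forward/backward recursion: the SE computes the $\chi$'s forward and then the $\Sigma$'s and $q$'s backward, whereas the replica equations couple $\hat q$, $q$, and $\chi$ simultaneously, with the $\hat q$ playing the role of conjugate (dual) variables that have no direct counterpart in the algorithmic derivation. I would therefore argue that the substitution $\Sigma=1/(2\hat q)$ is preserved \emph{jointly} at every layer, i.e.\ that a fixed point of one system maps to a fixed point of the other and conversely, so that the two systems have identical solution sets rather than merely matching equation by equation. Once this mutual fixed-point equivalence is in place, the asymptotic errors $\textsf{MSE}_X^{(\ell)}=\chi_x^{(\ell)}-q_x^{(\ell)}$ and $\textsf{MSE}_H^{(\ell)}=\chi_h^{(\ell)}-q_h^{(\ell)}$ predicted by ML-BiGAMP coincide with those of the exact MMSE estimator, yielding the Bayes-optimality claim.
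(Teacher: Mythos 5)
Your proposal is correct and is essentially the paper's own argument: the paper proves Proposition~\ref{Pro_Optimal} by deriving the replica fixed-point equations (Appendix~\ref{Sec:Replica}) and then observing, term by term, that they coincide with the SE equations of Algorithm~\ref{alg:SE} once $\Sigma^{(x,\ell)}=1/(2\hat{q}_x^{(\ell)})$ and $\Sigma^{(h,\ell)}=1/(2\hat{q}_h^{(\ell)})$ are substituted --- exactly the dictionary you verify, including the mean-matching identity $\chi_z^{(\ell)}-V^{(\ell)}=N_{\ell}q_x^{(\ell)}q_h^{(\ell)}$ and the $\ell=L$ versus $\ell<L$ (resp.\ $\ell=1$ versus $\ell>1$) branch checks. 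Your closing remark that the identification gives a bijection between the two solution sets is the same content as the paper's one-line verification, merely spelled out more explicitly.
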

The Proposition \ref{Pro_Optimal} indicates that the proposed ML-BiGAMP algorithm can attain the MSE performance of the exact MMSE estimator, which is Bayes-optimal but is generally computationally intractable except all priors and transition distributions being Gaussian.

\subsection{Parameters of Proposition \ref{Pro_De}}
Based on Proposition \ref{Pro_De}, the MSE performances of $\bs{X}$ and $\bs{H}^{(\ell)}$ can be fully characterized by the scalar AWGN channel (\ref{SE_x}) and (\ref{SE_h}), while the former should be run for a sufficiently large number of iterations (independent of the system dimensions). We note that under certain inputs, when the signal-to-noise ratio (SNR) related parameters $1/(2\hat{q}_x)$ and $1/(2\hat{q}_h^{(\ell)})$ are given, the analytical expression of MSE of MMSE estimator is possible.

For the model (\ref{SE_x}) and (\ref{SE_h}),  we get the MMSE estimators:
\begin{align}
\hat{x}&=\mathbb{E}\{x|y_{x}\}=\int x\mathcal{P}(x|y_{x})\text{d}x,\\
\hat{h}^{(\ell)}&=\mathbb{E}\{h^{(\ell)}|y_{h^{(\ell)}}\}=\int h^{(\ell)}\mathcal{P}(h^{(\ell)}|y_{h^{(\ell)}})\text{d}h^{(\ell)}.
\end{align}
The MSEs of those MMSE estimators are given by
\begin{align}
\nonumber
\textsf{mse}_x&=\mathbb{E}_{x,y_{x}}\{\left(x-\mathbb{E}\{x|y_{x}\}\right)^2\}\\
&=\chi_x-q_x,\\
\nonumber
\textsf{mse}_{h}^{(\ell)}&=\mathbb{E}_{h^{(\ell)},y_{\textsf{h}^{(\ell)}}}\left\{\left(h^{(\ell)}-\mathbb{E}\{h^{(\ell)}|y_{h^{(\ell)}}\}\right)^2\right\}\\
&=\chi_h^{(\ell)}-q_h^{(\ell)},
\end{align}
where $\chi_x=\mathbb{E}\{x^2\}$, $q_x=\mathbb{E}_{y_{x}}\{(\mathbb{E}\{x|y_{x}\})^2\}$, $\chi_h^{(\ell)}=\mathbb{E}\{(h^{(\ell)})^2\}$, and $q_h^{(\ell)}=\mathbb{E}_{y_{h^{(\ell)}}}\{(\mathbb{E}\{h^{(\ell)}|y_{h^{(\ell)}}\})^2\}$.

Below we only give a belief review of the MSE of the MMSE estimator of $x$, and that of  $h^{(\ell)}$ can be obtained with similar steps.

\textit{Example 1 (Gaussian input):} For the Gaussian input $x\sim \mathcal{N}(x|0,\sigma_x^2)$, the MSE of the MMSE estimator for the scalar channel (\ref{SE_x}) can be obtained by Gaussian reproduction property
\begin{align}
\textsf{mse}_x=\frac{\sigma_x^2}{1+2\hat{q}_x\sigma_x^2}.
\end{align}

\textit{Example 2 (constellation-like input):} Considering the quadrature phase shift keying (QPSK) constellation symbol, the MSE of the MMSE estimator for scalar channel (\ref{SE_x}) is given by \cite{lozano2006optimum}
\begin{align}
\textsf{mse}_x=1-\int \tanh\left(2\hat{q}_x+\sqrt{2\hat{q}_x}\zeta\right)\text{D}\zeta.
\end{align}
The corresponding symbol error rate (SER) w.r.t. $\bs{X}$ can be evaluated through the scalar AWGN channel (\ref{SE_x}), which is given by \cite{wen2015bayes}
\begin{align}
\textsf{SER}_{x}=2\mathcal{Q}\left(\sqrt{2\hat{q}_x}\right)-\left[\mathcal{Q}\left(\sqrt{2\hat{q}_x}\right)\right]^2,
\end{align}
where $\mathcal{Q}(x)=\int_x^{+\infty}\text{D}z$ is the Q-function.

\textit{Example 3 (Bernoulli-Gaussian input):} The Bernoulli-Gaussian input, i.e., $\mathcal{P}(x)=\rho\mathcal{N}(x|0,\rho^{-1})+(1-\rho)\delta(x)$, is common in the recovery of sparse signal. In this case, the MSE of the MMSE estimator for  the scalar channel (\ref{SE_x}) can be obtained by Gaussian reproduction property and convolution formula
\begin{align}
\textsf{mse}_x&=1-\frac{2\hat{q}_x\rho}{\rho+2\hat{q}_x}\int \frac{\zeta^2}{\rho+(1-\rho)\sqrt{\frac{\rho+2\hat{q}_x}{\rho}}\exp \left(-\frac{\hat{q}_x\zeta^2}{\rho}\right)}\text{D}\zeta.
\end{align}

\textit{Example 4 (Gaussian mixture input):} In \cite{wen2014channel}, the channel of massive MIMO system considering the pilot contamination is modeled as Gaussian mixture, i.e., $\mathcal{P}(x)=\sum_{i=1}^\kappa\rho_i\mathcal{N}(x|0,\sigma_i^2)$, where $\rho_i$ and $\sigma_i^2$ are the mixing probability and the power of the $i$-th Gaussian mixture component, respectively. To implement channel estimation, a message passing based method is developed. For Gaussian mixture input, the MSE of MMSE estimator of the scalar AWGN channel (\ref{SE_x}) is given by
\begin{align}
\textsf{mse}_x
=\sum_{i=1}^{\kappa}\rho_i\sigma_i^2-\int \frac{\left[\sum_{i=1}^{\kappa}\frac{\rho_i\sigma_i^2y}{\sigma_i^2+(2\hat{q}_x)^{-1}}\mathcal{N}(y|0,\sigma_i^2+(2\hat{q}_x)^{-1})\right]^2}{\sum_{i=1}^{\kappa}\rho_i\mathcal{N}(y|0,\sigma_i^2+(2\hat{q}_x)^{-1})}\text{d}y.
\end{align}

\section{Simulation and Discussion}
In this section\footnote{The codes of simulations can be found in \textit{branches} from the link: https://github.com/QiuyunZou/ML-BiGAMP.}, we firstly develop a joint channel and data (JCD) estimation method based on the proposed algorithm for massive MIMO AF relay system. Secondly, we give the application of ML-BiGAMP in compressive matrix completion. Finally, we present the numerical simulations to validate the consistency of the ML-BiGAMP algorithm and its SE in different settings (prior or layer). Here we only consider random Gaussian measurement matrices, but the proposed algorithm and its SE empirically hold in more generalized regions such as discrete uniform distribution, Bernoulli Gaussian, and Gaussian mixture, etc.

\begin{figure}[!t]
\centering
\includegraphics[width=0.6\textwidth]{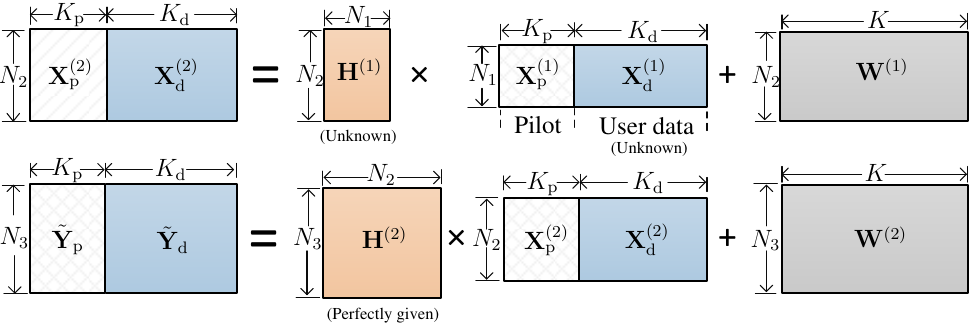}
\caption{Signal model for massive MIMO AF relay system.
}
\label{Fig:JCD_System}
\end{figure}

\subsection{JCD Method Based on Proposed Algorithm}
As shown in Fig.~\ref{Fig:JCD_System} and also described in Section \ref{Sec:Example}, the massive MIMO AF relay system can be modeled as \footnote{For simplification, we consider the relay antennas equipped with $\infty$-bit ADCs and this system can be combined as a single-layer model with non-white noise. In fact, it is the ML-BiGAMP that can be applied to the case of relay with low-precision ADCs.}
\begin{align}
\begin{cases}
\bs{X}^{(2)}=\bs{H}^{(1)}\bs{X}^{(1)}+\bs{W}^{(1)}\\
\ \ \ \bs{Y}=\textsf{Q}_{\textsf{c}}\left(\bs{H}^{(2)}\bs{X}^{(2)}+\bs{W}^{(2)}\right)
\end{cases}.
\label{Equ_Relay_System}
\end{align}
 To estimate the channel $\bs{H}^{(1)}$, the original signal $\bs{X}^{(1)}$ is divided into two parts. The first $K_{\text{p}}$ symbols of the block of $K$ symbols serve as the pilot sequences, while the remaining $K_{\text{d}}=K-K_{\text{p}}$ symbols are data transmission, i.e., $\bs{X}^{(1)}=[\bs{X}_{\text{p}}^{(1)},\bs{X}_\text{d}^{(1)}]$, where both $\bs{X}_{\text{p}}^{(1)}$ and $\bs{X}_{\text{d}}^{(1)}$ are quadrature phase shift keying (QPSK) symbol. As a toy model, we assume that the channel $\bs{H}^{(2)}$ in the second layer is perfectly known, but be aware that our ML-BiGAMP algorithm is generally applicable to those cases of an unknown $\bs{H}^{(2)}$. $\bs{W}^{(1)}$ and $\bs{W}^{(2)}$ refer to additive white
Gaussian noise and it is assumed that they have the same power $\sigma_w^2$. The signal-to-noise ratio (SNR) is defined as $1/\sigma_w^2$. Additionally, $\textsf{Q}_{\textsf{c}}(\cdot)$ represents a low-resolution complex-valued quantizer including two separable real-valued quantizer $\textsf{Q}(\cdot)$, i.e.,
\begin{align}
\forall m,k: \quad y_{mk}=\textsf{Q}(\Re(\tilde{y}_{mk}))+\mathbb{J}\textsf{Q}(\Im(\tilde{y}_{mk})),
\label{Equ:Q1}
\end{align}
where $\mathbb{J}^2=-1$, $\tilde{\bs{Y}}=\{\tilde{y}_{mk},\forall m,k\}=\bs{H}^{(2)}\bs{X}^{(2)}+\bs{W}^{(2)}$ and $\textsf{Q}(\cdot):\mathbb{R}\mapsto \mathcal{R}_B$ with $\mathcal{R}_B$ being the set of $B$-bits ADCs defined as $\mathcal{R}_B=\{(-\frac{1}{2}+b)\triangle; \quad b=-2^{B-1}+1,\cdots,2^{B-1}\}$ and $\triangle$ being an uniform quantization step size. For the output $y$ of ADCs, its input $\tilde{y}$ is assigned within the range of $(q^{\text{low}}(y),q^{\text{up}}(y)]$, which reads \cite{wen2015bayes}
\begin{align}
q^{\text{low}}(y)
&=
\begin{cases}
y-\frac{\triangle}{2}, &\text{if}\  y>\min \mathcal{R}_B\\
-\infty  &\text{otherwise}
\end{cases},\\
q^{\text{up}}(y)
&=
\begin{cases}
y+\frac{\triangle}{2}, &\text{if}\ y<\max\mathcal{R}_B\\
+\infty   &\text{otherwise}
\end{cases}.
\label{Equ:Q2}
\end{align}
Accordingly, the transition distribution from $\bs{Z}^{(2)}$ to $\bs{Y}$ of this quantization model reads
\begin{align*}
\mathcal{P}(y|z^{(2)})
&=\Psi\left(\Re(y)|\Re(z^{(2)}),\frac{\sigma_w^2}{2}\right)\Psi\left(\Im(y)|\Im(z^{(2)}),\frac{\sigma_w^2}{2}\right)
\end{align*}
where $\Psi(y|v,c^2)=\Phi(\frac{q^{\text{up}}(y)-v}{c})-\Phi(\frac{q^{\text{low}}(y)-v}{c})$ with $\Phi(x)=\int_{-\infty}^x\mathcal{N}(t|0,1)\text{d}t$. Furthermore, the main technical challenges in particularizing our algorithm and SE to the specific quantization model are the computation of parameters $(\tilde{z}_{mk}^{(2)},\tilde{v}_{mk}^{(2)})$ in practical algorithm  and $q_z^{(2)}$ in SE equations. The expressions of $(\tilde{z}_{mk}^{(2)},\tilde{v}_{mk}^{(2)})$ are given in \cite[(23)-(25)]{wen2015bayes}. The evaluation of $q_z^{(2)}$ can be found in \cite{wang2017bayesian}
\begin{align}
\nonumber
q_z^{(2)}
&=Mq_h^{(2)}q_x^{(2)}+\frac{(V^{(2)})^2}{\sigma_w^2+V^{(2)}}\sum_{y\in \mathcal{R}_B}\int \frac{\left[\phi(\varsigma_1(y,\xi))-\phi(\varsigma_2(y,\xi))\right]^2}{\Phi(\varsigma_1(y,\xi))-\Phi(\varsigma_2(y,\xi))}\text{D}\xi,
\end{align}
where $\varsigma_1(y,\xi)=\frac{q^{\text{up}}(y)-\sqrt{(\chi_z^{(2)}-V^{(2)})/2}\xi}{\sqrt{(\sigma_w^2+V^{(2)})/2}}$, $\varsigma_2(y,\xi)=\frac{q^{\text{low}}(y)-\sqrt{(\chi_z^{(2)}-V^{(2)})/2}\xi}{\sqrt{(\sigma_w^2+V^{(2)})/2}}$, and $\phi(x)=\mathcal{N}(x|0,1)$.

 In Fig.~\ref{Fig:Exp1}, the dimensions of the system  are set as $(K_{\text{p}},K_{\text{d}},N_1, N_2, N_3)=(100, 400, 50, 200, 400)$ and $\text{SNR}_1=\text{SNR}_2=1/\sigma_w^2=8$dB. As depicted in Fig.~\ref{Fig:Exp1}, the ML-BiGAMP and its SE converge very quickly within 12$\sim$15 iterations. More importantly, the normalized MSE (NMSE) performance of $\bs{X}_{\text{d}}^{(1)}$ ($\|\hat{\bs{X}}_{\text{d}}^{(1)}-\bs{X}_{\text{d}}^{(1)}\|_{\text{F}}^2/\|\bs{X}_{\text{d}}^{(1)}\|_{\text{F}}^2$ ) of ML-BiGAMP algorithm agrees with its SE ($\textsf{mse}_{x}$) perfectly in this two layer setting, where $\|\cdot\|_{\text{F}}$ denotes the Frobenius norm.

In Fig.~\ref{Fig:Exp2}, we present the bit error rate (BER) versus SNR  plot in terms of pilot-only, JCD, and perfect-CSI method. The pilot-only method involves two phases: train phase and data phase. In train phase, the pilot $\bs{X}_{\text{p}}^{(1)}$ is transmitted to estimate channel $\bs{H}^{(1)}$ using the proposed ML-BiGAMP algorithm. In data phase, the data $\bs{X}_{\text{d}}$ is detected using the proposed ML-BiGAMP algorithm based on the estimated channel. The JCD method is to jointly estimate channel and data using the proposed algorithm. In perfect-CSI (channel status information) method, the channel $\bs{H}^{(1)}$ is perfectly given and $\bs{X}_{\text{d}}^{(1)}$ is detected using the proposed algorithm. The dimensions of the system are set as $(K_{\text{p}}, K_{\text{d}}, N_1, N_2, N_3)=(100, 400, 50, 150, 300)$ and $\text{SNR}_1=\text{SNR}_2=1/\sigma_w^2$. As can be seen from Fig.~\ref{Fig:Exp2}, the JCD has a huge advantage over the pilot-only method. Meanwhile, there is small gap between JCD and perfect-CSI method, especially in $B=\infty$.

In Fig.~\ref{Fig:Exp3}, we present the influence of pilot length on NMSE performance of $\bs{H}^{(1)}$ of JCD and pilot-only method by varying $K_{\text{p}}/K$ from $0.1$ to $0.99$. The dimensions of the system are $(K,N_1,N_2,N_3)=(500, 50, 150, 300)$ and SNR is $\text{SNR}_1=\text{SNR}_2=5\text{dB}$. As shown in Fig.~\ref{Fig:Exp3}, the performance of JCD method is better than the pilot-only method, especially in low $K_p/K$. A straightforward ideal to reduce the gap between JCD and pilot-only method is increasing the pilot length.

\begin{figure}[!t]
\centering
\includegraphics[width=0.48\textwidth]{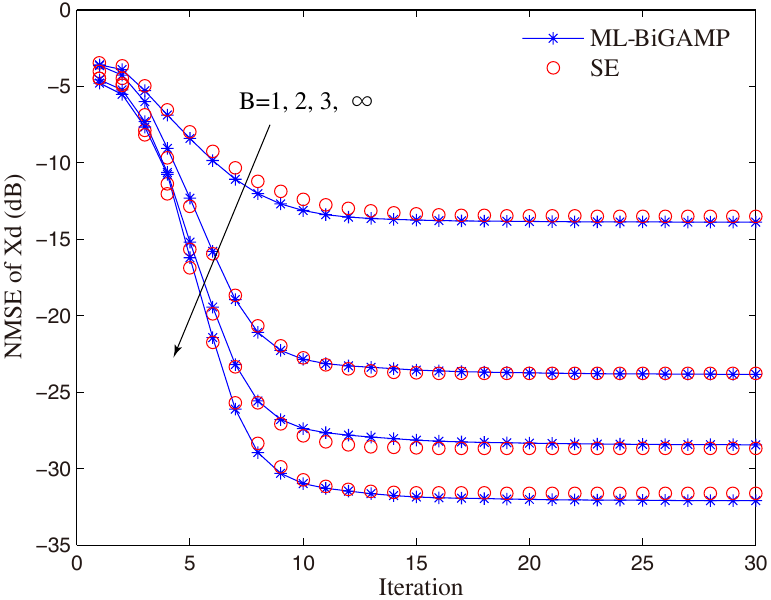}
\caption{Per-iteration behavior of ML-BiGAMP and its SE in two-hop AF relay communication.}
\label{Fig:Exp1}
\vspace{+0.4cm}
\centering
\includegraphics[width=0.48\textwidth]{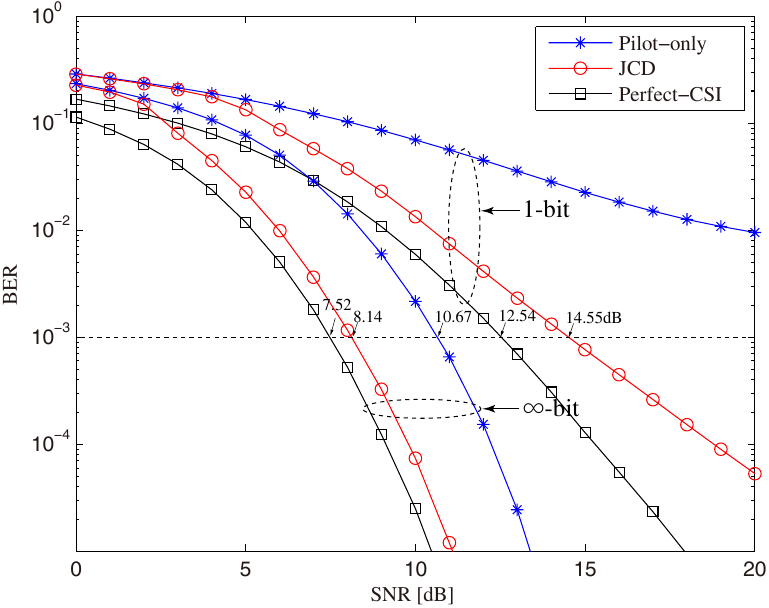}
\caption{BER behavior of pilot-only, JCD, and perfect-CSI method in two-hop AF relay communication.
}
\label{Fig:Exp2}
\end{figure}

\begin{figure}[!t]
\centering
\includegraphics[width=0.48\textwidth]{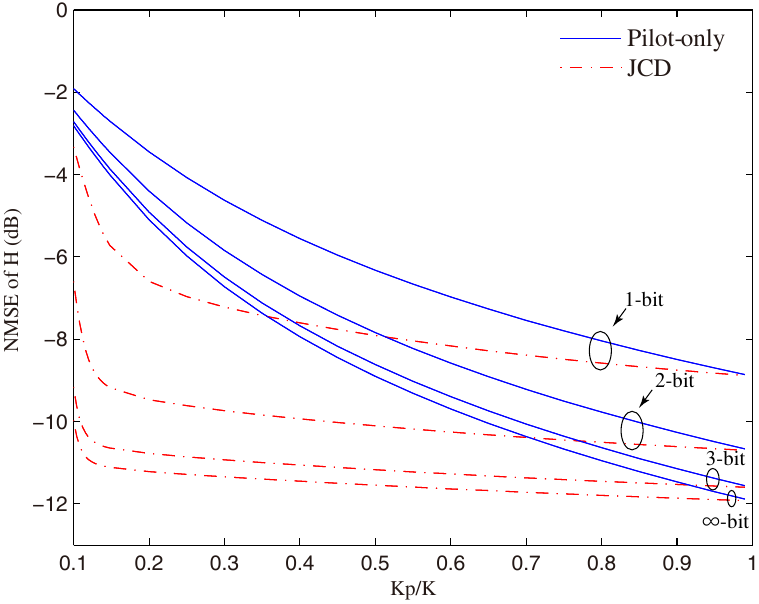}
\caption{NMSE w.r.t. $\bs{H}^{(1)}$ of JCD and pilot-only scheme  versus the pilot ratio $K_{\text{p}}/K$ for different bit quantizer.
}
\label{Fig:Exp3}
\vspace{+0.4cm}
\centering
\includegraphics[width=0.48\textwidth]{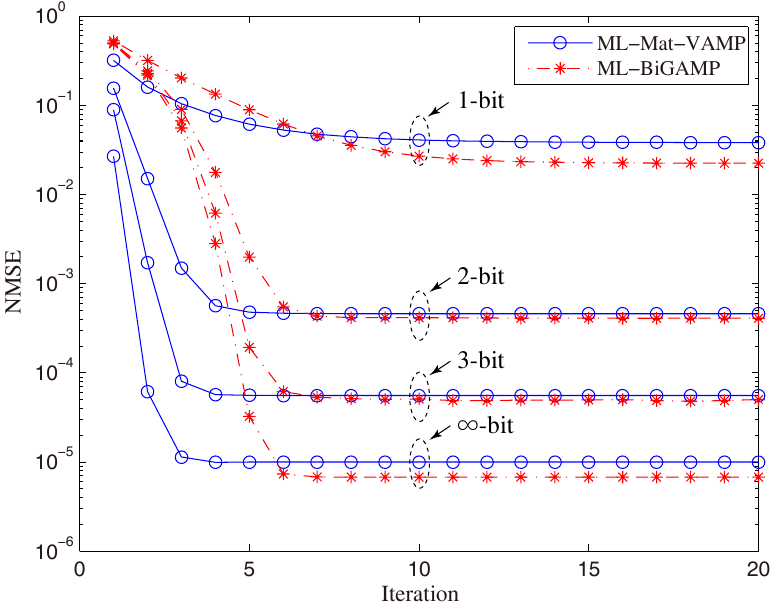}
\caption{NMSE performance of ML-BiGAMP and ML-Mat-VAMP in 3-layer model.
}
\label{Fig:Expb}
\end{figure}

In Fig.~\ref{Fig:Expb}, we compare the ML-BiGAMP with the competing ML-Mat-VAMP in 3-layer model: $\bs{X}^{(2)}=\bs{H}^{(1)}\bs{X}^{(1)}+\bs{W}^{(1)}, \bs{X}^{(3)}=\bs{H}^{(2)}\bs{X}^{(2)}+\bs{W}^{(2)}, \bs{Y}=\textsf{Q}_c(\bs{H}^{(3)}\bs{X}^{(2)}+\bs{W}^{(3)})$. The system dimensions are set as $(N_1, N_2, N_3, N_4, K)=(200, 400, 600, 800, 1)$. The noise power of each layer is considered to be equal and the SNR is set as $1/\sigma_w^2=15$dB. Besides, it is assumed that the measurement matrix of each layer is known.  As can be seen from Fig.~\ref{Fig:Expb}, at the case of 2-bit and 3-bit, the NMSE performance of ML-BiGAMP coincides with ML-Mat-VAMP almost. While, at the case of 1-bit and $\infty$-bit, ML-BiGAMP has a slight advantage over ML-Mat-VAMP on NMSE performance. Also, the convergence speed of ML-Mat-VAMP is faster than ML-BiGAMP, but it has to pay more computational cost.

\subsection{Compressive Matrix Completion}
As described in Section \ref{Sec:CMC}, the compressive matrix completion (MC) problem is formalized as
\begin{align}
\begin{cases}
\bs{X}^{(2)}=\boldsymbol{f}(\bs{H}^{(1)}\bs{X}^{(1)}+\bs{W}^{(1)})\\
\ \ \ \bs{Y}=\bs{H}^{(2)}\bs{X}^{(2)}+\bs{W}^{(2)}
\end{cases},
\end{align}
where $\boldsymbol{f}$ is a componentwise mapping which is specified by (\ref{Equ:MC}). In this problem, only a fraction $\epsilon=\frac{|\Omega|}{N_2 K}$ of entries of $\bs{X}^{(2)}$ are valid, where $\Omega$ is the set of valid entries of $\bs{X}^{(2)}$. In addition, it is assumed that $\bs{W}^{(1)}$ and $\bs{W}^{(2)}$ have the same power $\sigma_w^2$, and $\bs{H}^{(1)}$ and $\bs{X}^{(1)}$ are drawn from Gaussian distribution with zero mean and unit variance. For any $(m,k)\notin \Omega$, several quantities in Algorithm \ref{alg:ML-BiGAMP} becomes
$\tilde{z}_{mk}^{(1)}(t)=Z^{(1)}_{mk}(t)$, $\tilde{v}_{mk}^{(1)}(t)=V_{mk}^{(1)}(t)$, $\hat{x}_{mk}^{(2)}(t)=0$, and $v_{mk}^{(x,2)}(t)=0$. \\
\indent In Fig.~\ref{Fig:ExpCMC}, we show NMSE performance of $\bs{Z}$ defined as $\|\hat{\bs{H}}^{(1)}\hat{\bs{X}}^{(1)}-\bs{Z}^{(1)}\|_{\text{F}}^2/\|\bs{Z}^{(1)}\|_{\text{F}}^2$ over a grid of sampling ratios $\epsilon =\frac{|\Omega|}{N_2 K}$ and rank $N_1$. The dimensions of system are set as $(K, N_2, N_3)=(1000, 1000, 500)$ and the SNR is $\text{SNR}=1/\sigma_w^2=50\text{dB}$. The ``success'' (white grid) is defined as NMSE$<-50$dB. As we can seen from this figure, more valid entries or smaller rank $N_1$ can improve the NMSE performance of $\bs{Z}^{(1)}$.

\subsection{Validation for SE Using More Degenerated Cases }
In Fig.~\ref{Fig:Exp4}$\sim$\ref{Fig:Exp7}, we consider the model (\ref{alg:system}) in $L=1$ and $K=1$, i.e., $\bs{Y}=\boldsymbol{\phi}(\bs{HX}+\bs{W})$, where $\bs{H}$ is Gaussian random matrix and is perfectly given. Further, the deterministic and element-wise mapping $\boldsymbol{\phi}(\cdot)$ is particularized as quantization function defined by (\ref{Equ:Q1})-(\ref{Equ:Q2}).

In Fig.~\ref{Fig:Exp4}$\sim$\ref{Fig:Exp5}, to be specific, the application in compressive sensing (Bernoulli-Gaussian prior) is considered  by varying the sparse rate $\rho$ and the precision of ADCs. The SNR is defined as $1/\sigma_w^2$ and it is set as $12$dB. The dimensions of the system are $(N_2,N_1)=(512,1024)$, i.e., measurement ratio $N_2/N_1=0.5$.  In addition, the NMSE of $\bs{X}$ is defined as $\|\hat{\bs{X}}-\bs{X}\|^2/\|\bs{X}\|^2$. As can be seen from Fig.~\ref{Fig:Exp4} and Fig.~\ref{Fig:Exp5}, the SEs agree perfectly with the algorithm in all settings.

Meanwhile, the application of ML-BiGAMP in communication (QPSK symbols) is depicted in Fig.~\ref{Fig:Exp6}$\sim$\ref{Fig:Exp7} by varying the measurement ratio $N_2/N_1$ and the precision of ADCs. The SNR of them is set as 9dB. It can also be seen that the SEs predict the NMSE performance of the algorithm in all settings.

\begin{figure}[!t]
\centering
\includegraphics[width=0.48\textwidth]{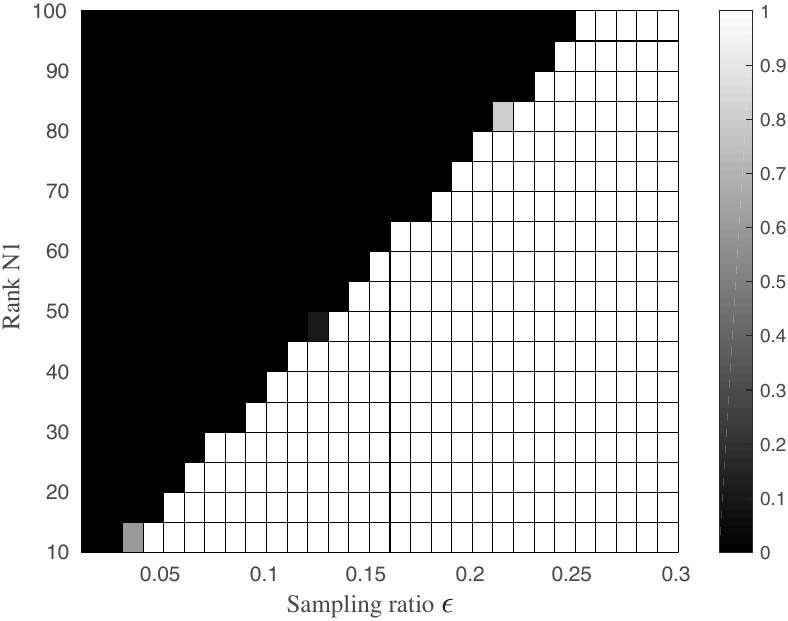}
\caption{Success rate over a grid of sampling ratios $\epsilon=\frac{|\Omega|}{N_2K}$ and ranks $N_1$. Here, ``success'' is defined as NMSE$<-50$dB.
}
\label{Fig:ExpCMC}
\vspace{+0.4cm}
\centering
\includegraphics[width=0.48\textwidth]{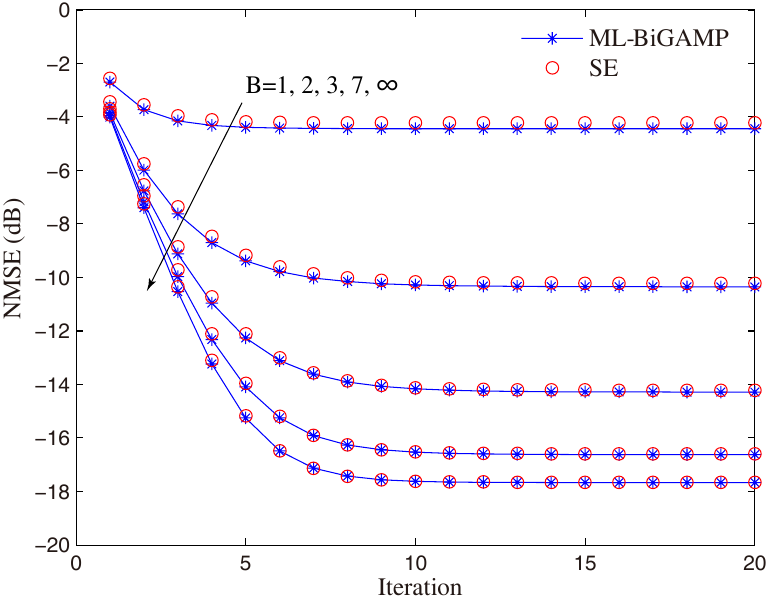}
\caption{Per-iteration behavior of ML-BiGAMP and its SE in compressive sensing (sparse rate $\rho=0.1$).
}
\label{Fig:Exp4}
\end{figure}

\begin{figure}[!t]
\centering
\vspace{+0.3cm}
\includegraphics[width=0.48\textwidth]{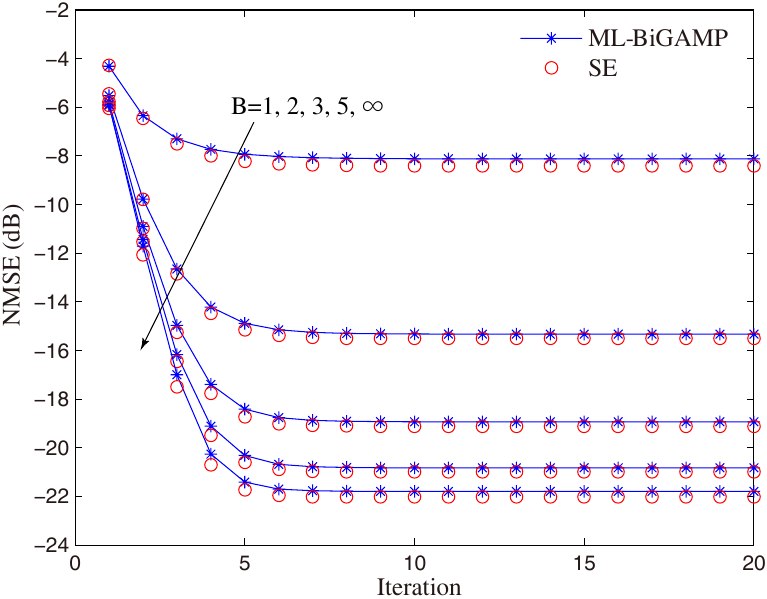}
\caption{Per-iteration behavior of ML-BiGAMP and its SE in compressive sensing (sparse rate $\rho=0.05$).
}
\label{Fig:Exp5}
\vspace{+0.3cm}
\centering
\includegraphics[width=0.48\textwidth]{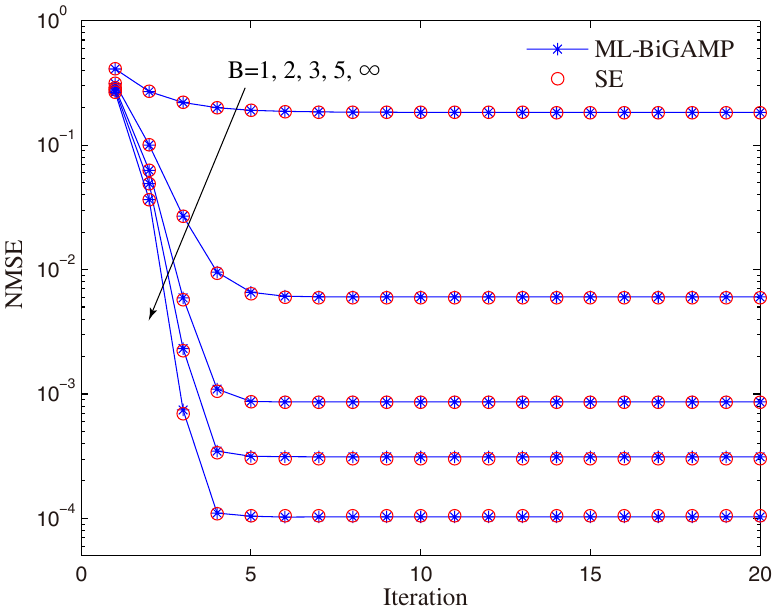}
\caption{Per-iteration behavior of ML-BiGAMP and its SE in communication ($N_2/N_1=2$).
}
\label{Fig:Exp6}
\end{figure}

\begin{figure}[!t]
\centering
\includegraphics[width=0.48\textwidth]{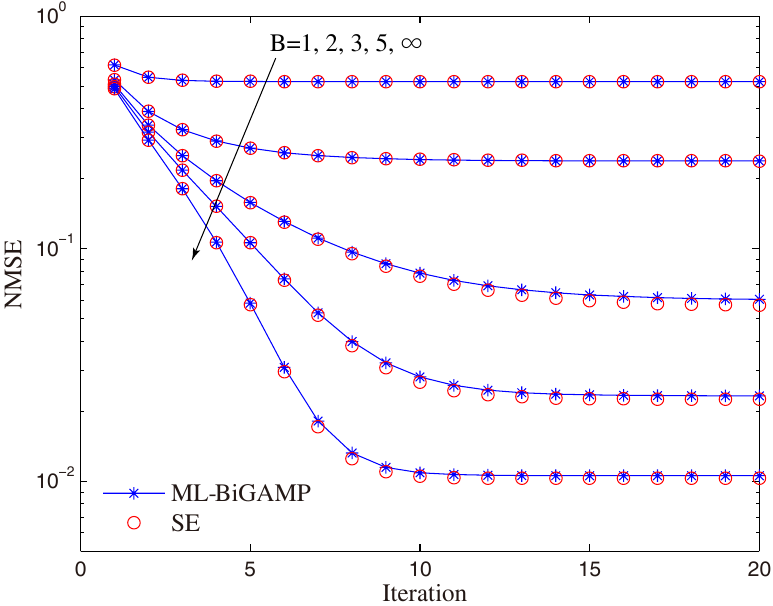}
\caption{Per-iteration behavior of ML-BiGAMP and its SE in communication ($N_2/N_1=1$).
}
\label{Fig:Exp7}
\end{figure}

\section{Conclusion}

In this paper, we studied the multi-layer generalized bilinear inference problem (\ref{alg:system}), where the goal is to recover each layer's input signal $\bs{X}^{(\ell)}$ and the measurement matrix $\bs{H}^{(\ell)}$ from the ultimate observation $\bs{Y}$. To this end, we have extended the BiG-AMP \cite{parker2014bilinear}, originally designed for a single layer, to develop a new algorithm termed multi-layer BiG-AMP (ML-BiGAMP). The new algorithm approximates the general sum-product LBP by performing AMP approximation in the high-dimensional limit and thus has a substantial reduction in its computational complexity as compared to competing methods. We also demonstrated that, in large system limit, the asymptotic MSE performance of ML-BiGAMP could be fully characterized via  its state evolution, i.e., a set of one-dimensional equations. The state evolution further revealed that its fixed point equations agreed perfectly with those of the exact MMSE estimator as predicted via the replica method. Given the fact that the MMSE estimator is optimal in MSE sense and that it is infeasible in high-dimensional practice, our ML-BiGAMP is attractive because it could achieve the same Bayes-optimal MSE performance with only a complexity of $\mathcal{O}(N^3)$. To illustrate the usefulness as well as to validate our theoretical analysis and prediction, we designed a new detector based on ML-BiGAMP that jointly estimates the channel fading and the data symbols with high precision, considering a two-hop AF relay communication system.

\begin{appendices}
\section{Derivation of ML-BiGAMP}
\label{Appendix:A}

\begin{figure}[!t]
\centering
\includegraphics[width=1\textwidth]{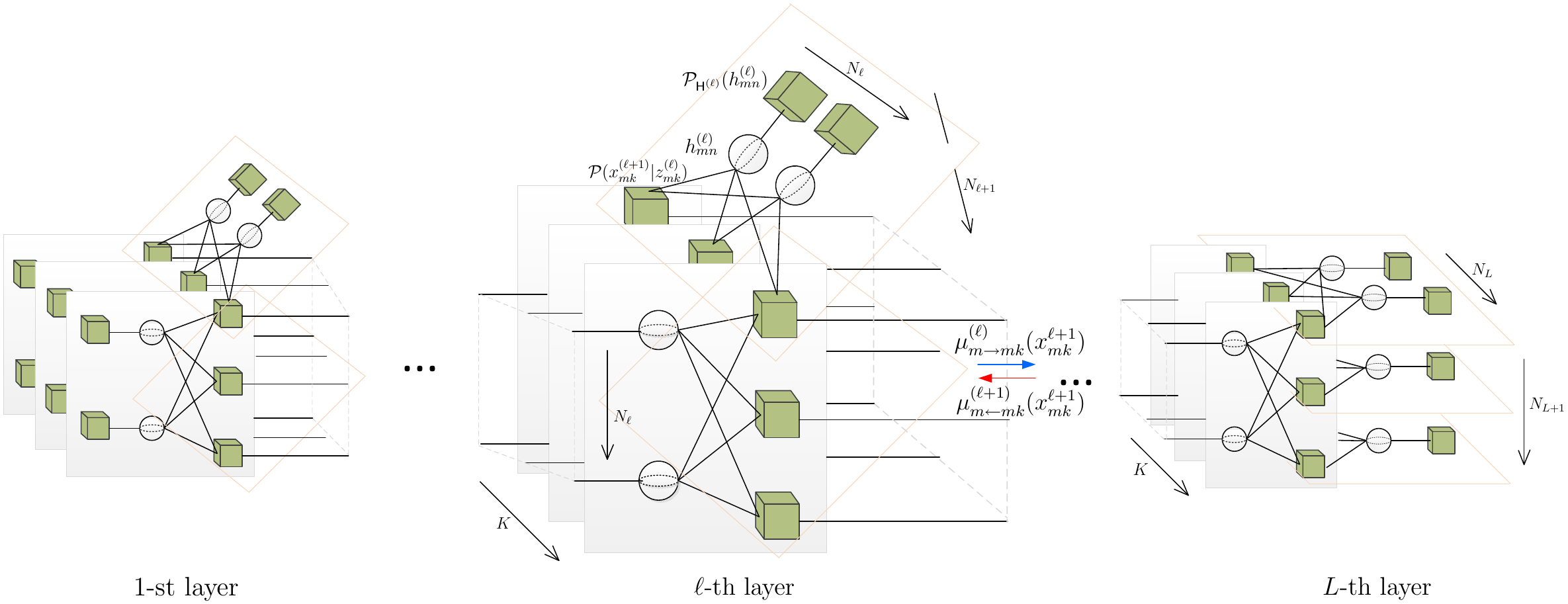}
\caption{The factor graph of multi-layer generalized bilinear inferences problem with unknown measurement matrices, where the cubes denote the factor nodes, the spheres denote the variable nodes, and the messages deliver via the edges between factor nodes and variable nodes.
}
\label{fig:FG}
\end{figure}

The factor graph of multi-layer generalized bilinear problems is presented in Fig.~\ref{fig:FG}. We then address the following messages defined in Table~\ref{Table:Definition}.
\begin{align}
\nonumber
\mu_{n\leftarrow mk}^{(\ell)}(x_{nk}^{(\ell)},t)&\propto \int \mathcal{P}\left(x_{mk}^{(\ell+1)}|\sum_{s=1}^{N_{\ell}}h_{ms}^{(\ell)}x_{sk}^{(\ell)}\right)\mu_{m\leftarrow mk}^{(\ell+1)}(x_{mk}^{(\ell+1)},t)\prod_{s=1}^{N_{\ell}}\mu_{k\leftarrow ms}^{(\ell)}(h_{ms}^{(\ell)},t),\\
&\qquad \quad  \times \prod_{r\ne n}^{N_{\ell}}\mu^{(\ell)}_{r\rightarrow mk}(x_{rk}^{(\ell)},t)\text{d}h^{(\ell)}_{ms}\text{d}x_{rk}^{(\ell)}\text{d}x_{mk}^{(\ell+1)},\\
\mu_{n\rightarrow  mk}^{(\ell)}(x_{nk}^{(\ell)},t+1)&\propto \mu_{n\rightarrow nk}^{(\ell-1)}(x_{nk}^{(\ell)},t+1)\prod_{s\ne m}^{N_{\ell+1}}\mu_{n\leftarrow sk}^{(\ell)}(x_{nk}^{(\ell)},t),\\
\nonumber
\mu_{k\rightarrow mn}^{(\ell)}(h_{mn}^{(\ell)},t)&\propto \int \mathcal{P}\left(x_{mk}^{(\ell+1)}|\sum_{s=1}^{N_{\ell}}h_{ms}^{(\ell)}x_{sk}^{(\ell)}\right) \mu_{m\leftarrow mk}^{(\ell+1)}(x_{mk}^{(\ell+1)},t)\prod_{r=1}^{N_{\ell}}\mu^{(\ell)}_{r\rightarrow mk}(x_{rk}^{(\ell)},t)\\
&\qquad \quad \times  \prod_{s\ne n}^{N_{\ell}}\mu_{k\leftarrow ms}^{(\ell)}(h_{ms}^{(\ell)},t)\text{d}h^{(\ell)}_{ms}\text{d}x_{rk}^{(\ell)}\text{d}x_{mk}^{(\ell+1)},\\
\mu_{k\leftarrow  mn}^{(\ell)}(h_{mn}^{(\ell)},t+1)&\propto \mathcal{P}(h_{mn}^{(\ell)})\prod_{s\ne k}^K \mu_{s\rightarrow mn}^{(\ell)}(h_{mn}^{(\ell)},t),
\end{align}
where
\begin{align}
\mu_{m\leftarrow mk}^{(\ell+1)}(x_{mk}^{(\ell+1)},t)&\propto  \prod_{p=1}^{N_{\ell+2}}\mu_{m\leftarrow pk}^{(\ell+1)}(x_{mk}^{(\ell+1)},t),
\label{Equ:Mu-mk}\\
\nonumber
\mu_{n\rightarrow nk}^{(\ell-1)}(x_{nk}^{(\ell)},t+1)&\propto  \int \mathcal{P}\left(x_{nk}^{(\ell)}|\sum_{s=1}^{N_{\ell-1}}h_{ns}x_{sk}\right)\prod_{s=1}^{N_{\ell-1}}\mu_{k\leftarrow ns}^{(\ell-1)}(h_{ns}^{(\ell-1)},t+1)\\
&\qquad \quad \times \prod_{r=1}^{N_{\ell-1}}\mu_{r\rightarrow nk}^{(\ell-1)}(x_{rk}^{(\ell-1)},t+1)\text{d}x_{rk}\text{d}h_{ns}.
\label{Equ:Mnnk}
\end{align}
Specially, when $\ell=L$, there is $\mu_{m\leftarrow mk}^{(L+1)}(x_{mk}^{(L+1)},t)=1$ whereas when $\ell=1$, we have $\mu_{n\rightarrow nk}^{(\ell-1)}(x_{nk}^{(\ell)},t)=\mathcal{P}(x_{nk})$.

\begin{table}[!t]
\centering
\caption{Sum-product Message definitions}
\label{Table:Definition}
\begin{tabular}{|l|l|l|l|l|}
  \hline
  $\mu_{n\leftarrow mk}^{(\ell)}(x_{nk}^{(\ell)},t)$        &message from $\mathcal{P}(x_{mk}^{(\ell+1)}|\cdot)$ to $x_{nk}$\\
  \hline
  $\mu_{n\rightarrow mk}^{(\ell)}(x_{nk}^{(\ell)},t)$       &message from  $x_{nk}$ to $\mathcal{P}(x_{mk}^{(\ell+1)}|\cdot)$\\
  \hline
  $\mu_{m\leftarrow mk}^{(\ell+1)}(x_{mk}^{(\ell+1)},t)$    &message from  $x_{mk}^{(\ell+1)}$ in $(\ell+1)$-th layer  to $\mathcal{P}(x_{mk}^{(\ell+1)}|\cdot)$\\
  \hline
  $\mu_{n\rightarrow nk}^{(\ell-1)}(x_{nk}^{(\ell)},t)$     &message from  $\mathcal{P}(x_{nk}^{(\ell)}|\cdot)$ in $(\ell-1)$-th layer  to $x_{nk}^{(\ell)}$\\
  \hline
  $\mu_{k\rightarrow mn}^{(\ell)}(h_{mn}^{(\ell)},t)$       &message from  $\mathcal{P}(x_{mk}^{(\ell+1)}|\cdot)$ to $h_{mn}^{(\ell)}$\\
  \hline
  $\mu_{k\leftarrow  mn}^{(\ell)}(h_{mn}^{(\ell)},t)$       &message from   $h_{mn}^{(\ell)}$ to $\mathcal{P}(x_{mk}^{(\ell+1)}|\cdot)$ \\
  \hline
  $\mu_{nk}^{(\ell)}(x_{nk}^{(\ell)},t)$                    &belief distribution at $x_{nk}$ \\
  \hline
  $\mu_{mn}^{(\ell)}(h_{mn}^{(\ell)},t)$                    &belief distribution at $h_{mn}^{(\ell)}$ \\
  \hline
\end{tabular}
\end{table}

Accordingly, the belief distributions (approximate posterior distribution) of $x_{nk}^{(\ell)}$ and $a_{mn}^{(\ell)}$ are respectively given by
\begin{align}
\mu_{nk}^{(\ell)}(x_{nk}^{(\ell)},t+1)&=\frac{\mu_{n\rightarrow nk}^{(\ell-1)}(x_{nk}^{(\ell)},t)\prod_{m=1}^{N_{\ell+1}}\mu_{n\leftarrow mk}^{(\ell)}(x_{nk}^{(\ell)},t)}{\int \mu_{n\rightarrow nk}^{(\ell-1)}(x_{nk}^{(\ell)},t)\prod_{m=1}^{N_{\ell+1}}\mu_{n\leftarrow mk}^{(\ell)}(x_{nk}^{(\ell)},t)\text{d}x_{nk}^{(\ell)}},\\
\mu_{mn}^{(\ell)}(h_{mn}^{(\ell)},t+1)&=\frac{\mathcal{P}_{\textsf{H}^{(\ell)}}(h_{mn}^{(\ell)})\prod_{k=1}^K \mu_{k\rightarrow mn}^{(\ell)}(h_{mn}^{(\ell)},t)}{\int \mathcal{P}_{\textsf{H}^{(\ell)}}(h_{mn}^{(\ell)})\prod_{k=1}^K \mu_{k\rightarrow mn}^{(\ell)}(h_{mn}^{(\ell)},t)\text{d}h_{mn}^{(\ell)}}.
\end{align}
We denote the mean and variance of $\mu_{nk}^{(\ell)}(x_{nk}^{(\ell)},t)$ as $\hat{x}_{nk}^{(\ell)}(t)$ and $v^{(x,\ell)}_{nk}(t)$ respectively. Meanwhile, we denote the mean and variance of $\mu_{mn}^{(\ell)}(h_{mn}^{(\ell)},t)$ as $\hat{h}_{mn}^{(\ell)}$ and $v_{mn}^{(h,\ell)}(t)$, respectively. Note that $\hat{x}_{nk}(t)$ and $\hat{h}_{mn}^{(\ell)}(t)$ are the approximate MMSE estimators of $x_{nk}$ and $h_{mn}$ in $t$-th iteration, respectively.

\subsection{Approximate factor-to-variable messages}
We begin at simplifying the factor-to-variable message $\mu_{n\leftarrow mk}^{(\ell)}(x_{nk}^{(\ell)},t)$
\begin{align}
\nonumber
\mu_{n\leftarrow mk}^{(\ell)}(x_{nk}^{(\ell)},t)
&\propto \int \mathcal{P}\left(x_{mk}^{(\ell+1)}|z_{mk}^{(\ell)}\right) \mathbb{E}\left[\delta \left(z_{mk}^{(\ell)}-h_{mn}^{(\ell)}x_{nk}^{(\ell)}-\sum_{s\ne n}^{N_{\ell}} h_{ms}^{(\ell)}x_{sk}^{(\ell)}\right)\right]\\
&\qquad \times \mu_{m\leftarrow mk}^{(\ell+1)}(x_{mk}^{(\ell+1)},t)\text{d}z_{mk}^{(\ell)}\text{d}x_{mk}^{(\ell+1)},
\end{align}
where the expectation is taken over the distribution $\prod_{s=1}^{N_{\ell}}\mu_{k\leftarrow ms}^{(\ell)}(h_{ms}^{(\ell)},t) \prod_{r\ne n}^{N_{\ell}}\mu^{(\ell)}_{r\rightarrow mk}(x_{rk}^{(\ell)},t)$. We associate random variable (RV) $\xi_{mk}^{(z,\ell)}(t)$ with $z_{mk}^{(\ell)}$, associate RV $\xi^{(h,\ell)}_{k\leftarrow ms}(t)$ with $h_{ms}^{(\ell)}$ following $\mu_{k\leftarrow ms}^{(\ell)}(h_{ms}^{(\ell)},t)$, and associate RV $\xi_{r\rightarrow mk}^{(x,\ell)}(t)$ with $x_{rk}^{(\ell)}$ following $\mu^{(\ell)}_{r\rightarrow mk}(x_{rk}^{(\ell)},t)$. Then applying PDF-to-RV lemma \footnote{
Let ${\boldsymbol{w}}\in \mathbb{R}^p$ and ${u} \in \mathbb{R}^1$ be two RVs, and $g:\mathbb{R}^p\rightarrow \mathbb{R}$ be a generic mapping. Then, ${u}=g({\boldsymbol{w}})$ if and only if the PDF $\mathcal{P}_{{u}}(u)\propto \int \delta(u-g(\boldsymbol{w}))\mathcal{P}_{{\boldsymbol{w}}}(\boldsymbol{w})\text{d}\boldsymbol{w}$.}  yields
\begin{align}
\xi_{mk}^{(z,\ell)}(t)=x_{nk}^{(\ell)}\xi^{(h,\ell)}_{k\leftarrow mn}(t)+\sum_{s\ne n}\xi^{(h,\ell)}_{k\leftarrow ms}(t)\xi_{s\rightarrow mk}^{(x,\ell)}(t).
\end{align}
In large system limits,  the central limit theorem (CLT) allows us to handle $\xi_{mk}^{(z,\ell)}(t)$ as Gaussian distribution with mean and variance respectively given by
\begin{align}
\mathbb{E}[\xi_{mk}^{(z,\ell)}(t)]&=x_{nk}^{(\ell)}\hat{h}_{k\leftarrow mn}^{(\ell)}(t)+Z_{mk\backslash n}^{(\ell)}(t),\\
\text{Var}[\xi_{mk}^{(z,\ell)}(t)]&=|x_{nk}^{(\ell)}|^2v_{k\leftarrow mn}^{(h,\ell)}(t)+V_{mk\backslash n}^{(\ell)}(t),
\end{align}
where
\begin{align}
Z_{mk\backslash n}^{(\ell)}(t)&=\sum_{s\ne n}\hat{h}_{k\leftarrow ms}^{(\ell)}(t)\hat{x}_{s\rightarrow mk}^{(\ell)}(t),\\
V_{mk\backslash n}^{(\ell)}(t)&=\sum_{s\ne n}v_{k\leftarrow ms}^{(h,\ell)}(t)v_{s\rightarrow mk}^{(x,\ell)}(t)+|\hat{h}_{k\leftarrow ms}^{(\ell)}(t)|^2v_{s\rightarrow mk}^{(x,\ell)}(t)+|\hat{x}_{s\rightarrow mk}^{(\ell)}(t)|^2v_{k\leftarrow ms}^{(h,\ell)}(t),
\end{align}
with $\hat{h}_{k\leftarrow ms}^{(\ell)}(t)$ and $v_{k\leftarrow ms}^{(h,\ell)}(t)$ being the mean and variance of RV $\xi^{(h,\ell)}_{k\leftarrow ms}(t)$, respectively, and $\hat{x}_{s\rightarrow mk}^{(\ell)}(t)$ and $v_{s\rightarrow mk}^{(x,\ell)}(t)$ being the mean and variance of RV $\xi_{s\rightarrow mk}^{(x,\ell)}(t)$, respectively.

By Gaussian approximation, the message $\mu_{n\leftarrow mk}^{(\ell)}(x_{nk}^{(\ell)},t)$ is simplified as
\begin{align}
\nonumber
\mu_{n\leftarrow mk}^{(\ell)}(x_{nk}^{(\ell)},t)
&\propto \int  \mathcal{N}\left(z_{mk}^{(\ell)}|x_{nk}^{(\ell)}\hat{h}_{k\leftarrow mn}^{(\ell)}(t)+Z_{mk\backslash n}^{(\ell)}(t), |x_{nk}^{(\ell)}|^2v_{k\leftarrow mn}^{(h,\ell)}(t)+V_{mk\backslash n}^{(\ell)}(t)\right)\\
&\qquad \times \mathcal{P}\left(x_{mk}^{(\ell+1)}|z_{mk}^{(\ell)}\right)\mu_{m\leftarrow mk}^{(\ell+1)}(x_{mk}^{(\ell+1)},t)\text{d}z_{mk}^{(\ell)}\text{d}x_{mk}^{(\ell+1)}.
\end{align}
It is found that the parameters $Z_{mk\backslash n}^{(\ell)}(t)$ only has a slight differ from each others. The similar situation also exists in the parameter $V_{mk\backslash n}^{(\ell)}(t)$. To further simplify the message $\mu_{n\leftarrow mk}^{(\ell)}(x_{nk}^{(\ell)},t)$, we define
\begin{align}
H_{mk}^{(\ell)}(a,A)
&=\log  \int \mathcal{P}\left(x_{mk}^{(\ell+1)}|z_{mk}^{(\ell)}\right)\mathcal{N}(z_{mk}^{(\ell)}|a,A)\mu_{m\leftarrow mk}^{(\ell+1)}(x_{mk}^{(\ell+1)},t)\text{d}x_{mk}^{(\ell+1)},\\
Z_{mk}^{(\ell)}(t)&=\sum_{n=1}^{N_{\ell}}\hat{h}_{k\leftarrow mn}^{(\ell)}(t)\hat{x}_{n\rightarrow mk}^{(\ell)}(t)
\label{Equ:Z},\\
V_{mk}^{(\ell)}(t)&=\sum_{n=1}^{N_{\ell}}v_{k\leftarrow mn}^{(h,\ell)}(t)v_{n\rightarrow mk}^{(x,\ell)}(t)+|\hat{h}_{k\leftarrow mn}^{(\ell)}(t)|^2v_{n\rightarrow mk}^{(x,\ell)}(t)+|\hat{x}_{n\rightarrow mk}^{(\ell)}(t)|^2v_{k\leftarrow mn}^{(h,\ell)}(t),
\label{Equ:V}
\end{align}
and then obtain
\begin{align}
\nonumber
&\log \mu_{n\leftarrow mk}^{(\ell)}(x_{nk}^{(\ell)},t)\\
&=\text{const}+H_{mk}^{(\ell)}\left(x_{nk}^{(\ell)}\hat{h}_{k\leftarrow mn}^{(\ell)}(t)+Z_{mk\backslash n}^{(\ell)}(t), |x_{nk}^{(\ell)}|^2v_{k\leftarrow mn}^{(h,\ell)}(t)+V_{mk\backslash n}^{(\ell)}(t)\right)\\
\nonumber
&=\text{const}+H_{mk}^{(\ell)}\left(Z_{mk}^{(\ell)}(t)+\hat{h}_{k\leftarrow mn}^{(\ell)}(t)(x_{nk}^{(\ell)}-\hat{x}_{n\rightarrow mk}^{(\ell)}(t))\right., \\ &\left.V_{mk}^{(\ell)}(t)+v_{k\leftarrow mn}^{(h,\ell)}(t)(|x_{nk}^{(\ell)}|^2-|\hat{x}_{n\rightarrow mk}^{(\ell)}(t)|^2)-v_{k\leftarrow mn}^{(h,\ell)}(t)v_{n\rightarrow mk}^{(x,\ell)}(t)-|\hat{h}_{k\leftarrow mn}^{(\ell)}(t)|^2v_{n\rightarrow mk}^{(x,\ell)}(t)\right)\\
&\approx \text{const}+H_{mk}^{(\ell)}\left(Z_{mk}^{(\ell)}(t)+\hat{h}_{k\leftarrow mn}^{(\ell)}(t)(x_{nk}^{(\ell)}-\hat{x}_{nk}^{(\ell)}(t)),V_{mk}^{(\ell)}(t)+v_{k\leftarrow mn}^{(h,\ell)}(t)(|x_{nk}^{(\ell)}|^2-|\hat{x}_{n\rightarrow mk}^{(\ell)}(t)|^2)\right),
\end{align}
where we use $\hat{x}_{nk}^{(\ell)}(t)$ to replace $\hat{x}_{n\rightarrow mk}^{(\ell)}(t)$, since $\mu_{nk}^{(\ell)}(x_{nk}^{(\ell)},t)$ is slightly different from $\mu_{n\rightarrow  mk}^{(\ell)}(x_{nk}^{(\ell)},t)$ and further $\hat{x}_{n\rightarrow mk}^{(\ell)}(t)$ has the same order as $\hat{x}_{nk}^{(\ell)}(t)$. Besides, the item $v_{k\leftarrow mn}^{(h,\ell)}(t)v_{n\rightarrow mk}^{(x,\ell)}(t)+|\hat{h}_{k\leftarrow mn}^{(\ell)}(t)|^2v_{n\rightarrow mk}^{(x,\ell)}(t)$ is ignored due to infinitesimal items $v_{k\leftarrow mn}^{(h,\ell)}(t)v_{n\rightarrow mk}^{(x,\ell)}(t)$, $|\hat{h}_{k\leftarrow mn}^{(\ell)}(t)|^2$. The remaining variance entries are found in Table~\ref{Table:oder}.

\begin{table}[!t]
\centering
\caption{ML-BiGAMP variable scalings in the large system limit \cite{parker2014bilinear}}
\label{Table:oder}
\begin{tabular}{|l|l||l|l||l|l||}
  \hline
  $\tilde{z}_{mk}^{(\ell)}(t)$                & $\mathcal{O}(1)$ & $\tilde{v}^{(z,\ell)}_{mk}(t)$    &$\mathcal{O}(1)$  & $\hat{x}_{n\rightarrow mk}^{(\ell)}(t)-\hat{x}_{nk}^{(\ell)}(t)$ & $\mathcal{O}(\frac{1}{\sqrt{N_{\ell}}})$ \\
  $\hat{x}_{n\rightarrow mk}^{(\ell)}(t)$   & $\mathcal{O}(1)$ & $v^{(x,\ell)}_{n\rightarrow mk}(t)$ &$\mathcal{O}(1)$  & $|\hat{x}_{n\rightarrow mk}^{(\ell)}(t)|^2-|\hat{x}_{nk}^{(\ell)}(t)|^2$ & $\mathcal{O}(\frac{1}{\sqrt{N_{\ell}}})$ \\
  $\hat{x}_{nk}^{(\ell)}(t)$   & $\mathcal{O}(1)$ & $v_{nk}^{(x,\ell)}(t)$ &$\mathcal{O}(1)$  & $v^{(x,\ell)}_{n\rightarrow mk}(t)-v^{(x,\ell)}_{nk}(t)$ & $\mathcal{O}(\frac{1}{\sqrt{N_{\ell}}})$ \\
  $\hat{h}_{k\leftarrow mn}^{(\ell)}(t)$   & $\mathcal{O}(\frac{1}{\sqrt{N_{\ell}}})$ & $v_{k\leftarrow mn}^{(h,\ell)}(t)$ &$\mathcal{O}(\frac{1}{N_{\ell}})$  & $\hat{h}_{k\leftarrow mn}^{(\ell)}(t)-\hat{h}_{mn}^{(\ell)}(t)$ & $\mathcal{O}(\frac{1}{N_{\ell}})$ \\
  $\hat{h}_{ mn}^{(\ell)}(t)$   & $\mathcal{O}(\frac{1}{\sqrt{N_{\ell}}})$ & $v_{mn}^{(h,\ell)}(t)$ &$\mathcal{O}(\frac{1}{N_{\ell}})$  & $|\hat{h}_{k\leftarrow mn}^{(\ell)}(t)|^2-|\hat{h}_{mn}^{(\ell)}(t)|^2$ & $\mathcal{O}(\frac{1}{(N_{\ell})^{3/2}})$ \\
  $Z_{mk}^{(\ell)}(t)$   & $\mathcal{O}(1)$ & $V_{mk}^{(\ell)}(t)$ &$\mathcal{O}(1)$  & $v_{k\leftarrow mn}^{(h,\ell)}(t)-v_{mn}^{(h,\ell)}(t)$ & $\mathcal{O}(\frac{1}{(N_{\ell})^{3/2}})$ \\
  $R_{nk}^{(x,\ell)}(t)$   & $\mathcal{O}(1)$ & $\Sigma_{nk}^{(x,\ell)}(t)$ &$\mathcal{O}(1)$  &  &  \\
  $R_{mn}^{(h,\ell)}(t)$   & $\mathcal{O}(\frac{1}{\sqrt{N_{\ell}}})$ & $\Sigma_{mn}^{(h,\ell)}(t)$ &$\mathcal{O}(\frac{1}{N_{\ell}})$  & & \\
  $\hat{s}_{mk}^{(\ell)}(t)$     & $\mathcal{O}(1)$ & $v_{mk}^{(s,\ell)}(t)$ &$\mathcal{O}(1)$  & & \\
  \hline
\end{tabular}
\end{table}

We further apply Taylor series expansion\footnote{
$f(x+\triangle x,y+\triangle y)\approx f(x,y)+\triangle xf'(x,y)+\triangle y\dot{f}(x,y)+\frac{|\triangle x|^2}{2}f''(x,y)+o.$
} to logarithm of message $\mu_{n\leftarrow mk}^{(\ell)}(x_{nk}^{(\ell)},t)$
\begin{align}
\nonumber
&\log \mu_{n\leftarrow mk}^{(\ell)}(x_{nk}^{(\ell)},t)\\
\nonumber
&\approx \text{const}+H_{mk}^{(\ell)}\left(Z_{mk}^{(\ell)}(t),V_{mk}^{(\ell)}(t)\right)\\
\nonumber
&\quad +\hat{h}_{k\leftarrow mn}^{(\ell)}(t)(x_{nk}^{(\ell)}-\hat{x}_{nk}^{(\ell)}(t))H_{mk}'^{(\ell)}\left(Z_{mk}^{(\ell)}(t),V_{mk}^{(\ell)}(t)\right)\\
\nonumber
&\quad +\frac{|\hat{h}_{k\leftarrow mn}^{(\ell)}(t)|^2|x_{nk}^{(\ell)}-\hat{x}_{nk}^{(\ell)}(t)|^2}{2}H_{mk}''^{(\ell)}\left(Z_{mk}^{(\ell)}(t),V_{mk}^{(\ell)}(t)\right)\\
&\quad +v_{k\leftarrow mn}^{(h,\ell)}(t)(|x_{nk}^{(\ell)}|^2-|\hat{x}_{n\rightarrow mk}^{(\ell)}(t)|^2)\dot{H}_{mk}^{(\ell)}\left(Z_{mk}^{(\ell)}(t),V_{mk}^{(\ell)}(t)\right)\\
\nonumber
&=\text{const}+x_{nk}^{(\ell)}\left[\hat{h}_{k\leftarrow mn}^{(\ell)}(t)H_{mk}'^{(\ell)}\left(Z_{mk}^{(\ell)}(t),V_{mk}^{(\ell)}(t)\right)+|\hat{h}_{mn}^{(\ell)}(t)|^2\hat{x}_{nk}^{(\ell)}(t)H_{mk}''^{(\ell)}\left(Z_{mk}^{(\ell)}(t),V_{mk}^{(\ell)}(t)\right)\right]\\
&\quad +|x_{nk}^{(\ell)}|^2\left[
\frac{1}{2}|\hat{h}_{ mn}^{(\ell)}(t)|^2H_{mk}''^{(\ell)}\left(Z_{mk}^{(\ell)}(t),V_{mk}^{(\ell)}(t)\right)+
v_{mn}^{(h,\ell)}(t)\dot{H}_{mk}^{(\ell)}\left(Z_{mk}^{(\ell)}(t),V_{mk}^{(\ell)}(t)\right)\right],
\end{align}
where $H_{mk}'^{(\ell)}(\cdot)$ and $H_{mk}''^{(\ell)}(\cdot)$ are first and second order  partial derivation w.r.t. first argument and $\dot{H}_{mk}^{(\ell)}(\cdot)$ is first order partial derivation w.r.t. second argument.

With the facts\footnote{
Defining the mean and variance of distribution $\mathcal{P}(x|a,A)=\frac{f(x)\mathcal{N}(x|a,A)}{\int f(x)\mathcal{N}(x|a,A)\text{d}x}$ as $\mathbb{E}[x]$ and $\text{Var}[x]$, where $f(x)$ is bound and non-negative  function, we have$
\frac{\partial \log f(x)\mathcal{N}(x|a,A)}{\partial a}=\frac{\mathbb{E}[x]-a}{A}
, \ \frac{\partial^2 \log f(x)\mathcal{N}(x|a,A)}{\partial a^2}=\frac{\text{Var}[x]-A}{A^2}$, and $
\frac{\partial \log f(x)\mathcal{N}(x|a,A)}{\partial A}=\frac{1}{2}\left[\left|\frac{\partial \log \mathcal{P}(x|a,A)}{\partial a}\right|^2+\frac{\partial^2\log \mathcal{P}(x|a,A)}{\partial a^2}\right]
$
}
, the message $\mu_{n\leftarrow mk}^{(\ell)}(x_{nk}^{(\ell)},t)$ is approximated by following Gaussian distribution
\begin{align}
\nonumber
\mu_{n\leftarrow mk}^{(\ell)}(x_{nk}^{(\ell)},t)
&\approx \mathcal{N}_c\left(x_{n\ell}^{(\ell)}|
\frac{\hat{h}_{k\leftarrow mn}^{(\ell)}(t)\hat{s}_{m\ell}(t)+|\hat{h}_{mn}^{(\ell)}(t)|^2\hat{x}_{nk}^{(\ell)}(t)v^{(s,\ell)}_{mk}(t)}
{|\hat{h}_{mn}^{(\ell)}(t)|^2v^{(s,\ell)}_{mk}(t)-v_{mn}^{(h,\ell)}(t)(|\hat{s}_{mk}^{(\ell)}(t)|^2-v^{(s,\ell)}_{mk}(t))},
\right.\\
&\qquad \qquad \quad  \left.\frac{1}{|\hat{h}_{mn}^{(\ell)}(t)|^2v^{(s,\ell)}_{mk}(t)-v_{mn}^{(h,\ell)}(t)(|\hat{s}_{mk}^{(\ell)}(t)|^2-v^{(s,\ell)}_{mk}(t))}\right),
\end{align}
where
\begin{align}
\hat{s}_{mk}^{(\ell)}(t)&=H_{mk}'^{(\ell)}\left(Z_{mk}^{(\ell)}(t),V_{mk}^{(\ell)}(t)\right)=\frac{\tilde{z}_{mk}^{(\ell)}(t)-Z_{mk}^{(\ell)}(t)}{V_{mk}^{(\ell)}(t)},\\
v^{(s,\ell)}_{mk}(t)&=-H_{mk}''^{(\ell)}\left(Z_{mk}^{(\ell)}(t),V_{mk}^{(\ell)}(t)\right)=\frac{1}{V_{mk}^{(\ell)}}\left(1-\frac{\tilde{v}_{mk}^{(\ell)}(t)}{V_{mk}^{(\ell)}(t)}\right),
\end{align}
with $\tilde{z}_{mk}^{(\ell)}(t)$ and $\tilde{v}_{mk}^{(\ell)}(t)$ defined as the mean and variance of random variable (RV) $\zeta_{mk}^{(\ell)}(t)$ drawn by
\begin{align}
\zeta_{mk}^{(\ell)}(t) \sim \frac{\int \mathcal{P}\left(x_{mk}^{(\ell+1)}|z_{mk}^{(\ell)}\right)\mathcal{N}(z_{mk}^{(\ell)}|Z_{mk}^{(\ell)}(t),V_{mk}^{(\ell)}(t))\mu_{m\leftarrow mk}^{(\ell+1)}(x_{mk}^{(\ell+1)},t)\text{d}x_{mk}^{(\ell+1)}}
{ \int \mathcal{P}\left(x_{mk}^{(\ell+1)}|z_{mk}^{(\ell)}\right)\mathcal{N}(z_{mk}^{(\ell)}|Z_{mk}^{(\ell)}(t),V_{mk}^{(\ell)}(t))\mu_{m\leftarrow mk}^{(\ell+1)}(x_{mk}^{(\ell+1)},t)\text{d}x_{mk}^{(\ell+1)}\text{d}z_{mk}^{(\ell)}}.
\label{Equ:posz}
\end{align}

Note that the message $\mu_{m\leftarrow mk}^{(\ell+1)}(x_{mk}^{(\ell+1)},t)$ in (\ref{Equ:Mu-mk}) is the  product of a large number of Gaussian distributions. Based on the Gaussian reproduction property\footnote{
$\mathcal{N}(x|a,A)\mathcal{N}(x|b,B)=\mathcal{N}(0|a-b,A+B)\mathcal{N}(x|c,C)$ with $C=(A^{-1}+B^{-1})^{-1}$ and $c=C\cdot(\frac{a}{A}+\frac{b}{B})$.
}, we obtain
\begin{align}
\mu_{m\leftarrow mk}^{(\ell+1)}(x_{mk}^{(\ell+1)},t)&\propto  \mathcal{N}_c(x_{mk}^{(\ell+1)}|R_{mk}^{(x,\ell+1)}(t),\Sigma_{mk}^{(x,\ell+1)}(t)),
\end{align}
where
\begin{align}
\Sigma_{mk}^{(x,\ell+1)}(t)&=\left(\sum_{p=1}^{N_{\ell+2}}\frac{1}{v^{(x,\ell+1)}_{m\leftarrow pk}(t)}\right)^{-1},\\
R_{mk}^{(x,\ell+1)}(t)&=\Sigma_{mk}^{(x,\ell+1)}(t)\left(\sum_{p=1}^{N_{\ell+2}}\frac{\hat{x}_{m\leftarrow pk}^{(\ell+1)}(t)}{v^{(x,\ell+1)}_{m\leftarrow pk}(t)}\right),
\end{align}
with $\hat{x}_{m\leftarrow pk}^{(\ell)}(t)$ and $v^{(x,\ell)}_{m\leftarrow pk}(t)$ being the mean and variance of $\mu^{(\ell+1)}_{m\leftarrow pk}(x_{mk}^{(\ell+1)},t)$ respectively.

We then update the expression of $\tilde{z}_{mk}^{(\ell)}(t)$ and $\tilde{v}_{mk}^{(\ell)}(t)$
\begin{align}
\tilde{z}_{mk}^{(\ell)}(t)&=\mathbb{E}\left[\zeta_{mk}^{(\ell)}(t)\right],\\
\tilde{v}_{mk}^{(\ell)}(t)&=\text{Var}\left[\zeta_{mk}^{(\ell)}(t)\right],
\end{align}
where the expectation is taken over
\begin{align}
\zeta_{mk}^{(\ell)}(t)\sim
\frac{\int \mathcal{P}\left(x_{mk}^{(\ell+1)}|z_{mk}^{(\ell)}\right)\mathcal{N}(z_{mk}^{(\ell)}|Z_{mk}^{(\ell)}(t),V_{mk}^{(\ell)}(t))\mathcal{N}_c(x_{mk}^{(\ell+1)}|R_{mk}^{(x,\ell+1)}(t),\Sigma_{mk}^{(x,\ell+1)}(t))\text{d}x_{mk}^{(\ell+1)}}
{ \int \mathcal{P}\left(x_{mk}^{(\ell+1)}|z_{mk}^{(\ell)}\right)\mathcal{N}(z_{mk}^{(\ell)}|Z_{mk}^{(\ell)}(t),V_{mk}^{(\ell)}(t))\mathcal{N}_c(x_{mk}^{(\ell+1)}|R_{mk}^{(x,\ell+1)}(t),\Sigma_{mk}^{(x,\ell+1)}(t))\text{d}x_{mk}^{(\ell+1)}\text{d}z_{mk}^{(\ell)}}.
\label{Pz_A}
\end{align}
Specially, as $\ell=L$, we have $\mu_{m\leftarrow mk}^{(\ell+1)}(x_{mk}^{(\ell+1)},t)=1$ and further
\begin{align}
\zeta_{mk}^{(\ell)}(t)\sim \frac{ \mathcal{P}\left(y_{mk}|z_{mk}^{(\ell)}\right)\mathcal{N}(z_{mk}^{(\ell)}|Z_{mk}^{(\ell)}(t),V_{mk}^{(\ell)}(t))}
{ \int \mathcal{P}\left(y_{mk}|z_{mk}^{(\ell)}\right)\mathcal{N}(z_{mk}^{(\ell)}|Z_{mk}^{(\ell)}(t),V_{mk}^{(\ell)}(t))\text{d}z_{mk}^{(\ell)}},
\label{Pz_B}
\end{align}

Similar to simplifying $\mu_{n\leftarrow mk}^{(\ell)}(x_{nk}^{(\ell)},t)$, we approximate the message $\mu_{k\rightarrow mn}^{(\ell)}(h_{mn}^{(\ell)},t)$ as below
\begin{align}
\nonumber
\mu_{k\rightarrow mn}^{(\ell)}(h_{mn}^{(\ell)},t)
&\approx  \mathcal{N}_c\left(h_{mn}^{(\ell)}|
\frac{\hat{x}_{n\rightarrow  mk}^{(\ell)}(t)\hat{s}_{m\ell}(t)+|\hat{x}_{nk}^{(\ell)}(t)|^2\hat{h}_{mn}^{(\ell)}(t)v^{(s,\ell)}_{mk}(t)}
{|\hat{x}_{nk}^{(\ell)}(t)|^2v_{mk}^{(s,\ell)}(t)-v_{nk}^{(x,\ell)}(t)(|\hat{s}_{mk}^{(\ell)}(t)|^2-v^{(s,\ell)}_{mk}(t))},
\right.\\
&\qquad \qquad \quad  \left.\frac{1}{|\hat{x}_{nk}^{(\ell)}(t)|^2v_{mk}^{(s,\ell)}(t)-v_{nk}^{(x,\ell)}(t)(|\hat{s}_{mk}^{(\ell)}(t)|^2-v^{(s,\ell)}_{mk}(t))}\right).
\end{align}

For message $\mu_{n\rightarrow nk}^{(\ell-1)}(x_{nk}^{(\ell)},t+1)$ in (\ref{Equ:Mnnk}), we have
\begin{align}
\mu_{n\rightarrow nk}^{(\ell-1)}(x_{nk}^{(\ell)},t+1)&\propto \int \mathcal{P}\left(x_{nk}^{(\ell)}|z_{nk}^{(\ell-1)}\right)
\mathbb{E}\left[\delta\left(z_{nk}^{(\ell-1)}-\sum_{r=1}^{N_{\ell-1}}h_{nr}^{(\ell-1)}x_{rk}^{(\ell-1)}\right)\right]\text{d}z_{nk}^{(\ell-1)},
\end{align}
with expectation over $\prod_{r=1}^{N_{\ell-1}}\mu_{k\leftarrow nr}^{(\ell-1)}(h_{nr}^{(\ell-1)},t+1)\prod_{r=1}^{N_{\ell-1}}\mu_{r\rightarrow nk}^{(\ell-1)}(x_{rk}^{(\ell-1)},t+1)$. Applying PDF-to-RV lemma and CLT, we get
\begin{align}
\mu_{n\rightarrow nk}^{(\ell-1)}(x_{nk}^{(\ell)},t+1)\approx \int \mathcal{P}\left(x_{nk}^{(\ell)}|z_{nk}^{(\ell-1)}\right)\mathcal{N}_c(z_{nk}^{(\ell-1)}|Z_{nk}^{(\ell-1)}(t+1),V_{nk}^{(\ell-1)}(t+1))\text{d}z_{nk}^{(\ell-1)},
\end{align}
where the definitions of $Z_{nk}^{(\ell-1)}(t+1)$ and $V_{nk}^{(\ell-1)}(t+1)$ are found in (\ref{Equ:Z}) and (\ref{Equ:V}) respectively.

\subsection{Approximate variable-to-factor node messages}
We now move to the simplifying of messages from variable node to factor node. By Gaussian reproduction lemma, the Gaussian product item in message $\mu_{n\rightarrow  mk}^{(\ell)}(x_{nk}^{(\ell)},t+1)$ is as blow
\begin{align}
\prod_{s\ne m}^{N_{\ell+1}}\mu_{n\leftarrow sk}^{(\ell)}(x_{nk}^{(\ell)},t)\propto \mathcal{N}\left(x_{nk}|R_{nk\backslash m}^{(x,\ell)}(t),\Sigma_{nk\backslash m}^{(x,\ell)}(t)\right),
\end{align}
where
\begin{align}
\Sigma_{nk\backslash m}^{(x,\ell)}(t)
&=\left(\sum_{r\ne m}\frac{1}{v_{n\leftarrow rk}^{(x,\ell)}(t)}\right)^{-1}\\
&=\left(\sum_{r\ne m}|\hat{h}_{rn}^{(\ell)}(t)|^2v_{rk}^{(s,\ell)}(t)-v_{rn}^{(h,\ell)}(t)(|\hat{s}_{rk}^{(\ell)}(t)|^2-v^{(s,\ell)}_{rk}(t))\right)^{-1},\\
R_{nk\backslash m}^{(x,\ell)}(t)
&=\Sigma_{nk\backslash m}^{(x,\ell)}(t)\left(\sum_{r\ne m}\frac{\hat{x}_{n\leftarrow rk}^{(\ell)}(t)}{v_{n\leftarrow rk}^{(x,\ell)}(t)}\right)\\
&=\frac{\sum_{r\ne m}\hat{h}_{k\leftarrow rn}^{(\ell)}(t)\hat{s}_{r\ell}(t)+|\hat{h}_{rn}^{(\ell)}(t)|^2\hat{x}_{nk}^{(\ell)}(t)v^{(s,\ell)}_{rk}(t)}
{\sum_{r\ne m}|\hat{h}_{rn}^{(\ell)}(t)|^2v_{rk}^{(s,\ell)}(t)-v_{rn}^{(h,\ell)}(t)(|\hat{s}_{rk}^{(\ell)}(t)|^2-v^{(s,\ell)}_{rk}(t))}\\
\nonumber
&=\hat{x}_{nk}^{(\ell)}(t)\frac{\sum_{r\ne m}|\hat{h}_{rn}^{(\ell)}(t)|^2v^{(s,\ell)}_{rk}(t)}
{\sum_{r\ne m}|\hat{h}_{rn}^{(\ell)}(t)|^2v_{rk}^{(s,\ell)}(t)-v_{rn}^{(h,\ell)}(t)(|\hat{s}_{rk}^{(\ell)}(t)|^2-v^{(s,\ell)}_{rk}(t))}\\
&\qquad +\frac{\sum_{r\ne m}\hat{h}_{k\leftarrow rn}^{(\ell)}(t)\hat{s}_{r\ell}(t)}
{\sum_{r\ne m}|\hat{h}_{rn}^{(\ell)}(t)|^2v_{rk}^{(s,\ell)}(t)-v_{rn}^{(h,\ell)}(t)(|\hat{s}_{rk}^{(\ell)}(t)|^2-v^{(s,\ell)}_{rk}(t))}\\
&=\hat{x}_{nk}^{(\ell)}(t)\left[1+\Sigma_{nk\backslash m}^{(x,\ell)}(t)\sum_{r\ne m}v_{rn}^{(h,\ell)}(t)(|\hat{s}_{rk}^{(\ell)}(t)|^2-v^{(s,\ell)}_{rk}(t))\right]+\Sigma_{nk\backslash m}^{(x,\ell)}(t)\sum_{r\ne m}\hat{h}_{k\leftarrow rn}^{(\ell)}(t)\hat{s}_{rk}(t).
\end{align}

For easy of notation, we define
\begin{align}
g_{nk}^{(\ell)}(a,A)
&=\frac{1}{C}  \int x_{nk}^{(\ell)} \mathcal{P}\left(x_{nk}^{(\ell)}|z_{nk}^{(\ell-1)}\right)\mathcal{N}_c(z_{nk}^{(\ell-1)}|Z_{nk}^{(\ell-1)}(t+1),V_{nk}^{(\ell-1)}(t+1))\mathcal{N}(x_{nk}^{(\ell)}|a,A)\text{d}z_{nk}^{(\ell-1)}\text{d}x_{nk}^{(\ell)},
\end{align}
where $C$ is a normalization constant.  Accordingly, the mean and variance of $\mu_{n\rightarrow  mk}^{(\ell)}(x_{nk}^{(\ell)},t+1)$ are given
\begin{align}
\hat{x}_{n\rightarrow mk}^{(\ell)}(t+1)&=g_{nk}^{(\ell)}\left(R_{nk\backslash m}^{(x,\ell)}(t),\Sigma_{nk\backslash m}^{(x,\ell)}(t)\right),\\
v^{(x,\ell)}_{n\rightarrow mk}(t+1)&=\Sigma_{nk\backslash m}^{(x,\ell)}(t)g_{nk}'^{(\ell)}\left(R_{nk\backslash m}^{(x,\ell)}(t),\Sigma_{nk\backslash m}^{(x,\ell)}(t)\right),
\end{align}
where the last equation holds by the property of exponential family
 \footnote{
Given a distribution $\mathcal{P}(x|a,A)=\frac{f(x)\mathcal{N}(x|a,A)}{\int f(x)\mathcal{N}(x|a,A)\text{d}x}$, we have $\frac{\partial }{\partial a}\int x\mathcal{P}(x|a,A)\text{d}x=\frac{1}{A}\int (x-\hat{x})^2\mathcal{P}(x|a,A)\text{d}x$ with $\hat{x}=\int x\mathcal{P}(x|a,A)\text{d}x$.
} and $g_{nk}'^{(\ell)}(R_{nk\backslash m}^{(x,\ell)}(t),\Sigma_{nk\backslash m}^{(x,\ell)}(t))$ is the partial derivation w.r.t. the first argument.

One could see that there is only slight difference between $\mu_{n\rightarrow  mk}^{(\ell)}(x_{nk}^{(\ell)},t+1)$ and belief distribution $\mu_{nk}^{(\ell)}(x_{nk}^{(\ell)},t+1)$. To fix this gap, we define
\begin{align}
\Sigma_{nk}^{(x,\ell)}(t)&=\left(\sum_{r=1}^{N_{\ell+1}}|\hat{h}_{rn}^{(\ell)}(t)|^2v_{rk}^{(s,\ell)}(t)-v_{rn}^{(h,\ell)}(t)(|\hat{s}_{rk}^{(\ell)}(t)|^2-v^{(s,\ell)}_{rk}(t))\right)^{-1},\\
R_{nk}^{(x,\ell)}(t)&=\hat{x}_{nk}^{(\ell)}(t)\left[1+\Sigma_{nk}^{(x,\ell)}(t)\sum_{r=1}^{N_{\ell+1}}v_{rn}^{(h,\ell)}(t)(|\hat{s}_{rk}^{(\ell)}(t)|^2-v^{(s,\ell)}_{rk}(t))\right]+\Sigma_{nk}^{(x,\ell)}(t)\sum_{r=1}^{N_{\ell+1}}\hat{h}_{k\leftarrow rn}^{(\ell)}(t)\hat{s}_{rk}(t),
\end{align}
Accordingly,  we define RV $\xi_{nk}^{(x,\ell)}(t+1)$ following  $\mu_{nk}^{(\ell)}(x_{nk}^{(\ell)},t+1)$ i.e.,
\begin{align}
\nonumber
&\xi_{nk}^{(x,\ell)}(t+1)\sim\\
 &\frac{\int \mathcal{P}\left(x_{nk}^{(\ell)}|z_{nk}^{(\ell-1)}\right)\mathcal{N}(z_{nk}^{(\ell-1)}|Z_{nk}^{(\ell-1)}(t+1),V_{nk}^{(\ell)}(t+1))\mathcal{N}(x_{nk}^{(\ell)}|R_{nk}^{(x,\ell)}(t),\Sigma_{nk}^{(x,\ell)}(t))\text{d}z_{nk}^{(\ell-1)}}
{ \int \mathcal{P}\left(x_{mk}^{(\ell)}|z_{nk}^{(\ell-1)}\right)\mathcal{N}(z_{nk}^{(\ell-1)}|Z_{nk}^{(\ell-1)}(t+1),V_{nk}^{(\ell)}(t+1))\mathcal{N}(x_{nk}^{(\ell)}|R_{nk}^{(x,\ell)}(t),\Sigma_{nk}^{(x,\ell)}(t))\text{d}x_{nk}^{(\ell)}\text{d}z_{nk}^{(\ell-1)}}.
\end{align}
Specially, for $\ell=1$,  it becomes
\begin{align}
\xi_{nk}^{(x,1)}(t+1)\sim \frac{\mathcal{P}(x_{nk})\mathcal{N}(x_{nk}|R_{nk}^{(x,1)}(t),\Sigma_{nk}^{(x,1)}(t))}
{\int \mathcal{P}(x_{nk})\mathcal{N}(x_{nk}|R_{nk}^{(x,1)}(t),\Sigma_{nk}^{(x,1)}(t)) \text{d}x}.
\end{align}
The mean and variance of RV $\xi_{nk}^{(x,\ell)}(t+1)$ can be represented as
\begin{align}
\hat{x}_{nk}^{(\ell)}(t+1)&=g_{nk}^{(\ell)}(R_{nk}^{(x,\ell)}(t),\Sigma_{nk}^{(x,\ell)}(t)),\\
v_{nk}^{(x,\ell)}(t+1)&=\Sigma_{nk}^{(x,\ell)}(t)g_{nk}'^{(\ell)}(R_{nk}^{(x,\ell)}(t),\Sigma_{nk}^{(x,\ell)}(t)).
\end{align}

Using first-order Taylor series expansion we have
\begin{align}
\hat{x}_{n\rightarrow mk}^{(\ell)}(t+1)
&\approx g_{nk}^{(\ell)}(R_{nk}^{(x,\ell)}(t),\Sigma_{nk}^{(x,\ell)}(t))
-\Sigma_{nk}^{(x,\ell)}(t)\hat{h}_{mn}^{(\ell)}(t)\hat{s}_{mk}^{(\ell)}(t)g_{nk}'^{(\ell)}(R_{nk}^{(x,\ell)}(t),\Sigma_{nk}^{(x,\ell)}(t))\\
&=\hat{x}_{mk}^{(\ell)}(t+1)
-\hat{h}_{mn}^{(\ell)}(t)\hat{s}_{mk}^{(\ell)}(t)v_{nk}^{(x,\ell)}(t+1),
\label{Equ:X}
\end{align}
where the item $v_{mn}^{(h,\ell)}(t)(|\hat{s}_{mk}^{(\ell)}(t)|^2-v^{(s,\ell)}_{mk}(t))$ is ignored since $v_{mn}^{(h,\ell)}(t)$ is $\mathcal{O}(1/N_{\ell})$ and the item $\hat{h}_{mn}^{(\ell)}(t)$ is replaced by $\hat{h}_{k\leftarrow mn}^{(\ell)}(t)$ since $\hat{h}_{mn}^{(\ell)}(t)$ has the same order as $\hat{h}_{k\leftarrow mn}^{(\ell)}(t)$.

Likewise, applying first-order Taylor series expansion to  $v^{(x,\ell)}_{n\rightarrow mk}(x_{nk}^{(\ell)},t+1)$ and ignoring the high order items, we have
\begin{align}
v^{(x,\ell)}_{n\rightarrow mk}(t+1)
&\approx v^{(x,\ell)}_{nk}(t+1).
\label{Equ:XV}
\end{align}

Similarly, the message $\mu_{k\rightarrow mn}(h_{mn}^{(\ell)},t)$ is approximated with the mean and variance as
\begin{align}
\hat{h}^{(\ell)}_{k\leftarrow  mn}(t+1)&\approx  \hat{h}_{mn}(t+1)-\hat{x}_{nk}^{(\ell)}(t)\hat{s}_{mk}^{(\ell)}(t)v_{mn}^{(h,\ell)}(t+1),
\label{Equ:H}\\
v^{(h,\ell)}_{k\leftarrow mn}(t+1)&\approx v^{(h,\ell)}_{mn}(t+1),
\label{Equ:HV}
\end{align}
where $\hat{h}_{mn}(t+1)$ and $v^{(h,\ell)}_{mn}(t+1)$ are the mean and variance of RV $\xi_{mn}^{(h,\ell)}(t+1)$ following $\mu_{mn}^{(\ell)}(h_{mn}^{(\ell)},t+1)$
\begin{align}
\xi^{(h,\ell)}_{nk}(t+1)\sim \frac{ \mathcal{P}(h_{mn}^{(\ell)})\mathcal{N}\left(h_{mn}^{(\ell)}|R_{mn}^{(h,\ell)}(t),\Sigma_{mn}^{(h,\ell)}(t)\right)}
{\int \mathcal{P}(h_{mn}^{(\ell)})\mathcal{N}\left(h_{mn}^{(\ell)}|R_{mn}^{(h,\ell)}(t),\Sigma_{mn}^{(h,\ell)}(t)\right)\text{d}h_{mn}^{(\ell)}},
\end{align}
where the following definitions are applied
\begin{align}
\Sigma_{mn}^{(h,\ell)}(t)&=\left(\sum_{k=1}^{K}|\hat{x}_{nk}^{(\ell)}(t)|^2v_{mk}^{(s,\ell)}(t)-v_{nk}^{(x,\ell)}(t)(|\hat{s}_{mk}^{(\ell)}(t)|^2-v^{(s,\ell)}_{mk}(t))\right)^{-1},\\
\nonumber
R_{mn}^{(h,\ell)}(t)&=\hat{h}_{mn}^{(\ell)}(t)\left[1+\Sigma_{mn}^{(h,\ell)}(t)\sum_{k=1}^Kv_{nk}^{(x,\ell)}(t)(|\hat{s}_{mk}^{(\ell)}(t)|^2-v^{(s,\ell)}_{mk}(t))\right]\\
&\qquad +\Sigma_{mn}^{(h,\ell)}(t)\sum_{k=1}^{K}\hat{x}_{n\rightarrow mk}^{(\ell)}(t)\hat{s}_{mk}(t).
\end{align}

Summarizing those approximated messages constructs the relaxed belief propagation.  However, there still exist $\mathcal{O}(N_{\ell+1}N_{\ell})$ parameters in each iterations. One way to reduce the number of those parameters is to update the previous steps by the approximated results of $(\hat{h}^{(\ell)}_{k\leftarrow  mn}(t+1),v^{(h,\ell)}_{k\leftarrow  mn}(t+1))$ and $(\hat{x}_{n\rightarrow mk}^{(\ell)}(t+1), v_{n\rightarrow mk}^{(x,\ell)}(t+1))$.

\subsection{Close to loop}
Substituting (\ref{Equ:X}) and (\ref{Equ:H}) into (\ref{Equ:Z}) yields
\begin{align}
\nonumber
Z_{mk}^{(\ell)}(t)&=\sum_{n=1}^{N_{\ell}}\left(\hat{h}_{mn}(t)+\hat{x}_{nk}^{(\ell)}(t-1)\hat{s}_{mk}^{(\ell)}(t-1)v_{mn}^{(h,\ell)}(t)\right)\\
&\qquad \times \left(\hat{x}_{mk}^{(\ell)}(t)
+\hat{h}_{mn}^{(\ell)}(t-1)\hat{s}_{mk}^{(\ell)}(t-1)v_{nk}^{(x,\ell)}(t)\right)\\
\nonumber
&=\underbrace{\sum_{n=1}^{N_{\ell}}\hat{h}_{mn}(t)\hat{x}_{mk}^{(\ell)}(t)}_{\overset{\triangle}{=}\overline{Z}_{mk}^{(\ell)}(t)}-\hat{s}_{mk}^{(\ell)}(t-1)\sum_{n=1}^{N_{\ell}}\left[
\hat{x}_{mk}^{(\ell)}(t)\hat{x}_{nk}^{(\ell)}(t-1)v_{mn}^{(h,\ell)}(t)+
\hat{h}_{mn}^{(\ell)}(t)
\hat{h}_{mn}^{(\ell)}(t-1)v_{nk}^{(x,\ell)}(t)
\right]\\
&\qquad +|\hat{s}_{mk}^{(\ell)}(t-1)|^2\sum_{n=1}^{N_{\ell}}\hat{h}_{mn}^{(\ell)}(t-1)v_{nk}^{(x,\ell)}(t)\hat{x}_{nk}^{(\ell)}(t-1)v_{mn}^{(h,\ell)}(t)\\
&\approx \overline{Z}_{mk}^{(\ell)}(t)+\hat{s}_{mk}^{(\ell)}(t-1)\underbrace{\sum_{n=1}^{N_{\ell}}\left[
|\hat{x}_{nk}^{(\ell)}(t)|^2v_{mn}^{(h,\ell)}(t)+|\hat{h}_{mn}^{(\ell)}(t)|^2v_{nk}^{(x,\ell)}(t)
\right]}_{\overset{\triangle}{=}\overline{V}_{mk}^{(\ell)}(t)},
\end{align}
where we use $|\hat{x}_{nk}^{(\ell)}(t)|^2$ to replace $\hat{x}_{mk}^{(\ell)}(t)\hat{x}_{nk}^{(\ell)}(t-1)$, apply $|\hat{h}_{mn}^{(\ell)}(t)|^2$ to replace $\hat{h}_{mn}^{(\ell)}(t)
\hat{h}_{mn}^{(\ell)}(t-1)$, and neglect the infinitesimal terms relative to the remaining terms.

Next we plug (\ref{Equ:XV}) and (\ref{Equ:HV}) into (\ref{Equ:V}) and get
\begin{align}
\nonumber
V_{mk}^{(\ell)}(t)
&=\sum_{n=1}^{N_{\ell}}v_{mn}^{(h,\ell)}(t)v_{nk}^{(x,\ell)}(t)+v_{nk}^{(x,\ell)}(t)\sum_{n=1}^{N_{\ell}}[\hat{h}_{mn}(t)+\hat{x}_{nk}^{(\ell)}(t-1)\hat{s}_{mk}^{(\ell)}(t-1)v_{mn}^{(h,\ell)}(t)]^2\\
&\qquad +v_{mn}^{(h,\ell)}(t)\sum_{n=1}^{N_{\ell}}[\hat{x}_{mk}^{(\ell)}(t)
+\hat{h}_{mn}^{(\ell)}(t-1)\hat{s}_{mk}^{(\ell)}(t-1)v_{nk}^{(x,\ell)}(t)]^2\\
\nonumber
&=\overline{V}_{mk}^{(\ell)}(t)+\sum_{n=1}^{N_{\ell}}v_{mn}^{(h,\ell)}(t)v_{nk}^{(x,\ell)}(t)\\
\nonumber
&
\qquad -2\hat{s}^{(\ell)}_{mk}(t-1)\sum_{n=1}^{N_{\ell}}\left[v_{nk}^{(x,\ell)}(t)\hat{h}_{mn}(t)\hat{x}_{nk}^{(\ell)}(t-1)v_{mn}^{(h,\ell)}(t)+v_{mn}^{(h,\ell)}(t)\hat{x}_{mk}^{(\ell)}(t)
\hat{h}_{mn}^{(\ell)}(t-1)v_{nk}^{(x,\ell)}(t)\right]\\
\nonumber
&
\qquad +|\hat{s}^{(\ell)}_{mk}(t-1)|^2\sum_{n=1}^{N_{\ell}}\left[v_{nk}^{(x,\ell)}(t)|\hat{x}_{nk}^{(\ell)}(t-1)|^2(v_{mn}^{(h,\ell)}(t))^2\right.\\
&\qquad \qquad \qquad \qquad \qquad +\left.v_{mn}^{(h,\ell)}(t)|\hat{h}_{mn}^{(\ell)}(t-1)|^2(v_{nk}^{(x,\ell)}(t))^2\right]\\
&\approx \overline{V}_{mk}^{(\ell)}(t)+\sum_{n=1}^{N_{\ell}}v_{mn}^{(h,\ell)}(t)v_{nk}^{(x,\ell)}(t),
\end{align}
where only $\mathcal{O}(1)$ items are remained.

We then simplify  $\Sigma_{mn}^{(h,\ell)}(t)$ and $\Sigma_{nk}^{(x,\ell)}(t)$ as
\begin{align}
\Sigma_{nk}^{(x,\ell)}(t)&\approx \left(\sum_{m=1}^{N_{\ell+1}}|\hat{h}_{mn}^{(\ell)}(t)|^2v_{mk}^{(s,\ell)}(t)\right)^{-1},\\
\Sigma_{mn}^{(h,\ell)}(t)&\approx\left(\sum_{k=1}^{K}|\hat{x}_{nk}^{(\ell)}(t)|^2v_{mk}^{(s,\ell)}(t)\right)^{-1},
\end{align}
where the items $\sum_{m=1}^{N_{\ell+1}}v_{mn}^{(h,\ell)}(t)(|\hat{s}_{mk}^{(\ell)}(t)|^2-v^{(s,\ell)}_{mk}(t))$ and  $\sum_{k=1}^{K}v_{nk}^{(x,\ell)}(t)(|\hat{s}_{mk}^{(\ell)}(t)|^2-v^{(s,\ell)}_{mk}(t))$ are neglected (detail please finds in Appendix \ref{Appendix:D}). When keeping those items yields message passing related \cite{Kabashima2016Phase}.

With approximations above, we simplify $R_{mn}^{(h,\ell)}(t)$ and $R_{nk}^{(x,\ell)}(t)$ as below
\begin{align}
R_{mn}^{(h,\ell)}(t)&=\hat{h}_{mn}^{(\ell)}(t)\left[1-\Sigma_{mn}^{(h,\ell)}(t)\sum_{k=1}^{K}v^{(x,\ell)}_{nk}(t)v^{(s,\ell)}_{mk}(t)\right]+\Sigma_{mn}^{(h,\ell)}(t)\sum_{k=1}^{K}\hat{x}_{nk}^{(\ell)}(t)\hat{s}_{mk}(t),\\
R_{nk}^{(x,\ell)}(t)&=\hat{x}_{nk}^{(\ell)}(t)\left[1-\Sigma_{nk}^{(x,\ell)}(t)\sum_{m=1}^{N_{\ell+1}}v^{(h,\ell)}_{mn}(t)v_{mk}^{(s,\ell)}(t)\right]+\Sigma_{nk}^{(x,\ell)}(t)\sum_{m=1}^{N_{\ell+1}}\hat{h}_{mn}^{(\ell)}(t)\hat{s}_{mk}(t).
\end{align}

\section{Proof for Proposition 1}
\label{Appendix:B}

\subsection{Simplification to ML-BiGAMP}
The scalar-variance ML-BiGAMP is the pre-condition to derive ML-BiGAMP'SE, where element-wise variances are replaced by scalar variances to reduce the memory and complexity of ML-BiGAMP. To obtain this algorithm, we assume
\begin{align}
v^{(x,\ell)}_{nk}(t)&\approx \frac{1}{N_{\ell}K}\sum_{n=1}^{N_{\ell}}\sum_{k=1}^Kv_{nk}^{(x,\ell)}(t)=\overline{v^{(x,\ell)}(t)},
\label{Equ:vx}\\
v^{(h,\ell)}_{mn}(t)&\approx \frac{1}{N_{\ell+1}N_{\ell}}\sum_{m=1}^{N_{\ell+1}}\sum_{n=1}^{N_{\ell}}v_{mn}^{(h,\ell)}(t)=\overline{v^{(h,\ell)}(t)},
\label{Equ:vrh}\\
\tilde{v}_{mk}^{(\ell)}(t)&\approx \frac{1}{N_{\ell+1}K}\sum_{m=1}^{N_{\ell+1}}\sum_{k=1}^K \tilde{v}_{mk}^{(\ell)}(t)=\overline{\tilde{v}^{(\ell)}(t)}.
\label{Equ:vz}
\end{align}
Based on the approximations above, we simplify the variance parameters in Algorithm 1 as below

\begin{align}
\overline{V}_{mk}^{(\ell)}(t)
&\approx \frac{\overline{v^{(h,\ell)}(t)}}{K}\sum_{n=1}^{N_{\ell}}\sum_{k=1}^{K}|\hat{x}_{nk}^{(\ell)}(t)|^2 +
\frac{\overline{v^{(x,\ell)}(t)}}{N_{\ell+1}}\sum_{m=1}^{N_{\ell+1}}\sum_{n=1}^{N_{\ell}}|\hat{h}_{mn}^{(\ell)}(t)|^2=\overline{V^{(\ell)}(t)},
\\
 V_{mk}^{(\ell)}(t)&\approx \overline{V^{(\ell)}(t)}+N_{\ell}\overline{v^{(x,\ell)}(t)}\cdot \overline{v^{(h,\ell)}(t)}=V^{(\ell)}(t),
 \label{Equ:V}\\
 v_{mk}^{(s,\ell)}(t)&\approx \frac{V^{(\ell)}(t)-\overline{\tilde{v}^{(\ell)}(t)}}{(V^{(\ell)}(t))^2}=v^{(s,\ell)}(t),
 \label{Equ:vs}\\
\Sigma_{nk}^{(x,\ell)}(t)&\approx \left(\frac{v^{(s,\ell)}(t)}{N_{\ell}}\sum_{m=1}^{N_{\ell+1}}\sum_{n=1}^{N_{\ell}}|\hat{h}_{mn}^{(\ell)}(t)|^2\right)^{-1}=\Sigma^{(x,\ell)}(t),
\label{Equ:Sigmax}\\
\Sigma_{mn}^{(h,\ell)}(t)&\approx
\left(\frac{v^{(s,\ell)}(t)}{N_{\ell}}\sum_{n=1}^{N_{\ell}}\sum_{k=1}^K|\hat{x}_{nk}^{(\ell)}(t)|^2\right)^{-1}=\Sigma^{(h,\ell)}(t).
\label{Equ:Sigmah}
\end{align}

To close the loop, we apply those variance parameters to rewrite the mean parameters in Algorithm~1
\begin{align}
R_{nk}^{(x,\ell)}(t)&=\hat{x}_{nk}^{(\ell)}(t)\left[1-N_{\ell+1}\Sigma^{(x,\ell)}(t)v^{(s,\ell)}(t)\overline{v^{(h,\ell)}(t)}\right]  +\Sigma^{(x,\ell)}(t)\sum\nolimits_{m=1}^{N_{\ell+1}}(\hat{h}_{mn}^{(\ell)}(t))^*\hat{s}^{(\ell)}_{mk}(t),\\
R_{mn}^{(h,\ell)}(t)&=\hat{h}_{mn}^{(\ell)}(t)\left[1-K\Sigma^{(h,\ell)}(t)v^{(s,\ell)}(t)\overline{v^{(x,\ell)}(t)}\right] +\Sigma^{(h,\ell)}(t)\sum\nolimits_{k=1}^{K}(\hat{x}_{nk}^{(\ell)}(t))^{*}\hat{s}_{mk}^{(\ell)}(t).
\end{align}
Those simplification results together with the remaining parameters in Algorithm 1 construct the scalar-variance ML-BiGAMP algorithm.

\subsection{Derivation of SE}
Before giving derivation, we introduce the following concepts.
\begin{definition}[Pseudo-Lipschitz function]
For any $k\geq 1$, a function $\varphi(\cdot):\mathbb{R}^p\mapsto \mathbb{R}$ ($p\geq 1$) is pseudo-Lipschitz of order k, if there exists a constant $C>0$ such that for any $\bs{x},\bs{y}\in \mathbb{R}^p$,
\begin{align}
\left|\varphi(\bs{x})-\varphi(\bs{y})\right|\leq C\left(1+\left|\bs{x}\right|^{k-1}+\left|\bs{y}\right|^{k-1}\right)\|\bs{x}-\bs{y}\|.
\end{align}
\end{definition}

\begin{definition}
\label{Asp_1}
Let $\bs{x}=\{\bs{x}_n(N)\}_{n=1}^N$ be a block vector sequence set with $\bs{x}_n(N)\in \mathbb{R}^p$ $(p\geq 1)$. Given $k\geq 1$, $\bs{x}$ converges empirically a random variable $\boldsymbol{\textsf{X}}$ on $\mathbb{R}^p$  with $k$-th order moments if \\
\noindent (i) $\mathbb{E}|\textsf{X}|^k<\infty$; and \\
\noindent (ii) for any pseudo-Lipschitz continuous function $\varphi(\cdot)$ of order $k$,
\begin{align}
\lim_{N\rightarrow \infty}\frac{1}{N}\sum_{n=1}^N\varphi(\bs{x}_n(N))-\mathbb{E}\left\{\varphi(\boldsymbol{\textsf{X}})\right\}\overset{\rm{a.s.}}{\longrightarrow} 0.
\end{align}
Thus, the empirical mean of the components $\varphi(\bs{x}_n(N))$ converges to the expectation $\mathbb{E}\{\varphi(\textsf{X})\}$.  For ease of notation, we write it as $\lim_{N\rightarrow \infty} \{\bs{x}_n(N)\}_{n=1}^N\overset{PL(k)}{=}\textsf{X}$.
\end{definition}

\begin{assumption}
We assume that the mean related parameters $\{y_{mk}, Z_{mk}^{(\ell)}(t),z_{mk}^{(\ell)}$, $R^{(x,\ell)}_{nk}(t),x_{nk}^{(\ell)}, R^{(h,\ell)}_{mn}(t), h_{mn}^{(\ell)}\}$ converge empirically to the following RVs with $2$nd order moments
\begin{align}
\lim_{K,N_{\ell}\rightarrow \infty}\{y_{mk}, Z_{mk}^{(\ell)}(t),z_{mk}^{(\ell)},R^{(x,\ell)}_{nk}(t),x_{nk}^{(\ell)}, R^{(h,\ell)}_{mn}(t), h_{mn}^{(\ell)}\}
\overset{PL(2)}{=} \{\textsf{Y}, \textsf{Z}^{(\ell)}(t),\textsf{z}^{(\ell)}, \textsf{R}^{(x,\ell)}(t),\textsf{X}^{(\ell)}, \textsf{R}^{(h,\ell)}(t),\textsf{H}^{(\ell)}\}.
\label{Equ:em_con}
\end{align}
\end{assumption}

Based on this assumption, we first calculate the asymptotic MSE of iteration-$t$ $\hat{\bs{X}}^{(\ell)}(t)$, for $1<\ell\leq L$, defined as
\begin{align}
\textsf{mse}(\bs{X}^{(\ell)},t)=\lim_{N_{\ell},K\rightarrow \infty}\frac{1}{N_{\ell}K}\|\hat{\bs{X}}^{(\ell)}(t)-\bs{X}^{(\ell)}\|_{\text{F}}^2.
\end{align}
We particularize the \textit{pseudo-Lipchitz} continuous function $g^{(x)}(\cdot)$ and $\varphi^{(x)}(\cdot)$ as
\begin{align}
g^{(x)}(Z_{nk}^{(\ell-1)}(t), R^{(x,\ell)}_{nk}(t-1))&=\hat{x}_{nk}^{(\ell)}(t),\\
\varphi^{(x)}(Z_{nk}^{(\ell-1)}(t), R^{(x,\ell)}_{nk}(t-1))&=v^{(x,\ell)}_{nk}(t),
\end{align}
where $\hat{x}_{nk}^{(\ell)}(t)$ and $v^{(x,\ell)}_{nk}(t)$ are the mean and variance of the approximate posterior distribution $\hat{\mathcal{P}}^t(x_{nk}^{(\ell)}|y)$ found in (\ref{Equ:xi_x}).

Pertaining to the asymptotic MSE of $\hat{\bs{X}}(t)$, we describe the following proposition.

\setcounter{proposition}{+3}
\begin{proposition}
In large system limit, the asymptotic MSE of iteration-$t$ estimator $\hat{\bs{X}}^{(\ell)}(t)$ is identical to $\overline{v^{(x,\ell)}(t)}$  and $\mathbb{E}_{\textsf{Z}^{(\ell-1)}(t), \textsf{R}^{(x,\ell)}(t-1)}\left\{\varphi^{(x)}(\textsf{Z}^{(\ell-1)}(t), \textsf{R}^{(x,\ell)}(t-1))\right\}$ almost sure.
\end{proposition}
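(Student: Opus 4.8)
The plan is to reduce both claimed identities to the empirical-convergence Assumption~\ref{Asp_1}, and then to close the MSE identity through the orthogonality principle for the MMSE (posterior-mean) estimator on the decoupled scalar channel. Throughout I work in the scalar-variance ML-BiGAMP of the previous subsection, so that $v_{nk}^{(x,\ell)}(t)$ and $\hat{x}_{nk}^{(\ell)}(t)$ are exactly the variance $\varphi^{(x)}$ and the mean $g^{(x)}$ of the approximate posterior $\hat{\mathcal{P}}^t(x_{nk}^{(\ell)}|y)$ introduced just above this proposition.

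First I would dispatch the easy equality $\overline{v^{(x,\ell)}(t)}=\mathbb{E}\{\varphi^{(x)}(\cdots)\}$. Since $v_{nk}^{(x,\ell)}(t)=\varphi^{(x)}(Z_{nk}^{(\ell-1)}(t),R^{(x,\ell)}_{nk}(t-1))$ with $\varphi^{(x)}$ pseudo-Lipschitz of order $2$, applying the empirical convergence in (\ref{Equ:em_con}) to the average $\frac{1}{N_\ell K}\sum_{n,k}\varphi^{(x)}(Z_{nk}^{(\ell-1)}(t),R^{(x,\ell)}_{nk}(t-1))$ gives, almost surely,
\begin{align}
\overline{v^{(x,\ell)}(t)}=\lim_{N_\ell,K\to\infty}\frac{1}{N_\ell K}\sum_{n,k}\varphi^{(x)}\big(Z_{nk}^{(\ell-1)}(t),R^{(x,\ell)}_{nk}(t-1)\big)=\mathbb{E}_{\textsf{Z}^{(\ell-1)}(t),\textsf{R}^{(x,\ell)}(t-1)}\big\{\varphi^{(x)}(\textsf{Z}^{(\ell-1)}(t),\textsf{R}^{(x,\ell)}(t-1))\big\}.
\end{align}
For the main equality I would write the per-iteration error as an empirical average of a single PL(2) function. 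Define $\psi(Z,R,x)=|g^{(x)}(Z,R)-x|^2$; this is pseudo-Lipschitz of order $2$ because $g^{(x)}$ is (posterior means are Lipschitz in their natural parameters under the standing moment assumptions). Then
\begin{align}
\textsf{mse}(\bs{X}^{(\ell)},t)=\lim_{N_\ell,K\to\infty}\frac{1}{N_\ell K}\sum_{n,k}\psi\big(Z_{nk}^{(\ell-1)}(t),R^{(x,\ell)}_{nk}(t-1),x_{nk}^{(\ell)}\big)=\mathbb{E}\big\{|g^{(x)}(\textsf{Z}^{(\ell-1)}(t),\textsf{R}^{(x,\ell)}(t-1))-\textsf{X}^{(\ell)}|^2\big\},
\end{align}
where the expectation is over the joint limiting law of $(\textsf{Z}^{(\ell-1)}(t),\textsf{R}^{(x,\ell)}(t-1),\textsf{z}^{(\ell-1)},\textsf{X}^{(\ell)})$ supplied by (\ref{Equ:em_con}). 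It then remains to show this expectation collapses to $\mathbb{E}\{\varphi^{(x)}\}$.

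The hard part, and the inductive heart of the SE proof, will be verifying the matched-posterior (Nishimori-type) consistency: that in the limit, conditioned on the effective observations $(\textsf{Z}^{(\ell-1)}(t),\textsf{R}^{(x,\ell)}(t-1))$, the true pair $(\textsf{z}^{(\ell-1)},\textsf{X}^{(\ell)})$ is distributed exactly according to $\hat{\mathcal{P}}^t(x^{(\ell)}|y)$, i.e.\ the product of the transition $\mathcal{P}(x^{(\ell)}|z^{(\ell-1)})$ with the two Gaussian factors $\mathcal{N}(z^{(\ell-1)}|\textsf{Z}^{(\ell-1)}(t),V^{(\ell)}(t))$ and $\mathcal{N}(x^{(\ell)}|\textsf{R}^{(x,\ell)}(t-1),\Sigma^{(x,\ell)}(t))$. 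This is where the decoupling principle (Proposition~\ref{Pro_De}) and the Gaussianity of the forward/backward effective channels (from the CLT step of the ML-BiGAMP derivation) enter: $\textsf{Z}^{(\ell-1)}(t)$ acts as an AWGN observation of $\textsf{z}^{(\ell-1)}$ with variance $V^{(\ell)}(t)$ and $\textsf{R}^{(x,\ell)}(t-1)$ as an independent AWGN observation of $\textsf{X}^{(\ell)}$ with variance $\Sigma^{(x,\ell)}(t)$, with the recursions of Algorithm~\ref{alg:SE} chosen precisely so that these effective variances are self-consistent.

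Granting this consistency, $g^{(x)}$ is the genuine conditional mean and, by the orthogonality principle,
\begin{align}
\mathbb{E}\big\{|g^{(x)}-\textsf{X}^{(\ell)}|^2\,\big|\,\textsf{Z}^{(\ell-1)}(t),\textsf{R}^{(x,\ell)}(t-1)\big\}=\varphi^{(x)}\big(\textsf{Z}^{(\ell-1)}(t),\textsf{R}^{(x,\ell)}(t-1)\big),
\end{align}
so that taking expectations and chaining with the two displayed limits above yields $\textsf{mse}(\bs{X}^{(\ell)},t)=\mathbb{E}\{\varphi^{(x)}\}=\overline{v^{(x,\ell)}(t)}$ almost surely. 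I would close by remarking that the boundary layer $\ell=1$ follows along identical lines, with the forward factor $\mathcal{P}(x^{(\ell)}|z^{(\ell-1)})$ and its Gaussian observation replaced by the prior $\mathcal{P}_{\textsf{X}}$, so only the single backward likelihood $\mathcal{N}(x|\textsf{R}^{(x,1)}(t-1),\Sigma^{(x,1)}(t))$ survives.
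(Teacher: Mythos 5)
Your proposal is correct and follows essentially the same route as the paper's proof: both pass from the empirical averages $\frac{1}{N_{\ell}K}\sum_{n,k}\bigl(\hat{x}_{nk}^{(\ell)}(t)-x_{nk}^{(\ell)}\bigr)^2$ and $\frac{1}{N_{\ell}K}\sum_{n,k}v_{nk}^{(x,\ell)}(t)$ to expectations via the PL(2) empirical-convergence assumption in (\ref{Equ:em_con}), and then identify the two expectations. The only difference is one of explicitness: the paper justifies its step $(b)$ (expected squared error of the posterior mean equals expected posterior variance) by citing the empirical convergence alone, whereas you correctly isolate the matched-posterior/orthogonality step as the real content there --- noting only that in the paper this consistency is encoded in the Gaussian joint-law factorization (\ref{Equ:JointZzXR}) of Step~1 of the SE derivation, not in Proposition~\ref{Pro_De}, which concerns the replica analysis of the exact MMSE estimator rather than the algorithm's state evolution.
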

\begin{proof}
To prove this proposition, we write
\begin{align}
\textsf{mse}(\bs{X}^{(\ell)}, t)
&=\lim_{N_{\ell},K\rightarrow \infty}\frac{1}{N_{\ell}K}\sum_{n=1}^{N_{\ell}}\sum_{k=1}^K(\hat{x}_{nk}^{(\ell)}(t)-x_{nk}^{(\ell)})^2\\
&\overset{(a)}{=}\mathbb{E}_{\textsf{Z}^{(\ell-1)}(t), \textsf{R}^{(x,\ell)}(t-1)} \left\{\left(g^{(x)}(\textsf{Z}^{(\ell-1)}(t), \textsf{R}^{(x,\ell)}(t-1))-\textsf{X}^{(\ell)}\right)^2\right\}\\
&\overset{(b)}{=}\mathbb{E}_{\textsf{Z}^{(\ell-1)}(t), \textsf{R}^{(x,\ell)}(t-1)} \left\{\varphi^{(x)}(\textsf{Z}^{(\ell-1)}(t), \textsf{R}^{(x,\ell)}(t-1))\right\}\\
&\overset{(c)}{=}\overline{v^{(x,\ell)}(t)},
\end{align}
where $(a)$ and $(b)$ holds by the empirical convergence to random variables in (\ref{Equ:em_con}), and $(c)$  holds by the following steps
\begin{align}
\overline{v^{(x,\ell)}(t)}
&=\frac{1}{N_{\ell}K}\sum_{n=1}^{N_{\ell}}\sum_{k=1}^K\varphi^{(x)}(Z_{nk}^{(\ell-1)}(t), R_{nk}^{(x,\ell)}(t-1)))\\
&=\mathbb{E}_{\textsf{Z}^{(\ell-1)}(t), \textsf{R}^{(x,\ell)}(t-1)}\left\{\varphi^{(x)}(\textsf{Z}^{(\ell-1)}(t), \textsf{R}^{(x,\ell)}(t-1))\right\}.
\end{align}
\end{proof}

For the case $\ell=1$, the similar result can be obtained. As a result, we have
\begin{align}
\textsf{mse}(\bs{X}^{(\ell)}, t)=\overline{\hat{v}^{(x,\ell)}(t)}=
\begin{cases}
\mathbb{E}_{\textsf{Z}^{(\ell-1)}(t),\textsf{R}^{(x,\ell)}(t-1)}\{\varphi^{(x)}(\textsf{Z}^{(\ell-1)}(t), \textsf{R}^{(x,\ell)}(t-1))\}  &\ell>1\\
\mathbb{E}_{\textsf{R}^{(x,\ell)}(t-1)}\{\varphi^{(x)}( \textsf{R}^{(x,\ell)}(t-1))\}   &\ell=1
\end{cases}.
\end{align}

Pertaining to the asymptotic MSE of $\hat{\bs{H}}^{(\ell)}(t)$ and $\hat{\bs{Z}}^{(\ell)}(t)$, we define
\begin{align}
\textsf{mse}(\bs{Z}^{(\ell)},t)&=\lim_{N_{\ell+1},K\rightarrow \infty}\frac{1}{N_{\ell+1}K}\|\tilde{\bs{Z}}^{(\ell)}(t)-\bs{Z}^{(\ell)}\|_{\text{F}}^2,\\
\textsf{mse}(\bs{H}^{(\ell)},t)&=\lim_{N_{\ell+1},N_{\ell}\rightarrow \infty}\frac{1}{N_{\ell+1}N_{\ell}}\|\hat{\bs{H}}^{(\ell)}(t)-\bs{H}^{(\ell)}\|_{\text{F}}^2.
\end{align}
Similar to $\bs{X}^{(\ell)}(t)$, the follows can be obtained
\begin{align}
\textsf{mse}(\bs{H}^{(\ell)}, t)&=\overline{\hat{v}^{(h,\ell)}(t)}=\mathbb{E}_{\textsf{H}^{(\ell)},\textsf{R}^{(h,\ell)}(t)}\{\varphi^{(h)}(\textsf{H}^{(\ell)},\textsf{R}^{(h,\ell)}(t))\},\\
\textsf{mse}(\bs{Z}^{(\ell)}, t)&=\overline{\tilde{v}^{(\ell)}(t)}=
\begin{cases}
\mathbb{E}_{\textsf{Z}^{(\ell)}(t),\textsf{R}^{(x,\ell+1)}(t)}\{\varphi^{(z)}(\textsf{Z}^{(\ell)}(t), \textsf{R}^{(x,\ell+1)}(t))\}  &\ell<L\\
\mathbb{E}_{\textsf{Z}^{(\ell)}(t),\textsf{Y}}\{\varphi^{(z)}(\textsf{Z}^{(\ell)}(t),\textsf{Y})\}   &\ell=L
\end{cases}.
\end{align}
where
\begin{align}
&\varphi^{(h)}(h_{mn}^{(\ell)},R_{mn}^{(h,\ell)}(t))=v_{mn}^{(h,\ell)}(t),\\
&\varphi^{(z)}(Z_{mk}^{(\ell)}(t), R_{mk}^{(x,\ell+1)}(t))=\tilde{v}_{mk}^{(\ell)}(t).
\end{align}

We move to giving the step-by-step derivation of the asymptotic MSEs of those MMSE estimators. For simplification, we omit iteration $t$ in the following derivation.

\textbf{Step 1}: We first compute $\overline{\tilde{v}^{(\ell)}}$, for $1\leq \ell<L$,
\begin{align}
\overline{\tilde{v}^{(\ell)}}&=\mathbb{E}_{\textsf{Z}^{(\ell)},\textsf{R}^{(x,\ell+1)}}\{\varphi^{(z)}(\textsf{Z}^{(\ell)},\textsf{R}^{(x,\ell+1)})\}\\
&=\mathbb{E}_{\textsf{Z}^{(\ell)},\textsf{R}^{(x,\ell+1)}}\{\mathbb{E}\{|z^{(\ell)}|^2\}-|\mathbb{E}\{z^{(\ell)}\}|^2\},
\label{Equ:tildev}
\end{align}
where the inner expectation is taken over the approximate posterior distribution $\hat{\mathcal{P}}(z^{(\ell)}|y)$ in (\ref{Equ:zeta})
\begin{align}
\hat{\mathcal{P}}(z^{(\ell)}|y)=\frac{
\int \mathcal{N}_{x^{(\ell+1)}|z^{(\ell)}}(Z^{(\ell)}, V^{(\ell)},R^{(x,\ell+1)}, \Sigma^{(x,\ell+1)})\text{d}x^{(\ell+1)}
}{
\int \mathcal{N}_{x^{(\ell+1)}|z^{(\ell)}}(Z^{(\ell)}, V^{(\ell)},R^{(x,\ell+1)}, \Sigma^{(x,\ell+1)})\text{d}z^{(\ell)}\text{d}x^{(\ell+1)}
},
\end{align}
By the \textit{Markov property}, the joint distribution the random variables (RVs) $(\textsf{Z}^{(\ell)},\textsf{z}^{(\ell)},\textsf{X}^{(\ell+1)},\textsf{R}^{(x,\ell+1)})$ can be represented as
\begin{align}
\mathcal{P}(Z^{(\ell)},&z^{(\ell)},x^{(\ell+1)},R^{(x,\ell+1)})
=\mathcal{P}(Z^{(\ell)})\mathcal{P}(z^{(\ell)}|Z^{(\ell)})\mathcal{P}(x^{(\ell+1)}|z^{(\ell)})\mathcal{P}(R^{(x,\ell+1)}|x^{(\ell+1)}),
\label{Equ:JointZzXR}
\end{align}
where $\mathcal{P}(z^{(\ell)}|Z^{(\ell)})=\mathcal{N}(z^{(\ell)}|Z^{(\ell)},V^{(\ell)})$ and  $\mathcal{P}(R^{(x,\ell+1)}|x^{(\ell+1)})=\mathcal{N}(x^{(\ell+1)}|R^{(x,\ell+1)},\Sigma^{(x,\ell+1)})$. Besides, the distribution $\mathcal{P}(Z^{(\ell)})$ can be obtained by solving the following equation
\begin{align}
\int \mathcal{P}(Z^{(\ell)})\mathcal{P}(z^{(\ell)}|Z^{(\ell)})\text{d}Z^{(\ell)}=\mathcal{P}(z^{(\ell)}).
\label{Equ:cov_pZ}
\end{align}
Note that $z^{(\ell)}$ is the sum of a large number of independent terms, i.e., $z_{mk}^{(\ell)}=\sum_{n}h_{mn}^{(\ell)}x_{nk}^{(\ell)}$. It allows us to treat $z^{(\ell)}$ as Gaussian random variable with zero mean and variance $\chi_z^{(\ell)}$
\begin{align}
\chi_z^{(\ell)}
&=\mathbb{E} \left\{\left(\sum_{n=1}^{N_{\ell}}h_{mn}^{(\ell)}x_{nk}^{(\ell)}\right)\left(\sum_{r=1}^{N_{\ell}}h_{mr}^{(\ell)}x_{rk}^{(\ell)}\right)\right\}\\
&=\sum_{n=1}^{N_{\ell}}\mathbb{E} \left\{\left(h_{mn}^{(\ell)}\right)^2(x_{nk}^{(\ell)})^2\right\}\\
&=N_{\ell}\chi_h^{(\ell)}\chi_x^{(\ell)},
\label{Equ:chi_z}
\end{align}
where $\chi_h^{(\ell)}=\int \left(h^{(\ell)}\right)^2 \mathcal{P}(h^{(\ell)})\text{d}h^{(\ell)}$ and
\begin{align}
&\ell=1: \ \chi_x^{(\ell)}=\int x^2\mathcal{P}(x)\text{d}x,\\
&\ell>1: \ \chi_x^{(\ell)}=\int (x^{(\ell)})^2\mathcal{P}(x^{(\ell)}|z^{(\ell-1)})\mathcal{N}(z^{(\ell-1)}|0,\chi_z^{(\ell-1)})\text{d}z^{(\ell-1)}\text{d}x^{(\ell)}.
\end{align}

As a result, solving (\ref{Equ:cov_pZ}) yields
\begin{align}
\mathcal{P}(Z^{(\ell)})=\mathcal{N}(Z^{(\ell)}|0,\chi_z^{(\ell)}-V^{(\ell)}).
\end{align}
Further, the distribution of a pair random variables $(\textsf{Z}^{(\ell)},\textsf{R}^{(x,\ell+1)})$ is evaluated as
\begin{align}
\mathcal{P}(Z^{(\ell)},R^{(x,\ell+1)})
&=\mathcal{P}(Z^{(\ell)})\int \mathcal{P}(z^{(\ell)}|Z^{(\ell)})\mathcal{P}(x^{(\ell+1)}|z^{(\ell)}) \mathcal{P}(R^{(x,\ell+1)}|x^{(\ell+1)})\text{d}x^{(\ell+1)}\text{d}z^{(\ell)}.
\label{Equ:JointZRX}
\end{align}

From (\ref{Equ:tildev}), we have
\begin{align}
\overline{\tilde{v}^{(\ell)}}=\chi_z^{(\ell)}-q_z^{(\ell)},
\end{align}
where $\mathbb{E}_{\textsf{Z}^{(\ell)},\textsf{R}^{(x,\ell+1)}}\{\mathbb{E}\{|z^{(\ell)}|^2\}\}=\chi_z^{(\ell)}$, and $\mathbb{E}_{\textsf{Z}^{(\ell)},\textsf{R}^{(x,\ell+1)}}\{|\mathbb{E}\{z^{(\ell)}\}|^2\}=q_z^{(\ell)}$ with $q_z^{(\ell)}$ being
\begin{align}
q_z^{(\ell)}&=\int \frac{\left[\int z^{(\ell)}\mathcal{N}_{x|z}^{(\ell)}\left(\sqrt{\chi_z^{(\ell)}-V^{(\ell)}}\xi,V^{(\ell)},\zeta,\Sigma^{(x,\ell+1)}\right)\text{d}x^{(\ell+1)}\text{d}z^{(\ell)}\right]^2}
{\int \mathcal{N}_{x|z}^{(\ell)}\left(\sqrt{\chi_z^{(\ell)}-V^{(\ell)}}\xi,V^{(\ell)},\zeta,\Sigma^{(x,\ell+1)}\right)\text{d}x^{(\ell+1)}\text{d}z^{(\ell)}}\text{D}\xi\text{d}\zeta
,
\end{align}
where $\text{D}\xi$ is Gaussian measure expressed as $\text{D}\xi=\mathcal{N}(\xi|0,1)\text{d}\xi$.

For the case of $\ell=L$, we also have $\overline{\tilde{v}^{(\ell)}}=\chi_z^{(\ell)}-q_z^{(\ell)}$, where $q_z^{(\ell)}$ is the form of
\begin{align}
q_z^{(\ell)}=\int \frac{\left[\int z^{(\ell)} \mathcal{P}(y|z^{(\ell)})\mathcal{N}\left(z^{(\ell)}|\sqrt{\chi_z^{(\ell)}-V^{(\ell)}}\xi,V^{(\ell)}\right)\text{d}z^{(\ell)}\right]^2}
{\int  \mathcal{P}(y|z^{(\ell)})\mathcal{N}\left(z^{(\ell)}|\sqrt{\chi_z^{(\ell)}-V^{(\ell)}}\xi,V^{(\ell)}\right)\text{d}z^{(\ell)}}\text{D}\xi\text{d}y.
\end{align}

\textbf{Step 2}: The evaluation of $\overline{v^{(x,\ell)}}$ is similar to that of $\overline{\tilde{v}^{(\ell)}}$. For $1<\ell\leq L$,
\begin{align}
\overline{v^{(x,\ell)}}&=\mathbb{E}_{\textsf{Z}^{(\ell-1)},\textsf{R}^{(x,\ell)}}\{\varphi^{(x)}(\textsf{Z}^{(\ell-1)},\textsf{R}^{(x,\ell)})\}\\
&=\mathbb{E}_{\textsf{Z}^{(\ell-1)},\textsf{R}^{(x,\ell)}}\{\mathbb{E}\{|x^{(\ell)}|^2\}-|\mathbb{E}\{x^{(\ell)}\}|^2\},
\label{Equ:overline_vx}
\end{align}
where the inner expectation is taken over the approximate posterior distribution $\hat{\mathcal{P}}(x^{(\ell)}|y)$ in (\ref{Equ:xi_x}).
The distribution of random variables $(\textsf{Z}^{(\ell-1)},\textsf{z}^{(\ell-1)},\textsf{x}^{(\ell)},\textsf{R}^{(x,\ell)})$ is given in (\ref{Equ:JointZzXR}) and the distribution of random variables $(\textsf{Z}^{(\ell-1)},\textsf{R}^{(x,\ell)})$ is given in (\ref{Equ:JointZRX}). From (\ref{Equ:overline_vx}), the following can be obtained
\begin{align}
\overline{v^{(x,\ell)}}=\chi_x^{(\ell)}-q_x^{(\ell)},
\end{align}
where $\mathbb{E}_{\textsf{Z}^{(\ell-1)},\textsf{R}^{(x,\ell)}}\{\mathbb{E}\{|x^{(\ell)}|^2\}\}=\chi_x^{(\ell)}$, and $\mathbb{E}_{\textsf{Z}^{(\ell-1)},\textsf{R}^{(x,\ell)}}\{|\mathbb{E}\{x^{(\ell)}\}|^2\}=q_x^{(\ell)}$ with $q_x^{(\ell)}$ being
\begin{align}
\!\!\!\!\! \ell>1:\ q_x^{(\ell)}&=\int \frac{\left[\int x^{(\ell)} \mathcal{N}_{x|z}^{(\ell-1)}(\sqrt{\chi_z^{(\ell-1)}-V^{(\ell-1)}}\xi,V^{(\ell-1)},\zeta,\Sigma^{(x,\ell)})\text{d}x^{(\ell)}\text{d}z^{(\ell-1)}\right]^2}{\int \mathcal{N}_{x|z}^{(\ell-1)}(\sqrt{\chi_z^{(\ell-1)}-V^{(\ell-1)}}\xi,V^{(\ell-1)},\zeta,\Sigma^{(x,\ell)})\text{d}x^{(\ell)}\text{d}z^{(\ell-1)}}\text{D}\xi\text{d}\zeta
,\\
\!\!\!\!\! \ell=1: \ q_x^{(\ell)}&=\int \frac{\left[\int x\mathcal{P}(x)\mathcal{N}(x|\zeta,\Sigma^{(x,\ell)})\text{d}x\right]^2}
{\int \mathcal{P}(x)\mathcal{N}(x|\zeta,\Sigma^{(x,\ell)})\text{d}x}\text{d}\zeta.
\end{align}
Note that $\overline{v^{(x,\ell)}}$ refers to the MSE associated with approximate posterior  $\hat{\mathcal{P}}(x^{(\ell)}|y)$.

Additionally, the evaluation of $\overline{v^{(h,\ell)}}$ is easier relative to that of $\overline{\tilde{v}^{(\ell)}}$ and $\overline{v^{(x,\ell)}}$ due to the known prior $\mathcal{P}(h^{(\ell)})$. After some algebras, the following can be obtained
\begin{align}
\overline{v^{(h,\ell)}}&=\chi_h^{(\ell)}-q_h^{(\ell)},\\
q_h^{(\ell)}&=\int \frac{\left[\int h^{(\ell)}\mathcal{P}(h^{(\ell)})\mathcal{N}(h^{(\ell)}|\zeta,\Sigma^{(h,\ell)})\text{d}h^{(\ell)}\right]^2}
{\int \mathcal{P}(h^{(\ell)})\mathcal{N}(h^{(\ell)}|\zeta,\Sigma^{(h,\ell)})\text{d}h^{(\ell)}}\text{d}\zeta.
\end{align}
It is worthy of noting that  $\overline{v^{(h,\ell)}}$ represents the MSE associated with  $\hat{\mathcal{P}}(h^{(\ell)}|y)$.

\textbf{Step 3}: It is found that only the variance related parameters have impact on $\overline{\tilde{v}^{(\ell)}}$, $\overline{v^{(x,\ell)}}$, and $\overline{v^{(h,\ell)}}$. These parameters are $V^{(\ell)}$, $\Sigma^{(x,\ell)}$, and $\Sigma^{(h,\ell)}$. We thus apply the results above to represent those variance related parameters, which yields
\begin{align}
V^{(\ell)}
&=N_{\ell}(\chi_x^{(\ell)}\chi_h^{(\ell)}-q_x^{(\ell)}q_h^{(\ell)}),\\
v^{(s,\ell)}&=\frac{q_z^{(\ell)}-N_{\ell}q_x^{(\ell)}q_h^{(\ell)}}{N_{\ell}^2(\chi_x^{(\ell)}\chi_h^{(\ell)}-q_x^{(\ell)}q_h^{(\ell)})^2},\\
\Sigma^{(x,\ell)}
&=\frac{N_{\ell}(\chi_x^{(\ell)}\chi_h^{(\ell)}-q_x^{(\ell)}q_h^{(\ell)})^2}{\beta_{\ell}q_h^{(\ell)}(q_z^{(\ell)}-N_{\ell}q_x^{(\ell)}q_h^{(\ell)})},\\
\Sigma^{(h,\ell)}
&=\frac{\alpha \prod_{l=1}^{\ell-1}\beta_{l}N_{\ell}(\chi_x^{(\ell)}\chi_h^{(\ell)}-q_x^{(\ell)}q_h^{(\ell)})^2}{q_x^{(\ell)}(q_z^{(\ell)}-N_{\ell}q_x^{(\ell)}q_h^{(\ell)})}.
\end{align}

\section{Replica analysis}
\label{Appendix:C}

In this section, we firstly calculate the free energy of a representative two-layer model, and it leads to a set of saddle point equations after applying some techniques (e.g., central limit theorem); Secondly, based on replica symmetry assumption, the fixed point equations could be obtained by solving the saddle point equations. Finally, the results of the two-layer model can be extended to the multi-layer regime with similar procedures.

\subsection{Representative Two-Layer Model}
The representative two-layer model described as below is the multi-layer model (1) in $L=2$,
\begin{align}
\begin{cases}
\text{1-st layer:}\  \ \bs{S}=\boldsymbol{\phi}^{(1)}(\bs{HX},\bs{W}^{(1)})\\
\text{2-ed layer:}\  \bs{Y}=\boldsymbol{\phi}^{(2)}(\bs{CS},\bs{W}^{(2)})
\end{cases},
\end{align}
where we  use $(\bs{H},\bs{S},\bs{C})$ to represent $(\bs{H}^{(1)},\bs{X}^{(2)},\bs{H}^{(2)})$. In addition, we define $\bs{U}=\bs{HX}$ and $\bs{V}=\bs{CS}$, and apply the notations $(N_1, N_2, N_3)\leftarrow (N, M, P)$ and $(\beta_1,\beta_2)\leftarrow(\beta,\gamma)$.

The free energy \cite{Kabashima2016Phase} of this model is written as
\begin{align}
\mathcal{F}=\lim_{N\rightarrow \infty} \frac{1}{N^2}\lim_{\tau\rightarrow 0}\frac{\partial }{\partial \tau}\log \mathbb{E}_{\bs{Y}}\left\{\mathcal{P}^{\tau}(\bs{Y})\right\},
\label{Equ:FreeEnergy}
\end{align}
where $\mathcal{P}(\bs{Y})$ is the partition function given by
\begin{align}
\mathcal{P}(\bs{Y})&=\int \mathcal{P}(\bs{Y}|\bs{C},\bs{S})\mathcal{P}(\bs{C})\mathcal{P}(\bs{S})\text{d}\bs{C}\text{d}\bs{S},
\label{Equ:Py}\\
\mathcal{P}(\bs{S})&=\int \mathcal{P}(\bs{S}|\bs{H},\bs{X})\mathcal{P}(\bs{H})\mathcal{P}(\bs{X})\text{d}\bs{H}\text{d}\bs{X}.
\label{Equ:Ps}
\end{align}

\subsection{Begin at The Last Layer}
\label{Sec:Firstlayer}
From (\ref{Equ:FreeEnergy}), (\ref{Equ:Py}), and (\ref{Equ:Ps}), the term $\mathbb{E}\{\mathcal{P}^{\tau}(\bs{Y})\}$ in free energy can be rewritten as
\begin{align}
\mathbb{E}\{\mathcal{P}^{\tau}(\bs{Y})\}&=\int_{\bs{Y}}\prod_{a=0}^{\tau}\int_{\bs{C}^{(a)},\bs{S}^{(a)}} \mathcal{P}(\bs{Y}|\bs{C}^{(a)},\bs{S}^{(a)})\mathcal{P}(\bs{C}^{(a)})  \mathcal{P}(\bs{S}^{(a)}) \text{d}\bs{C}^{(a)}\text{d}\bs{S}^{(a)}\text{d}\bs{Y}\\
&=\int \mathcal{P}(\bs{Y}|\boldsymbol{\mathcal{V}})\mathbb{E}_{\boldsymbol{\mathcal{C}},\boldsymbol{\mathcal{S}}}\left\{\delta(\boldsymbol{\mathcal{V}}-\boldsymbol{\mathcal{CS}})\right\}
\text{d}\boldsymbol{\mathcal{V}}\text{d}\bs{Y},
\label{Equ:EY}
\end{align}
where the fact $\mathcal{P}(\bs{Y}|\bs{C},\bs{S})=\int \mathcal{P}(\bs{Y}|\bs{V})\delta(\bs{V}-\bs{CS})\text{d}\bs{V}$ and the definitions $\boldsymbol{\mathcal{V}}=\{\bs{V}^{(a)},\forall a\}$, $\boldsymbol{\mathcal{C}}=\{\bs{C}^{(a)},\forall a\}$, $\boldsymbol{\mathcal{S}}=\{\bs{S}^{(a)},\forall a\}$, and $\mathcal{P}(\bs{Y}|\boldsymbol{\mathcal{V}})=\prod_{a=0}^{\tau}\mathcal{P}(\bs{Y}|\bs{V}^{(a)})$ are applied. In addition, the distribution $\mathcal{P}(\boldsymbol{\mathcal{S}})$ is given by
\begin{align}
\mathcal{P}(\boldsymbol{\mathcal{S}})=\int \mathcal{P}(\boldsymbol{\mathcal{S}}|\boldsymbol{\mathcal{U}})\mathbb{E}_{\boldsymbol{\mathcal{H}},\boldsymbol{\mathcal{X}}}
\{\delta(\boldsymbol{\mathcal{U}}-\boldsymbol{\mathcal{HX}})\}\text{d}\boldsymbol{\mathcal{U}},
\label{Equ:P_s}
\end{align}
where $\boldsymbol{\mathcal{U}}=\{\bs{U}^{(a)},\forall a\}$, $\boldsymbol{\mathcal{H}}=\{\bs{H}^{(a)},\forall a\}$, and $\boldsymbol{\mathcal{X}}=\{\bs{X}^{(a)}\}$. Note that the information of first layer is involved in the prior distribution $\mathcal{P}(\boldsymbol{\mathcal{S}})$ of the second layer.

As can be seen from $\mathbb{E}\{\mathcal{P}^{\tau}(\bs{Y})\}$ in (\ref{Equ:EY}), the key challenge is the computation of the term $\mathbb{E}_{\boldsymbol{\mathcal{C}},\boldsymbol{\mathcal{S}}}
\{\delta(\boldsymbol{\mathcal{V}}-\boldsymbol{\mathcal{CS}})\}$. In large system limit, where the dimensions of the system go into infinity, the central limit theorem (CLT) implies that the term $v_{pk}^{(a)}=\sum_{m=1}^Mc^{(a)}_{pm}s_{mk}^{(a)}$ limits to a Gaussian distribution with zero mean and covariance
\begin{align}
\mathbb{E}_{\boldsymbol{\mathcal{C}},\boldsymbol{\mathcal{S}}}\{v_{pk}^{(a)}v_{pk}^{(b)}\}
&=\mathbb{E}_{\boldsymbol{\mathcal{C}},\boldsymbol{\mathcal{S}}}\left\{\left(\sum_{m=1}^{M}c^{(a)}_{pm}s_{mk}^{(a)}\right)\left(\sum_{j=1}^{M}c^{(b)}_{pj}s_{jk}^{(b)}\right)\right\}
\label{Equ:V1}\\
&=\mathbb{E}_{\boldsymbol{\mathcal{C}},\boldsymbol{S}}\left\{\frac{1}{M}\left(\sum_{m=1}^{M}c_{pm}^{(a)}c_{pm}^{(b)}\right)\left(\sum_{j=1}^{M}s_{jk}^{(a)}s_{jk}^{(b)}\right)\right\}.
\label{Equ:V2}
\end{align}

To average over $\mathcal{P}(\boldsymbol{\mathcal{C}},\boldsymbol{\mathcal{S}})$ in (\ref{Equ:EY}), we introduce two $(\tau+1)\times (\tau+1)$ auxiliary matrices $\bs{Q}_C$ and $\bs{Q}_S$ defined by
\begin{align}
1&=\int \prod_{p=1}^{P}\prod_{0\leq a\leq b}^{\tau}\delta\left(MQ_C^{ab}-\sum_{m=1}^Mc_{pm}^{(a)}c_{pm}^{(b)}\right)\text{d}Q_C^{ab},\\
1&=\int \prod_{k=1}^{K}\prod_{0\leq a\leq b}^{\tau}\delta\left(MQ_S^{ab}-\sum_{m=1}^Ms_{mk}^{(a)}s_{mk}^{(b)}\right)\text{d}Q_S^{ab},
\end{align}
whose probability measures are represented as
\begin{align}
\mathcal{P}(\bs{Q}_C)&=\mathbb{E}_{\boldsymbol{\mathcal{C}}}\left\{\prod_{p=1}^{P}\prod_{0\leq a\leq b}^{\tau}\delta\left(MQ_C^{ab}-\sum_{m=1}^Mc_{pm}^{(a)}c_{pm}^{(b)}\right)\right\},\\
\mathcal{P}(\bs{Q}_S)&=\mathbb{E}_{\boldsymbol{\mathcal{S}}}\left\{\prod_{k=1}^{K}\prod_{0\leq a\leq b}^{\tau}\delta\left(MQ_S^{ab}-\sum_{m=1}^Ms_{mk}^{(a)}s_{mk}^{(b)}\right)\right\}.
\end{align}
Applying the probability measure of $(\bs{Q}_C,\bs{Q}_S)$ to replace the distribution of $(\boldsymbol{\mathcal{C}},\boldsymbol{\mathcal{S}})$ in (\ref{Equ:EY}) yields
\begin{align}
\mathbb{E}\{\mathcal{P}^{\tau}(\bs{Y})\}
=\mathbb{E}_{\bs{Q}_C, \bs{Q}_S}\left\{\left(\int \prod_{a=0}^{\tau}p(y|v^{(a)})\mathcal{N}(\bs{v}|\bs{0},M\bs{Q}_C\odot \bs{Q}_S)\text{d}\bs{v}\text{d}y\right)^{PK}\right\},
\end{align}
with $\bs{v}=\{v^{(a)},\forall a\}$ and $\odot$ being componentwise multiplication.

We note that $Q_C^{ab}=\frac{1}{M}\sum_{m=1}^Mc_{pm}^{(a)}c_{pm}^{(b)}$ is the sum of a large number of $\textit{i.i.d.}$ random variables. For $Q_S^{ab}=\frac{1}{M}\sum_{m=1}^Ms_{mk}^{(a)}s_{mk}^{(b)}$, there actually exists correlation in $\bs{s}_k^{(a)}=\{s_{mk}^{(a)},\forall m\}$ due to the linear mixing space. Fortunately, in large system limit, the CLT allows us to treat $\bs{u}_k^{(a)}=\bs{H}^{(a)}\bs{x}_k^{(a)}$ as Gaussian with zero mean and covariance matrix $\chi_x\bs{H}^{(a)}(\bs{H}^{(a)})^{\text{T}}$, which limits to diagonal matrix, i.e.,  $\chi_x\bs{H}^{(a)}(\bs{H}^{(a)})^{\text{T}}\rightarrow N\chi_x\chi_h\mathbf{I}$. In addition, $\mathcal{P}(\bs{s}_k^{(a)}|\bs{u}_k^{(a)})$ is componentwise. Thus, $Q_S^{ab}$ can be regarded as the sum of a large number of independent variables approximately. In the sequel, both of probability of  $\bs{Q}_C$ and $\bs{Q}_S$ satisfy large derivation theory (LDT) \cite[Chapter 2.2]{touchette2011basic}, \cite{ellis2007entropy}, which implies
\begin{align}
\mathcal{P}(\bs{Q}_C)\approx e^{-PM\mathcal{R}^{(\tau)}(\bs{Q}_C)}, \ \mathcal{P}(\bs{Q}_S)\approx e^{-MK\mathcal{R}^{(\tau)}(\bs{Q}_S)},
\end{align}
where $\mathcal{R}^{(\tau)}(\bs{Q}_C)$ and $\mathcal{R}^{(\tau)}(\bs{Q}_S)$ are rate functions from the  Legendre-Fenchel transform of $\log \mathbb{E}_{\bs{c}}\left\{\exp \left(\bs{c}^{\text{T}}\hat{\bs{Q}}_C\bs{c}\right)\right\}$ and $\frac{1}{MK}\log \mathbb{E}_{\mathcal{\bs{S}}}\left\{\exp \left(\sum_{m=1}^M\sum_{k=1}^K\bs{s}_{mk}^{\text{T}}\hat{\bs{Q}}_S\bs{s}_{mk}\right)\right\}$, respectively.
\begin{align}
\mathcal{R}^{(\tau)}(\bs{Q}_C)
&=\sup_{\hat{\bs{Q}}_C}\left\{\text{tr}(\hat{\bs{Q}}_C\bs{Q}_C)-\log \mathbb{E}_{\bs{c}}\left\{\exp \left(\bs{c}^{\text{T}}\hat{\bs{Q}}_C\bs{c}\right)\right\}\right\},\\
\mathcal{R}^{(\tau)}(\bs{Q}_S)
&=\sup_{\hat{\bs{Q}}_S}\left\{\text{tr}(\hat{\bs{Q}}_S\bs{Q}_S)-\frac{1}{MK}\log \mathbb{E}_{\mathcal{\bs{S}}}\left\{\exp \left(\sum_{m=1}^M\sum_{k=1}^K\bs{s}_{mk}^{\text{T}}\hat{\bs{Q}}_S\bs{s}_{mk}\right)\right\}\right\},
\end{align}
where $\bs{c}=\{c^{(a)},\forall a\}$ and $\bs{s}=\{s^{(a)},\forall a\}$. Additionally, another interpretation of rate function using Fourier representation can be found in Appendix \ref{Appendix:E}.

By Varadhan's theorem \cite[Section 2.4]{touchette2011basic}, from (\ref{Equ:FreeEnergy}) the following can be obtained
\begin{align}
\frac{1}{N^2}\log \mathbb{E}\left\{\mathcal{P}^{\tau}(\bs{Y})\right\}
&=\sup_{\bs{Q}_S,\bs{Q}_C}\left\{\frac{PK}{N^2}G^{(\tau)}(\bs{Q}_C,\bs{Q}_S)-\frac{PM}{N^2}\mathcal{R}^{(\tau)}(\bs{Q}_C)-\frac{MK}{N^2}\mathcal{R}^{\tau}(\bs{Q}_S)\right\}\\
\nonumber
&=\underset{\bs{Q}_C,\hat{\bs{Q}}_C,\bs{Q}_S,\hat{\bs{Q}}_S}{\text{Extr}}\left\{
\frac{PK}{N^2}G^{(\tau)}(\bs{Q}_C,\bs{Q}_S)-\frac{PM}{N^2}\text{tr}(\bs{Q}_C\hat{\bs{Q}}_C)+\frac{PM}{N^2}\log \mathbb{E}_{\bs{c}}\left\{\exp \left(\bs{c}^{\text{T}}\hat{\bs{Q}}_C\bs{c}\right)\right\}\right.\\
&\qquad \left.  -\frac{MK}{N^2}\text{tr}(\bs{Q}_S\hat{\bs{Q}}_S) +\frac{1}{N^2}\log \mathbb{E}_{\boldsymbol{\mathcal{S}}}\left\{\exp \left(\sum_{m=1}^M\sum_{k=1}^K\bs{s}_{mk}^{\text{T}}\hat{\bs{Q}}_S\bs{s}_{mk}\right)\right\}\right\},
\label{Equ:F_kappa}
\end{align}
where `Extr' denotes extremum points and
\begin{align}
G^{(\tau)}(\bs{Q}_C,\bs{Q}_S)=\log \int \mathcal{P}(y|\bs{v})\mathcal{N}(\bs{v}|\bs{0},M\bs{Q}_C\odot \bs{Q}_S)\text{d}\bs{v}\text{d}y,
\end{align}
with $\mathcal{P}(y|\bs{v})=\prod_{a=0}^{\tau}\mathcal{P}(y|v^{(a)})$.

\subsection{Move to Previous Layer}
\label{Sec:Previouslayer}
In fact, the key challenge of computing (\ref{Equ:F_kappa}) is the term $\frac{1}{N^2}\log \mathbb{E}_{\boldsymbol{\mathcal{S}}}\left\{\exp \left(\sum_{m=1}^M\sum_{k=1}^K\bs{s}_{mk}^{\text{T}}\hat{\bs{Q}}_S\bs{s}_{mk}\right)\right\}$. Similar to dealing with $\bs{V}$ (\ref{Equ:V1})-(\ref{Equ:V2}), the CLT allows us to treat $u_{mk}^{(a)}=\sum_{n=1}^Nh_{mn}^{(a)}x_{nk}^{(a)}$ as Gaussian variable with zero mean and covariance
\begin{align}
\mathbb{E}\left\{u^{(a)}_{mk}u^{(b)}_{mk}\right\}
&=
\mathbb{E}_{\boldsymbol{\mathcal{H}},\boldsymbol{\mathcal{X}}}\left\{\frac{1}{N}\left(\sum_{n=1}^Nh_{mn}^{(a)}h_{mn}^{(b)}\right)\left(\sum_{i=1}^Nx_{ik}^{(a)}x_{ik}^{(b)}\right)\right\}.
\end{align}
To handle the expectation over $(\boldsymbol{\mathcal{X}},\boldsymbol{\mathcal{H}})$, we introduce the following two $(\tau+1)\times (\tau+1)$ auxiliary matrices $\bs{Q}_X$ and $\bs{Q}_H$
\begin{align}
1&=\int \prod_{m=1}^M \prod_{0\leq a\leq b}^{\tau}\delta\left(NQ_H^{ab}-\sum_{n=1}^Nh_{mn}^{(a)}h_{mn}^{(b)}\right)\text{d}Q_H^{ab},\\
1&=\int \prod_{k=1}^K \prod_{0\leq a\leq b}^{\tau}\delta\left(NQ_X^{ab}-\sum_{n=1}^Nx_{nk}^{(a)}x_{nk}^{(b)}\right)\text{d}Q_X^{ab},
\end{align}
whose probability measures and rate functions are given by
\begin{align}
\mathcal{P}(\bs{Q}_H)&=\mathbb{E}_{\boldsymbol{\mathcal{H}}}\left\{\prod_{m=1}^{M}\prod_{0\leq a\leq b}^{\tau}\delta\left(NQ_H^{ab}-\sum_{n=1}^Nh_{mn}^{(a)}h_{mn}^{(b)}\right)\right\},\\
\!\!\!\!\!\!\!\!
\mathcal{P}(\bs{Q}_X)&=\mathbb{E}_{\boldsymbol{\mathcal{X}}}\left\{\prod_{k=1}^{K}\prod_{0\leq a\leq b}^{\tau}\delta\left(NQ_X^{ab}-\sum_{n=1}^Nx_{nk}^{(a)}x_{nk}^{(b)}\right)\right\},\\
\mathcal{R}^{(\tau)}(\bs{Q}_H)
&=\sup_{\hat{\bs{Q}}_H}\left\{\text{tr}(\hat{\bs{Q}}_H\bs{Q}_H)-\log \mathbb{E}_{\bs{h}}\left\{\exp \left(\bs{h}^{\text{T}}\hat{\bs{Q}}_H\bs{h}\right)\right\}\right\},\\
\mathcal{R}^{(\tau)}(\bs{Q}_X)
&=\sup_{\hat{\bs{Q}}_X}\left\{\text{tr}(\hat{\bs{Q}}_X\bs{Q}_X)-\log \mathbb{E}_{\bs{x}}\left\{\exp \left(\bs{x}^{\text{T}}\hat{\bs{Q}}_X\bs{x}\right)\right\}\right\}.
\end{align}

The term in (\ref{Equ:F_kappa}) is thus written as
\begin{align}
\nonumber
&\frac{1}{N^2}\log \mathbb{E}_{\boldsymbol{\mathcal{S}}}\left\{\exp \left(\sum_{m=1}^M\sum_{k=1}^K\bs{s}_{mk}^{\text{T}}\hat{\bs{Q}}_S\bs{s}_{mk}\right)\right\}\\
=&\frac{1}{N^2}\log \mathbb{E}_{\bs{Q}_S,\bs{Q}_C}\left\{\left[\int \exp \left(\bs{s}^{\text{T}}\hat{\bs{Q}}_S\bs{s}\right)\mathcal{P}(\bs{s}|\bs{u}) \mathcal{N}(\bs{u}|\bs{0},N\bs{Q}_H\odot\bs{Q}_X)\text{d}\bs{u}\text{d}\bs{s}\right]^{MK}\right\}.
\label{Equ:UndetermineTerm}
\end{align}
Further, by large partial theory and Varadhan's theorem again, the equation above becomes
\begin{align}
(\ref{Equ:UndetermineTerm})&=\sup_{\bs{Q}_H,\bs{Q}_X}\left\{\frac{MK}{N^2}G^{(\tau)}(\hat{\bs{Q}}_S,\bs{Q}_H,\bs{Q}_X)
-\frac{MN}{N^2}\mathcal{R}^{(\tau)}(\bs{Q}_H)-\frac{NK}{N^2}\mathcal{R}^{(\tau)}(\bs{Q}_X)\right\}\\
\nonumber
&=\underset{\bs{Q}_H,\bs{Q}_X,\hat{\bs{Q}}_H,\hat{\bs{Q}}_X}{\text{Extr}}\left\{\frac{MK}{N^2}G^{(\tau)}(\hat{\bs{Q}}_S,\bs{Q}_H,\bs{Q}_X)
-\frac{MN}{N^2}\text{tr}(\bs{Q}_H\hat{\bs{Q}}_H)+\frac{MN}{N^2}\log \mathbb{E}_{\bs{h}}\left\{\exp \left(\bs{h}^{\text{T}}\hat{\bs{Q}}_H\bs{h}\right)\right\}
\right.\\
&\  \left. \qquad -\frac{NK}{N^2}\text{tr}(\bs{Q}_X\hat{\bs{Q}}_X)+\frac{NK}{N^2}\log \mathbb{E}_{\bs{x}}\left\{\exp \left(\bs{x}^{\text{T}}\hat{\bs{Q}}_X\bs{x}\right)\right\}\right\},
\label{Equ:determineTerm}
\end{align}
where
\begin{align}
G^{(\tau)}(\hat{\bs{Q}}_S,\bs{Q}_H,\bs{Q}_X)&=\log \int \exp \left(\bs{s}^{\text{T}}\hat{\bs{Q}}_S\bs{s}\right) \mathcal{P}(\bs{s}|\bs{u}) \mathcal{N}(\bs{u}|\bs{0},N\bs{Q}_H\odot \bs{Q}_X)\text{d}\bs{u}\text{d}\bs{s}.
\end{align}

Meanwhile, substituting (\ref{Equ:determineTerm}) into (\ref{Equ:F_kappa}) yields
\begin{align}
\nonumber
&\frac{1}{N^2}\log \mathbb{E}\left\{\mathcal{P}^{\tau}(\bs{Y})\right\}\\
\nonumber
&=\underset{\bs{Q}_C,\hat{\bs{Q}}_C,\bs{Q}_S,\hat{\bs{Q}}_S,\bs{Q}_H,\hat{\bs{Q}}_H,\bs{Q}_X,\hat{\bs{Q}}_X}{\text{Extr}}\left\{
\frac{PK}{N^2}G^{(\tau)}(\bs{Q}_C,\bs{Q}_S)-\frac{PM}{N^2}\text{tr}(\bs{Q}_C\hat{\bs{Q}}_C)+\frac{PM}{N^2}\log \mathbb{E}_{\bs{c}}\left\{\exp \left(\bs{c}^{\text{T}}\hat{\bs{Q}}_C\bs{c}\right)\right\}\right.\\
\nonumber
& \left.  \qquad -\frac{MK}{N^2}\text{tr}(\bs{Q}_S\hat{\bs{Q}}_S)+\frac{MK}{N^2}G^{(\tau)}(\hat{\bs{Q}}_S,\bs{Q}_H,\bs{Q}_X)-\frac{MN}{N^2}\text{tr}(\bs{Q}_H\hat{\bs{Q}}_H)+\frac{MN}{N^2}\log \mathbb{E}_{\bs{h}}\left\{\exp \left(\bs{h}^{\text{T}}\hat{\bs{Q}}_H\bs{h}\right)\right\}\right.\\
&\qquad \left.-\frac{NK}{N^2}\text{tr}(\bs{Q}_X\hat{\bs{Q}}_X)+\frac{NK}{N^2}\log \mathbb{E}_{\bs{x}}\left\{\exp \left(\bs{x}^{\text{T}}\hat{\bs{Q}}_X\bs{x}\right)\right\}\right\}\\
&=\underset{\bs{Q}_C,\hat{\bs{Q}}_C,\bs{Q}_S,\hat{\bs{Q}}_S,\bs{Q}_H,\hat{\bs{Q}}_H,\bs{Q}_X,\hat{\bs{Q}}_X}{\text{Extr}}\mathcal{T}(\bs{Q}_C,\hat{\bs{Q}}_C,\bs{Q}_S,\hat{\bs{Q}}_S,\bs{Q}_H,\hat{\bs{Q}}_H,\bs{Q}_X,\hat{\bs{Q}}_X).
\label{Equ:SaddlePoints}
\end{align}

We first seek the saddle points of $\mathcal{T}(\cdot)$ defined  in (\ref{Equ:SaddlePoints}) w.r.t. $\bs{Q}_C$, $\hat{\bs{Q}}_C$, $\bs{Q}_S$, $\hat{\bs{Q}}_S$, $\bs{Q}_H$, $\hat{\bs{Q}}_H$, $\bs{Q}_X$, and $\hat{\bs{Q}}_X$. Applying the following note \footnote{The partial derivation of Gaussian vector distribution $\mathcal{N}(\bs{x}|\bs{0},\chi \bs{Q}_H\odot\bs{Q}_X)$ w.r.t. $\bs{Q}_H$ is given by
\begin{align*}
\nonumber
\frac{\partial \mathcal{N}(\bs{x}|\bs{0},\chi\bs{Q}_H\odot \bs{Q}_X)}{\partial \bs{Q}_H}&=-\frac{\mathcal{N}(\bs{x}|\bs{0},\chi\bs{Q}_H\odot\bs{Q}_X)}{2}\bs{Q}_X\odot \left[(\bs{Q}_H\odot \bs{Q}_X)^{-1}-\frac{1}{\chi}(\bs{Q}_H\odot \bs{Q}_X)^{-1}\bs{xx}^{\text{T}}(\bs{Q}_H\odot \bs{Q}_X)^{-1}\right].
\end{align*}
Proof sees Appendix \ref{Appendix:G}.
}, we obtain the saddle point equations from the free energy
\begin{subequations}
\begin{align}
\hat{\bs{Q}}_C&=-\frac{\bs{Q}_S}{2\alpha\beta}\odot \left((\bs{Q}_C\odot \bs{Q}_S)^{-1}-\frac{1}{M}(\bs{Q}_C\odot \bs{Q}_S)^{-1}\mathbb{E}_{\bs{v}}\{\bs{vv}^{\text{T}}\}(\bs{Q}_C\odot \bs{Q}_S)^{-1}\right),
\label{Equ:SD1}\\
\bs{Q}_C&=\frac{\mathbb{E}_{\bs{c}}\left\{\bs{cc}^{\text{T}}\exp \left(\bs{c}^{\text{T}}\hat{\bs{Q}}_C\bs{c}\right)\right\}}
{\mathbb{E}_{\bs{c}}\left\{\exp \left(\bs{c}^{\text{T}}\hat{\bs{Q}}_C\bs{c}\right)\right\}},
\label{Equ:SD2}\\
\hat{\bs{Q}}_S&=-\frac{\gamma \bs{Q}_C}{2}\odot \left((\bs{Q}_C\odot \bs{Q}_S)^{-1}-\frac{1}{M}(\bs{Q}_C\odot \bs{Q}_S)^{-1} \mathbb{E}_{\bs{v}}\{\bs{vv}^{\text{T}}\}(\bs{Q}_C\odot \bs{Q}_S)^{-1}\right),
\label{Equ:SD3}\\
\bs{Q}_S&=\mathbb{E}_{\bs{s}}\{\bs{ss}^{\text{T}}\},
\label{Equ:SD4}\\
\hat{\bs{Q}}_H&=-\frac{\bs{Q}_X}{2\alpha}\odot \left((\bs{Q}_H\odot\bs{Q}_X)^{-1}-\frac{1}{N}(\bs{Q}_H\odot \bs{Q}_X)^{-1} \mathbb{E}_{\bs{u}}\{\bs{uu}^{\text{T}}\}(\bs{Q}_H\odot\bs{Q}_X)^{-1}\right),
\label{Equ:SD5}\\
\bs{Q}_H&=\frac{\mathbb{E}_{\bs{h}}\left\{\bs{hh}^{\text{T}}\exp \left(\bs{h}^{\text{T}}\hat{\bs{Q}}_H\bs{h}\right)\right\}}{\mathbb{E}_{\bs{h}}\left\{\exp \left(\bs{h}^{\text{T}}\hat{\bs{Q}}_H\bs{h}\right)\right\}},
\label{Equ:SD6}\\
\hat{\bs{Q}}_X&=-\frac{\beta \bs{Q}_H}{2}\odot \left((\bs{Q}_H\odot \bs{Q}_X)^{-1}-\frac{1}{N}(\bs{Q}_H\odot \bs{Q}_X)^{-1} \mathbb{E}_{\bs{u}}\{\bs{uu}^{\text{T}}\}(\bs{Q}_H\odot \bs{Q}_X)^{-1}\right),
\label{Equ:SD7}\\
\bs{Q}_X&=\frac{\mathbb{E}_{\bs{x}}\left\{\bs{xx}^{\text{T}}\exp \left(\bs{x}^{\text{T}}\hat{\bs{Q}}_X\bs{x}\right)\right\}}{\mathbb{E}_{\bs{x}}\left\{\exp \left(\bs{x}^{\text{T}}\hat{\bs{Q}}_X\bs{x}\right)\right\}},
\label{Equ:SD8}
\end{align}
\end{subequations}
where the expectations in (\ref{Equ:SD1}), (\ref{Equ:SD4}), and (\ref{Equ:SD5}) are taken over
\begin{align}
\mathcal{P}(\bs{v})&=\frac{\int \prod_{a=0}^{\tau}\mathcal{P}(y|v^{(a)})\mathcal{N}(\bs{v}|\bs{0},M\bs{Q}_C\odot \bs{Q}_S)\text{d}y}{\int \prod_{a=0}^{\tau}\mathcal{P}(y|v^{(a)})\mathcal{N}(\bs{v}|\bs{0},M\bs{Q}_C\odot \bs{Q}_S)\text{d}\bs{v}\text{d}y},\\
\mathcal{P}(\bs{s})&=\frac{\int \exp\left(\bs{s}^{\text{T}}\hat{\bs{Q}}_S\bs{s}\right)\mathcal{P}(\bs{s}|\bs{u})\mathcal{N}(\bs{u}|\bs{0},N\bs{Q}_H\odot \bs{Q}_X)\text{d}\bs{u}}{\int \exp\left(\bs{s}^{\text{T}}\hat{\bs{Q}}_S\bs{s}\right)\mathcal{P}(\bs{s}|\bs{u})\mathcal{N}(\bs{u}|\bs{0},N\bs{Q}_H\odot \bs{Q}_X)\text{d}\bs{u}\text{d}\bs{s}},\\
\mathcal{P}(\bs{u})&=\frac{\int \exp \left(\bs{s}^{\text{T}}\hat{\bs{Q}}_S\bs{s}\right)\mathcal{P}(\bs{s}|\bs{u})\mathcal{N}(\bs{u}|\bs{0},N\bs{Q}_H\odot \bs{Q}_X)\text{d}\bs{s}}{\int \exp \left(\bs{s}^{\text{T}}\hat{\bs{Q}}_S\bs{s}\right)\mathcal{P}(\bs{s}|\bs{u})\mathcal{N}(\bs{u}|\bs{0},N\bs{Q}_H\odot \bs{Q}_X)\text{d}\bs{u}\text{d}\bs{s}}.
\end{align}

\subsection{Replica Symmetric Solution}
\label{Sec:RS}
In fact, it is prohibitive to solve the joint equations (\ref{Equ:SD1})-(\ref{Equ:SD8}) except in the simplest cases such as all priors and transition distributions being Gaussian. To address this issue, we postulate that the solutions of those saddle point equations satisfies \textit{replica symmetry} \cite{Kabashima2016Phase, mezard1987spin}, i.e.,
\begin{align}
\bs{Q}_X&=(\chi_x-q_x)\mathbf{I}+q_x\mathbf{11}^{\text{T}},\  \hat{\bs{Q}}_X=(\hat{\chi}_x-\hat{q}_x)\mathbf{I}+\hat{q}_x\mathbf{11}^{\text{T}},\\
\bs{Q}_H&=(\chi_h-q_h)\mathbf{I}+q_h\mathbf{11}^{\text{T}}, \ \hat{\bs{Q}}_H=(\hat{\chi}_h-\hat{q}_h)\mathbf{I}+\hat{q}_h\mathbf{11}^{\text{T}},\\
\bs{Q}_S&=(\chi_s-q_s)\mathbf{I}+q_s\mathbf{11}^{\text{T}}, \ \ \hat{\bs{Q}}_S=(\hat{\chi}_s-\hat{q}_s)\mathbf{I}+\hat{q}_s\mathbf{11}^{\text{T}},\\
\bs{Q}_C&=(\chi_c-q_c)\mathbf{I}+q_c\mathbf{11}^{\text{T}}, \  \ \hat{\bs{Q}}_C=(\hat{\chi}_c-\hat{q}_c)\mathbf{I}+\hat{q}_c\mathbf{11}^{\text{T}},
\end{align}
where $\bs{11}^{\text{T}}$ denotes $(\tau+1)\times (\tau+1)$ matrix with it all elements being 1.
Based on the replica symmetry assumption above, the terms $\mathbb{E}_{\bs{v}}\{\bs{vv}^{\text{T}}\}$ and $\mathbb{E}_{\bs{u}}\{\bs{uu}^{\text{T}}\}$ also have replica symmetry structure, i.e.,
\begin{align}
\bs{Q}_V &=\mathbb{E}_{\bs{v}}\{\bs{vv}^{\text{T}}\}=(\chi_v-q_v)\mathbf{I}+q_v\mathbf{11}^{\text{T}},
\label{Equ:QV}\\
\bs{Q}_{U}&=\mathbb{E}_{\bs{u}}\{\bs{uu}^{\text{T}}\}=(\chi_u-q_u)\mathbf{I}+q_u\mathbf{11}^{\text{T}}.
\end{align}

We first determine the term $\bs{Q}_V$ in (\ref{Equ:SD1}) by evaluating ($\chi_v,q_v$), which are expressed as
\begin{align}
\chi_v&=\frac{\int (v^{(0)})^2\prod_{a=0}^{\tau}p(y|v^{(a)})\mathcal{N}(\bs{v}|0, M\bs{Q}_C\odot \bs{Q}_S)\text{d}\bs{v}\text{d}y}
{\int \prod_{a=0}^{\tau}p(y|v^{(a)})\mathcal{N}(\bs{v}|0, M\bs{Q}_C\odot \bs{Q}_S)\text{d}\bs{v}\text{d}y},\\
q_v&=\frac{\int v^{(0)}v^{(1)}\prod_{a=0}^{\tau}p(y|v^{(a)})\mathcal{N}(\bs{v}|0, M\bs{Q}_C\odot \bs{Q}_S)\text{d}\bs{v}\text{d}y}
{\int \prod_{a=0}^{\tau}p(y|v^{(a)})\mathcal{N}(\bs{v}|0, M\bs{Q}_C\odot \bs{Q}_S)\text{d}\bs{v}\text{d}y}.
\end{align}
Applying the matrix inverse lemma\footnote{
$(\bs{A}+\bs{BC})^{-1}=\bs{A}^{-1}-\bs{A}^{-1}\bs{B}(\mathbf{I}+\bs{CA}^{-1}\bs{B})^{-1}\bs{CA}^{-1}$.
}, the term $(M\bs{Q}_C\odot \bs{Q}_S)^{-1}$ in $\chi_v$ and $ q_v$ can be written as
\begin{align}
(M\bs{Q}_S\odot \bs{Q}_C)^{-1}
&=\frac{1}{M(\chi_s\chi_c-q_sq_c)}\mathbf{I}-\frac{q_sq_c}{M(\chi_s\chi_c-q_sq_c)(\chi_s\chi_c+\tau q_cq_s)}\mathbf{11}^{\text{T}}.
\end{align}
We define $A=\frac{1}{M(\chi_s\chi_c-q_sq_c)}$ and $B=\frac{q_sq_c}{M(\chi_s\chi_c-q_sq_c)(\chi_s\chi_c+\tau q_cq_s)}$.
Further by Hubbard-Stratonovich transform \footnote{$e^{x^2}=\sqrt{\frac{\eta}{2\pi}}\int e^{-\frac{\eta}{2}\xi^2+\sqrt{2\eta }x\xi}\text{d}\xi$, for $\eta>0$.}, we decouple the coupled exponent component
\begin{align}
\exp \left(-\frac{1}{2}\bs{v}^{\text{T}}(M\bs{Q}_C\odot \bs{Q}_S)^{-1}\bs{v}\right)
&=\exp \left[-\frac{A}{2}\sum_{a=0}^{\tau}(v^{(a)})^2+\left(\sqrt{\frac{B}{2}}\sum_{a=0}^{\tau}v^{(a)}\right)^2\right]\\
&=\int \sqrt{\frac{\eta}{2\pi}}\exp \left[-\frac{A}{2}\sum_{a=0}^{\tau}(v^{(a)})^2-\frac{\eta}{2}\xi^2+\sqrt{\eta B}\sum_{a=0}^{\tau}v^{(a)}\xi\right]\text{d}\xi.
\end{align}

By this decoupling operation, we calculate denominator and numerator of $\chi_v$ in (\ref{Equ:QV}), respectively
\begin{align}
\nonumber
&\lim_{\tau\rightarrow 0}\int \prod_{a=0}^{\tau}\mathcal{P}(y|v^{(a)})\mathcal{N}(\bs{v}|\bs{0},N\bs{Q}_C\odot \bs{Q}_S)\text{d}\bs{v}\text{d}y
\label{Equ:chiv_B}\\
=&\lim_{\tau\rightarrow 0}C\int _y \left[\int_v\mathcal{P}(y|v)\exp \left(-\frac{A}{2}v^2+\sqrt{\eta B}v\xi\right)\text{d}v\right]^{\tau+1} \sqrt{\frac{\eta}{2\pi}}\exp \left(-\frac{\eta}{2}\xi^2\right)\text{d}\xi\text{d}y\\
=&\lim_{\tau\rightarrow 0}C \sqrt{\frac{2\pi}{A-B}},
\label{Equ:chiv_Num}\\
\nonumber
&\lim_{\tau\rightarrow 0}\int (v^{(0)})^2\prod_{a=0}^{\tau}\mathcal{P}(y|v^{(a)})\mathcal{N}(\bs{v}|\bs{0},N\bs{Q}_C\odot \bs{Q}_S)\text{d}\bs{v}\text{d}y\\
=&\lim_{\tau\rightarrow 0}C\sqrt{\frac{2\pi}{A-B}}\frac{1}{A-B}.
\label{Equ:chiv_Den}
\end{align}
Meanwhile, the denominator of $q_v$ in (\ref{Equ:QV}) is evaluated as
\begin{align}
\nonumber
&\lim_{\tau\rightarrow 0}\int v^{(0)}v^{(1)}\prod_{a=0}^{\tau}\mathcal{P}(y|v^{(a)})\mathcal{N}(\bs{v}|\bs{0},N\bs{Q}_C\odot \bs{Q}_S)\text{d}\bs{v}\text{d}y\\
=&\lim_{\tau\rightarrow 0}C\sqrt{\frac{2\pi}{A-B}}\int \frac{[\int v \mathcal{P}(y|v)\mathcal{N}\left(v|\sqrt{\frac{B}{A(A-B)}}\xi,\frac{1}{A}\right)\text{d}v]^2}{\int \mathcal{P}(y|v)\mathcal{N}\left(v|\sqrt{\frac{B}{A(A-B)}}\xi,\frac{1}{A}\right)\text{d}v}\text{D}\xi\text{d}y,
\label{Equ:qv_Num}
\end{align}
where $C=(2\pi)^{-\frac{\tau+1}{2}}\det(M\bs{Q}_S\odot \bs{Q}_C)^{-\frac{1}{2}}$.

The parameter $\chi_v$ is obtained by combining (\ref{Equ:chiv_Num}) and (\ref{Equ:chiv_Den}), and $q_v$ is obtained by combining (\ref{Equ:qv_Num}) and  (\ref{Equ:chiv_Den}). Additionally, the terms involving $\tau$ are directly replaced by themselves under the restriction of $\tau=0$. As a result, we get
\begin{align}
\chi_v&=M\chi_c\chi_s,\\
q_v&=\int\frac{\left[\int v\mathcal{P}(y|v)\mathcal{N}(v|\sqrt{Mq_sq_c}\xi,M(\chi_s\chi_c-q_sq_c))\text{d}v\right]^2}{\int \mathcal{P}(y|v)\mathcal{N}(v|\sqrt{Mq_sq_c}\xi,M(\chi_s\chi_c-q_sq_c))\text{d}v}\text{D}\xi\text{d}y.
\end{align}

Due to the replica symmetry structure, solving the equation (\ref{Equ:SD1}) and (\ref{Equ:SD3}) yields
\begin{align}
\hat{\chi}_c&=0,\\
\hat{q}_c&=\frac{q_s}{2\alpha \beta}\frac{q_v-Mq_sq_c}{M(\chi_s\chi_c-q_cq_s)^2},\\
\hat{\chi}_s&=0,\\
\hat{q}_s&=\frac{\gamma q_c}{2}\frac{q_v-Mq_sq_c}{M(\chi_s\chi_c-q_cq_s)^2}.
\end{align}

For (\ref{Equ:SD4}), we calculate the inverse term $(N\bs{Q}_H\odot \bs{Q}_X)^{-1}$ using matrix inverse lemma
\begin{align}
(N\bs{Q}_H\odot \bs{Q}_X)^{-1}
&=\frac{1}{N(\chi_h\chi_x-q_hq_x)}\mathbf{I}-\frac{q_hq_x}{N(\chi_h\chi_x-q_hq_x)(\chi_h\chi_x+\tau q_hq_x)}\mathbf{11}^{\text{T}}.
\end{align}
We define $E=\frac{1}{N(\chi_h\chi_x-q_hq_x)}$ and $F=\frac{q_hq_x}{N(\chi_h\chi_x-q_hq_x)(\chi_h\chi_x+\tau q_hq_x)}$.
Also, applying Hubbard-stratonvich transform the coupled exponent components in $\mathcal{P}(\bs{s})$ and $\mathcal{P}(\bs{u})$  can be decoupled as
\begin{align}
\exp \left(-\frac{1}{2}\bs{u}^{\text{T}}(N\bs{Q}_H\odot \bs{Q}_X)^{-1}\bs{u}\right)
&=\int \sqrt{\frac{\eta}{2\pi}}\exp \left(-\frac{E}{2}\sum_{a=0}^{\tau}(u^{(a)})^2-\frac{\eta}{2}\xi^2+\sqrt{\eta F}\xi\sum_{a=0}^{\tau}u^{(a)}\right)\text{d}\xi,
\label{Equ:Decoupling1}\\
 \exp \left(\bs{s}^{\text{T}}\hat{\bs{Q}}_S\bs{s}\right)
&=\int \sqrt{\frac{\varrho}{2\pi}}\exp \left(-\hat{q}_s\sum_{a=0}^{\tau}(s^{(a)})^2-\frac{\varrho}{2}\zeta^2+\sqrt{2\varrho\hat{q}_s}\zeta\sum_{a=0}^{\tau}s^{(a)}\right)\text{d}\zeta.
\label{Equ:Decoupling2}
\end{align}

Similar to the computation of $\chi_v$ and $q_v$ in (\ref{Equ:chiv_B})-(\ref{Equ:qv_Num}), we calculate the denominators and numerators of $\chi_s$, $q_s$, $\chi_u$, and $q_u$, respectively. Those parameters can be obtained by combining their denominators and numerators, and by setting $\tau=0$, which yields
\begin{align}
\chi_s&=\int s^2\mathcal{P}(s|u)\mathcal{N}(u|0,N\chi_x\chi_h)\text{d}u\text{d}s,\\
q_s&=\int \frac{\left[\int s\mathcal{N}_{s|u}(\sqrt{Nq_hq_x}\xi,N(\chi_h\chi_x-q_hq_x),\zeta,\frac{1}{2\hat{q}_s})\text{d}u\text{d}s\right]^2}
{\int \mathcal{N}_{s|u}(\sqrt{Nq_hq_x}\xi,N(\chi_h\chi_x-q_hq_x),\zeta,\frac{1}{2\hat{q}_s})\text{d}u\text{d}s}\text{D}\xi\text{d}\zeta,\\
\chi_u&=N\chi_x\chi_h,\\
q_u&=\int \frac{\left[\int u\mathcal{N}_{s|u}(\sqrt{Nq_hq_x}\xi,N(\chi_h\chi_x-q_hq_x),\zeta,\frac{1}{2\hat{q}_s})\text{d}s\text{d}u\right]^2}
{\int \mathcal{N}_{s|u}(\sqrt{Nq_hq_x}\xi,N(\chi_h\chi_x-q_hq_x),\zeta,\frac{1}{2\hat{q}_s})\text{d}s\text{d}u}\text{D}\xi\text{d}\zeta.
\end{align}
where $\mathcal{N}_{s|u}(a,A,b,B)=\mathcal{P}(s|u)\mathcal{N}(u|a,A)\mathcal{N}(s|b,B)$. The detailed derivation of computing the parameters $(\chi_s,q_s,\chi_u,q_u)$ is given in Appendix \ref{Appendix:F}.

By replica symmetry structure, solving the equations (\ref{Equ:SD5}) and (\ref{Equ:SD7}) yields
\begin{align}
\hat{\chi}_h&=0,\\
\hat{q}_h&=\frac{q_x}{2\alpha}\frac{q_u-Nq_hq_x}{N(\chi_h\chi_x-q_hq_x)^2},\\
\hat{\chi}_x&=0,
\label{Equ:chi_x}\\
\hat{q}_x&=\frac{\beta q_h}{2}\frac{q_u-Nq_hq_x}{N(\chi_h\chi_x-q_hq_x)^2}.
\end{align}

 We move to the computation of the remaining equations, i.e., (\ref{Equ:SD2}), (\ref{Equ:SD6}), and (\ref{Equ:SD8}). Here, we only give the procedures of evaluating (\ref{Equ:SD8}) while the evaluations of (\ref{Equ:SD2}) and (\ref{Equ:SD6}) are the same as that of (\ref{Equ:SD8}). By the fact $\hat{\chi}_x=0$ and  Hubbard-Stratonovich transform, we have
\begin{align}
\lim_{\tau \rightarrow 0}\mathbb{E}_{\bs{x}}\left\{\exp \left(\bs{x}^{\text{T}}\hat{\bs{Q}}_X\bs{x}\right)\right\}
=&\int\sqrt{\frac{\eta}{2\pi}}\exp \left[-\frac{\eta}{2}\left(\xi'-\sqrt{\frac{2\hat{q}_x}{\eta}}x\right)^2\right]\mathcal{P}(x)\text{d}x\text{d}\xi'\\
\overset{(a)}{=}&\int \mathcal{N}\left(\xi|x,\frac{1}{2\hat{q}_x}\right)\mathcal{P}(x)\text{d}x\text{d}\xi
\label{Equ:SISO}\\
=&1,
\end{align}
where $(a)$ holds by changing of variable $\xi'=\sqrt{\frac{2\hat{q}_x}{\eta}}\xi$. Furthermore, we calculate
\begin{align}
\chi_x
&=\lim_{\tau \rightarrow 0}\mathbb{E}_{\bs{x}}\left\{(x_0)^2\exp \left(\bs{x}^{\text{T}}\hat{\bs{Q}}_X\bs{x}\right)\right\}=\int x^2\mathcal{P}(x)\text{d}x,\\
q_x
&=\lim_{\tau\rightarrow 0}\mathbb{E}_{\bs{x}}\left\{x_0x_1\exp \left(\bs{x}^{\text{T}}\hat{\bs{Q}}_X\bs{x}\right)\right\}=\int \frac{\left[\int x\mathcal{P}(x)\mathcal{N}(x|\zeta,\frac{1}{2\hat{q}_x})\text{d}x\right]^2}
{\int \mathcal{P}(x)\mathcal{N}(x|\zeta,\frac{1}{2\hat{q}_x})\text{d}x}\text{d}\zeta.
\end{align}
Indeed, the following equivalent single-input and single-output (SISO) system can be directly established from (\ref{Equ:SISO})
\begin{align}
y_x=x+w_x \ \text{with} \ w_x\sim \mathcal{N}(0,\frac{1}{2\hat{q}_x}).
\end{align}
Accordingly, the MSE of $x$ is expressed as a combination of parameters ($\chi_x,q_x$) i.e.,
\begin{align}
\textsf{mse}_x=\chi_x-q_x.
\end{align}

Similar to (\ref{Equ:SD8}), solving equations (\ref{Equ:SD2}) and (\ref{Equ:SD6}) yields
\begin{align}
y_{h}&=h+w_{h} \ \text{with}\ w_{h}\sim \mathcal{N}(0,\frac{1}{2\hat{q}_h}),\\
y_{c}&=c+w_{c} \ \text{with}\ w_{c}\sim \mathcal{N}(0,\frac{1}{2\hat{q}_c}),
\end{align}
The MSEs of MMSE estimators of $h$ and $c$ are given by
\begin{align}
\textsf{mse}_{h}=\chi_h-q_h,\\
\textsf{mse}_{c}=\chi_c-q_c,
\end{align}
where $\chi_h=\int h^2\mathcal{P}(h)\text{d}h$, $\chi_c=\int c^2\mathcal{P}(c)\text{d}c$,  and
\begin{align}
q_h=\int \frac{\left[\int h\mathcal{P}(h)\mathcal{N}(h|\zeta,\frac{1}{2\hat{q}_h})\text{d}h\right]^2}
{\int \mathcal{P}(h)\mathcal{N}(h|\zeta,\frac{1}{2\hat{q}_h})\text{d}h}\text{d}\zeta,\\
q_c=\int \frac{\left[\int c\mathcal{P}(c)\mathcal{N}(c|\zeta,\frac{1}{2\hat{q}_c})\text{d}c\right]^2}
{\int \mathcal{P}(c)\mathcal{N}(c|\zeta,\frac{1}{2\hat{q}_c})\text{d}c}\text{d}\zeta.
\end{align}

In summary, the parameters ($\chi_c,q_c,\hat{q}_c,\chi_s, q_s, \hat{q}_s, \chi_h,q_h$, $\hat{q}_h, \chi_x, q_x, \hat{q}_x$ ) constitute the fixed point of MMSE estimator in two-layer model case. It is easy to validate that the fixed points of the exact MMSE estimator by replica method match perfectly with the SE equations of ML-BiGAMP ($L=2$) depicted in Algorithm 2.

\subsection{Extension to Multi-Layer}
To extend the results of two-layer to multi-layer case, the procedures include: Appendix \ref{Sec:Firstlayer} (begin at last layer) $\rightarrow$ Appendix \ref{Sec:Previouslayer} (move to previous layer) $\rightarrow \cdots \rightarrow $ Appendix \ref{Sec:Previouslayer} (until the first layer) $\rightarrow $ Appendix \ref{Sec:RS} (replica symmetry solution). After some algebras, the fixed point equations of MMSE in multi-layer regime derived by replica method are summarized as (39)-(48). It can be found that the ML-BiGAMP'SE in Algorithm 2 matches perfectly the fixed point equations of MMSE in multi-layer regime under the setting
\begin{align}
\Sigma^{(x,\ell)}=\frac{1}{2\hat{q}_x^{(\ell)}}, \quad \Sigma^{(h,\ell)}=\frac{1}{2\hat{q}_h^{(\ell)}}.
\end{align}
This consistency indicates that the Bayes-optimal error can be achieved by the efficient ML-BiGAMP algorithm.

\section{}
\label{Appendix:D}
Here we explain the reason of ignoring the item $\sum_{m=1}^Mv_{mn}^{(h,\ell)}(t)(|\hat{s}_{mk}^{(\ell)}(t)|^2-v^{(s,\ell)}_{mk}(t))$. This item can be written as
\begin{align}
\sum_{m=1}^Mv_{mn}^{(h,\ell)}(t)(|\hat{s}_{mk}^{(\ell)}(t)|^2-v^{(s,\ell)}_{mk}(t))
=&\sum_{m=1}^Mv_{mn}^{(h,\ell)}(t)\left(\frac{(\tilde{z}_{mk}^{(\ell)}-Z_{mk}^{(\ell)})^2}{(V_{mk}^{(\ell)}(t))^2}-\frac{V_{mk}^{(\ell)}(t)-\tilde{v}_{mk}^{(\ell)}(t)}{(V_{mk}^{(\ell)}(t))^2}\right)\\
=&\sum_{m=1}^M\frac{v_{mn}^{(h,\ell)}(t)}{V_{mk}^{(\ell)}(t)}\left(\mathbb{E}\left\{\frac{\left(z_{mk}^{(\ell)}-Z_{mk}^{(\ell)}(t)\right)^2}{V_{mk}^{(\ell)}(t)}\right\}-1\right),
\label{App:A1}
\end{align}
We now show the reason of $\mathbb{E}\left\{\frac{\left(z_{mk}^{(\ell)}-Z_{mk}^{(\ell)}(t)\right)^2}{V_{mk}^{(\ell)}(t)}\right\}=1$ with expectation over $\zeta_{mk}^{(\ell)}(t)$ in (\ref{Pz_A}) for  $\ell<L$ and in (\ref{Pz_B}) for $\ell=L$. In large system limits, we assume
\begin{align}
v^{(h,\ell)}_{mn}(t)&\approx \frac{1}{N_{\ell+1}N_{\ell}}\sum_{m=1}^{N_{\ell+1}}\sum_{n=1}^{N_{\ell}}v^{(h,\ell)}_{mn}(t)=\overline{v^{(h,\ell)}}(t),\\
v^{(x,\ell)}_{nk}(t)&\approx \frac{1}{N_{\ell}K}\sum_{n=1}^{N_{\ell}}\sum_{k=1}^Kv^{(x,\ell)}_{nk}(t)=\overline{v^{(x,\ell)}}(t).
\end{align}
We define $V^{(\ell)}(t)$ from $V^{(\ell)}_{mk}(t)$ by using $\overline{v^{(h,\ell)}}(t)$ and $\overline{v^{(x,\ell)}}(t)$ to replace $v^{(h,\ell)}_{mn}(t)$ and $v^{(x,\ell)}_{nk}(t)$. By empirical convergence of RVs, we ignore subscripts and iteration times and approximate  (\ref{App:A1}) as
\begin{align}
(\ref{App:A1})
&\approx  \frac{M\overline{v^{(h,\ell)}}}{V^{(\ell)}}\frac{1}{KM}\sum_{m=1}^M\sum_{k=1}^K\left(\mathbb{E}\left\{\frac{\left(z_{mk}^{(\ell)}-Z_{mk}^{(\ell)}\right)^2}{V_{mk}^{(\ell)}}\right\}-1\right)\\
&\approx \frac{M\overline{v^{(h,\ell)}}}{V^{(\ell)}} \mathbb{E}\left\{\mathbb{E}\left\{\frac{\left(z^{(\ell)}-Z^{(\ell)}\right)^2}{V^{(\ell)}}\right\}-1\right\},
\label{App:A2}
\end{align}
where the inner expectation is taken over $p(z^{(\ell)}|y)$ while the outer expectation is  over $\mathcal{P}(Z^{(\ell)},R^{(x,\ell+1)})$ in $\ell<L$ or  $\mathcal{P}(Z^{(\ell)},y)$ in $\ell=L$ given by
\begin{align}
\nonumber
\mathcal{P}(Z^{(\ell)},R^{(x,\ell+1)})&=\mathcal{P}(Z^{(\ell)})\int \mathcal{P}(x^{(\ell)}|z^{(\ell)}) \mathcal{N}(z^{(\ell)}|Z^{(\ell)},V^{(\ell)})\\
&\qquad \times \mathcal{N}(x^{(\ell+1)}|R^{(x,\ell+1)},V^{(x,\ell+1)})\text{d}x^{(\ell+1)}\text{d}z^{(\ell)},\\
\mathcal{P}(Z^{(L)},y)&=\mathcal{P}(Z^{(L)})\int \mathcal{P}(y|z^{(L)})\mathcal{N}(z^{(L)}|Z^{(L)},V^{(L)})\text{d}z^{(L)},
\end{align}
with $\mathcal{P}(Z^{(\ell)})=\mathcal{N}(Z^{(\ell)}|0,\chi_z^{(\ell)}-V^{(\ell)})$, $\chi_z^{(\ell)}=N\chi_h\chi_x$, $\chi_h^{(\ell)}=\int |h^{(\ell)}|^2\mathcal{P}(h^{(\ell)})\text{d}h^{(\ell)}$, and $\chi_x^{(\ell)}=\int |x^{(\ell)}|^2\mathcal{P}(x^{(\ell)}|z^{(\ell)})\mathcal{N}(z^{(\ell-1)}|0,\chi_z^{(\ell-1)})\text{d}z^{(\ell-1)}\text{d}x^{(\ell)}$.

 From (\ref{App:A2}), for $\ell<L$ or $\ell=L$, we  have
\begin{align}
\mathbb{E}\left\{\mathbb{E}\left\{\frac{\left(z^{(\ell)}-Z^{(\ell)}\right)^2}{V^{(\ell)}}\right\}\right\}-1=0.
\end{align}
Similarly, the term $\sum_{k=1}^{K}v_{nk}^{(x,\ell)}(t)(|\hat{s}_{mk}^{(\ell)}(t)|^2-v^{(s,\ell)}_{mk}(t))$ can also be neglected.

\section{Rate function of $\bs{Q}_C$ and $\bs{Q}_S$}
\label{Appendix:E}
The auxiliary matrices $\bs{Q}_C=\{Q_C^{ab}, \forall a,b\}$ and $\bs{Q}_S=\{Q_S^{ab},\forall a,b\}$ are defined as below
\begin{align}
1&=\int \prod_{p=1}^{P}\prod_{0\leq a\leq b}^{\tau}\delta\left(MQ_C^{ab}-\sum_{m=1}^Mc_{pm}^{(a)}c_{pm}^{(b)}\right)\text{d}Q_C^{ab},\\
1&=\int \prod_{k=1}^{K}\prod_{0\leq a\leq b}^{\tau}\delta\left(MQ_C^{ab}-\sum_{m=1}^Ms_{mk}^{(a)}s_{mk}^{(b)}\right)\text{d}Q_C^{ab},
\end{align}
with probability measure
\begin{align}
\!\!\!\!\!
\mathcal{P}(\bs{Q}_C)&=\mathbb{E}_{\boldsymbol{\mathcal{C}}}\left\{\prod_{p=1}^{P}\prod_{0\leq a\leq b}^{\tau}\delta\left(MQ_C^{ab}-\sum_{m=1}^Mc_{pm}^{(a)}c_{pm}^{(b)}\right)\right\},\\
\!\!\!\!\!
\mathcal{P}(\bs{Q}_S)&=\mathbb{E}_{\boldsymbol{\mathcal{S}}}\left\{\prod_{k=1}^{K}\prod_{0\leq a\leq b}^{\tau}\delta\left(MQ_S^{ab}-\sum_{m=1}^Ms_{mk}^{(a)}s_{mk}^{(b)}\right)\right\}.
\end{align}
To present their rate functions, we firstly introduce the Fourier representation of Dirac function.

\subsection{Fourier representation of Dirac function}
\label{Fourier}

By the fact
\begin{align}
\delta(x)=\frac{1}{2\pi}e^{\mathbb{J}\tilde{x}x}\text{d}\tilde{x}=\delta(\mathbb{J} x)=\frac{1}{2\pi}\int \exp \left({-\tilde{x}x}\right)\text{d}\tilde{x},
\end{align}
we have
\begin{align}
\delta \left(MQ_C^{ab}-\sum_{m=1}^M c_{pm}^{(a)}c_{pm}^{(b)}\right)=\frac{1}{2\pi}\int \exp \left[{-\tilde{Q}_C^{ab}\left(MQ_C^{ab}-\sum_{m=1}^Mc_{pm}^{(a)}c_{pm}^{(b)}\right)}\right]\text{d}\tilde{Q}_C^{ab},
\end{align}
and further
\begin{align}
\nonumber
&\mathbb{E}_{\boldsymbol{\mathcal{C}}}\left\{\prod_{p=1}^P \prod_{a\leq b}\delta\left(MQ_C^{ab}-\sum_{m=1}^Mc_{pm}^{(a)}c_{pm}^{(b)}\right)\right\}\\
&= \frac{1}{(2\pi)^{\frac{P(\tau+2)(\tau+1)}{2}}} \mathbb{E}_{\boldsymbol{\mathcal{C}}}\left\{\int\exp \left(-PM\sum_{a\leq b}\tilde{Q}_{C}^{ab}Q_C^{ab}\right)\exp \left(\sum_{p=1}^P\sum_{a\leq b}\sum_{m=1}^M\tilde{Q}_C^{ab}c_{pm}^{(a)}c_{pm}^{(b)}\right)\text{d}\tilde{\bs{Q}}_C\right\}.
\label{APEN_1}
\end{align}
Note that the summation is over $a\leq b$ because $Q_C^{ab}=Q_C^{ba}$. Finally, we make the change of variables by
\begin{align}
&\forall a, \quad\ \ \ \hat{Q}_C^{aa}=\tilde{Q}_C^{aa},\\
&\forall a\ne b, \ \hat{Q}_C^{ab}=2\tilde{Q}_C^{ab},
\end{align}
which allows us to write the sums in (\ref{APEN_1}) more compactly
\begin{align}
\sum_{a\leq b}^{\tau}\tilde{Q}_{C}^{ab}Q_C^{ab}&=\text{tr}\left(\bs{Q}_C\hat{\bs{Q}}_C\right),\\
\sum_{a\leq b}^{\tau}\tilde{Q}_C^{ab}c_{pm}^{(a)}c_{pm}^{(b)}&=\bs{c}_{pm}^{\text{T}}\hat{\bs{Q}}_C\bs{c}_{pm},
\end{align}
where $\bs{c}_{pm}\overset{\triangle}{=}\{c_{pm}^{(a)}, \forall a \}$.

\subsection{Rate function $\mathcal{R}^{(\tau)}(\bs{Q}_C)$ and $\mathcal{R}^{(\tau)}(\bs{Q}_S)$}
\label{Rate}
Using Fourier transform representation of Dirac function above, we rewrite $\mathcal{P}(\bs{Q}_C)$ as
\begin{align}
\mathcal{P}(\bs{Q}_C)=\text{const}\cdot \int  \mathbb{E}_{\boldsymbol{\mathcal{C}}}\left\{\exp \left(\sum_{p=1}^P\sum_{m=1}^M\bs{c}_{pm}^{\text{T}}\hat{\bs{Q}}_C\bs{c}_{pm}\right)\right\}\exp(-PM\text{tr}(\hat{\bs{Q}}_C\bs{Q}_C))\text{d}\hat{\bs{Q}}_C,
\end{align}
where ``const'' denotes a constant. We then evaluate
\begin{align}
\mathcal{R}^{(\tau)}(\bs{Q}_C)
&=-\frac{1}{PM}\log \mathcal{P}(\bs{Q}_C)\\
&=-\frac{1}{PM}\log \int  \mathbb{E}_{\boldsymbol{\mathcal{C}}}\left\{\exp \left(\sum_{p=1}^P\sum_{m=1}^M\bs{c}_{pm}^{\text{T}}\hat{\bs{Q}}_C\bs{c}_{pm}\right)\right\}\exp(-PM\text{tr}(\hat{\bs{Q}}_C\bs{Q}_C))\text{d}\hat{\bs{Q}}_C+o\\
&=\sup_{\hat{\bs{Q}}_C}\left\{\text{tr}(\hat{\bs{Q}}_C\bs{Q}_C)-\frac{1}{PM}\log \mathbb{E}_{\boldsymbol{\mathcal{C}}}\left\{\exp \left(\sum_{p=1}^P\sum_{m=1}^M\bs{c}_{pm}^{\text{T}}\hat{\bs{Q}}_C\bs{c}_{pm}\right)\right\} \right\}.
\end{align}

By the fact
\begin{align}
\nonumber
&\frac{1}{PM}\log \mathbb{E}_{\boldsymbol{\mathcal{C}}}\left\{\exp \left(\sum_{p=1}^P\sum_{m=1}^M\bs{c}_{pm}^{\text{T}}\hat{\bs{Q}}_C\bs{c}_{pm}\right)\right\}\\
=&\frac{1}{PM}\log \left(\prod_{p=1}^P\prod_{m=1}^M\mathbb{E}_{\bs{c}_{pm}}\left\{\exp \left(\bs{c}_{pm}^{\text{T}}\hat{\bs{Q}}_C\bs{c}_{pm}\right)\right\}\right)\\
=&\log \mathbb{E}_{\bs{c}}\left\{\exp \left(\bs{c}^{\text{T}}\hat{\bs{Q}}_C\bs{c}\right)\right\},
\end{align}
the rate function $\mathcal{R}^{(\tau)}(\bs{Q}_{C})$ can also be written as
\begin{align}
\mathcal{R}^{(\tau)}(\bs{Q}_C)&=\sup_{\hat{\bs{Q}}_C}\left\{\text{tr}(\hat{\bs{Q}}_C\bs{Q}_C)-\log \mathbb{E}_{\boldsymbol{\mathcal{C}}}\left\{\exp \left(\bs{c}^{\text{T}}\hat{\bs{Q}}_C\bs{c}\right)\right\} \right\}.
\end{align}

Similar to calculating $\mathcal{R}^{(\tau)}(\bs{Q}_C)$, the following can be obtained
\begin{align}
\mathcal{R}^{(\tau)}(\bs{Q}_{S})&=-\frac{1}{MK}\log \mathcal{P}(\bs{Q}_S)\\
&=\sup_{\hat{\bs{Q}}_S}\left\{\text{tr}(\hat{\bs{Q}}_S\bs{Q}_S)-\frac{1}{MK}\log \mathbb{E}_{\boldsymbol{\mathcal{S}}}\left\{\exp \left(\sum_{m=1}^M\sum_{k=1}^K\bs{s}_{mk}^{\text{T}}\hat{\bs{Q}}_S\bs{s}_{mk}\right)\right\}\right\}.
\end{align}

\section{Calculation of parameters ($\chi_s,q_s,\chi_u,q_u$)}
\label{Appendix:F}
With decoupling operations (\ref{Equ:Decoupling1})-(\ref{Equ:Decoupling2}), we first calculate the denominator of $\chi_s$
\begin{align}
\nonumber
&\lim_{\tau\rightarrow 0}\int_{\bs{s}}\int_{\bs{u}}\exp \left(\bs{s}^{\text{T}}\hat{\bs{Q}}_S\bs{s}\right)\mathcal{P}(\bs{s}|\bs{u})\mathcal{N}(\bs{u}|\bs{0},N\bs{Q}_H\odot \bs{Q}_X)\text{d}\bs{u}\text{d}\bs{s}\\
\nonumber
=&\lim_{\tau\rightarrow 0}C\int_{\bs{s}}\int_{\bs{u}}\mathcal{P}(\bs{s}|\bs{u})\left[\int \sqrt{\frac{\eta}{2\pi}}\exp \left(-\frac{1}{2}E\sum_{a=0}^{\tau}(u^{(a)})^2-\frac{\eta}{2}\xi^2+\sqrt{\eta F}\xi\sum_{a=0}^{\tau}u^{(a)}\right)\text{d}\xi\right]\\
&\qquad \times \left[\int \sqrt{\frac{\varrho}{2\pi}}\exp \left(-\hat{q}_s\sum_{a=0}^{\tau}(s^{(a)})^2-\frac{\varrho}{2}\zeta^2+\sqrt{2\varrho\hat{q}_s}\zeta\sum_{a=0}^{\tau}s^{(a)}\right)\text{d}\zeta\right]\text{d}\bs{u}\text{d}\bs{s}\\
\nonumber
=&\lim_{\tau\rightarrow 0}C\int_{\zeta}\int_{\xi} \left[\int_{s}\int_u\mathcal{P}(s|u)\exp \left(-\frac{1}{2}Eu^2+\sqrt{\eta F}u\xi\right)\exp \left(-\hat{q}_ss^2+\sqrt{2\varrho\hat{q}_s}\zeta s\right)\text{d}u\text{d}s\right]^{\tau+1}\\
&\qquad \times \sqrt{\frac{\eta}{2\pi}}\exp \left(-\frac{\eta}{2}\xi^2\right)\sqrt{\frac{\varrho}{2\pi}}\exp \left(-\frac{\varrho}{2}\zeta^2\right)\text{d}\xi \text{d}\zeta\\
\nonumber
=&\lim_{\tau\rightarrow 0}C\int_{\zeta}\int_{\xi} \left[\int_s\int_u\mathcal{P}(s|u)\mathcal{N}\left(u|\frac{\sqrt{\eta F}}{E}\xi,\frac{1}{E}\right)\mathcal{N}\left(s|\sqrt{\frac{\varrho}{2\hat{q}_s}}\zeta,\frac{1}{2\hat{q}_s}\right)\text{d}u\text{d}s\right]\\
&\qquad \times \sqrt{\frac{\varrho}{2\hat{q}_s}}\sqrt{\frac{2\pi}{E-F}}\mathcal{N}\left(\xi|0,\frac{E}{\eta(E-F)}\right)\text{d}\xi\text{d}\zeta.
\label{Equ:DenS}
\end{align}
Let $\zeta\leftarrow \sqrt{\frac{\varrho}{2\hat{q}_s}}\zeta$, $\xi\leftarrow \sqrt{\frac{\eta(E-F)}{E}}\xi$, we write the equation above as
\begin{align}
(\ref{Equ:DenS})
&=\lim_{\tau\rightarrow 0}C\sqrt{\frac{2\pi}{E-F}}\int_{\zeta}\int_{\xi}\int_{s}\int_u \mathcal{P}(s|u)\mathcal{N}\left(s|\zeta,\frac{1}{2\hat{q}_s}\right)\mathcal{N}\left(u|\sqrt{\frac{F}{E(E-F)}}\xi,\frac{1}{E}\right)\text{d}u\text{d}s\text{D}\xi\text{d}\zeta\\
&=\lim_{\tau\rightarrow 0}C\sqrt{\frac{2\pi}{E-F}}\int_{\xi}\int_u \mathcal{N}\left(u|\sqrt{\frac{F}{E(E-F)}}\xi,\frac{1}{E}\right)\text{d}u\text{D}\xi\\
&\overset{(a)}{=}\lim_{\tau\rightarrow 0}C\sqrt{\frac{2\pi}{E-F}}\int_u \sqrt{\frac{E(E-F)}{F}} \mathcal{N}\left(\sqrt{\frac{E(E-F)}{F}}u|0,\frac{E-F}{F}+1\right)\text{d}u\\
&=\lim_{\tau\rightarrow 0}C\sqrt{\frac{2\pi}{E-F}}\int_u\mathcal{N}\left(u|0,\frac{1}{E-F}\right)\text{d}u\\
&=\lim_{\tau\rightarrow 0}C\sqrt{\frac{2\pi}{E-F}},
\label{Equ:S_Den}
\end{align}
where $C=(2\pi)^{-\frac{\tau+1}{2}}[\text{det}(N\bs{Q}_H\odot \bs{Q}_X)]^{-\frac{1}{2}}$ and $(a)$ holds by Gaussian reproduction property.

The numerator of $\chi_s$ is calculated by
\begin{align}
\nonumber
&\lim_{\tau\rightarrow 0}\int_{\bs{s}}\int_{\bs{u}}(s^{(0)})^2\exp \left(\bs{s}^{\text{T}}\hat{\bs{Q}}_S\bs{s}\right)\mathcal{P}(\bs{s}|\bs{u})\mathcal{N}(\bs{u}|\bs{0},N\bs{Q}_H\odot \bs{Q}_X)\text{d}\bs{u}\text{d}\bs{s}\\
\nonumber
=&\lim_{\tau\rightarrow 0}C\int_{\bs{s}}(s^{(0)})^2\int_{\bs{u}}\mathcal{P}(\bs{s}|\bs{u})\left[\int \sqrt{\frac{\eta}{2\pi}}\exp \left(-\frac{1}{2}E\sum_{a=0}^{\tau}(u^{(a)})^2-\frac{\eta}{2}\xi^2+\sqrt{\eta F}\xi\sum_{a=0}^{\tau}u^{(a)}\right)\text{d}\xi\right]\\
&\qquad \times \left[\int \sqrt{\frac{\varrho}{2\pi}}\exp \left(-\hat{q}_s\sum_{a=0}^{\tau}(s^{(a)})^2-\frac{\varrho}{2}\zeta^2+\sqrt{2\varrho\hat{q}_s}\zeta\sum_{a=0}^{\tau}s^{(a)}\right)\text{d}\zeta\right]\text{d}\bs{u}\text{d}\bs{s}\\
\nonumber
=&\lim_{\tau\rightarrow 0}C\int_{\zeta}\int_{\xi}\left[\int_s\int_us^2 \mathcal{P}(s|u)\exp \left(-\frac{1}{2}Eu^2+\sqrt{\eta F}u\xi\right)\exp \left(-\hat{q}_ss^2+\sqrt{2\varrho\hat{q}_s}\zeta s\right)\text{d}u\text{d}s \right]\\
\nonumber
&\qquad \times  \left[\int_s\int_u \mathcal{P}(s|u)\exp \left(-\frac{1}{2}Eu^2+\sqrt{\eta F}u\xi\right)\exp \left(-\hat{q}_ss^2+\sqrt{2\varrho\hat{q}_s}\zeta s\right)\text{d}u\text{d}s \right]^{\tau}\\
&\qquad  \times \sqrt{\frac{\eta}{2\pi}}\exp \left(-\frac{\eta}{2}\xi^2\right)\sqrt{\frac{\varrho}{2\pi}}\exp \left(-\frac{\varrho}{2}\zeta^2\right)\text{d}\xi \text{d}\zeta\\
=& \lim_{\tau\rightarrow 0}C\sqrt{\frac{2\pi}{E-F}}\int_{\zeta}\int_{\xi}\int_{s}\int_u s^2 \mathcal{P}(s|u)\mathcal{N}\left(s|\zeta,\frac{1}{2\hat{q}_s}\right)\mathcal{N}\left(u|\sqrt{\frac{F}{E(E-F)}}\xi,\frac{1}{E}\right)\text{d}u\text{d}s\text{D}\xi\text{d}\zeta\\
=&\lim_{\tau \rightarrow 0}C\sqrt{\frac{2\pi}{E-F}}\int_s\int_u s^2\mathcal{P}(s|u)\mathcal{N}\left(u|0,\frac{1}{E-F}\right) \text{d}u\text{d}s.
\label{Equ:S_Num}
\end{align}
Combining (\ref{Equ:S_Den}) and (\ref{Equ:S_Num}) yields
\begin{align}
\chi_s
&=\lim_{\tau\rightarrow 0}\int_s\int_us^2\mathcal{P}(s|u)\mathcal{N}\left(u|0,\frac{1}{E-F}\right)\text{d}u\text{d}s\\
&=\int_s\int_us^2\mathcal{P}(s|u)\mathcal{N}(u|0,N\chi_x\chi_h)\text{d}u\text{d}s.
\end{align}

The number of $q_s$ is given
\begin{align}
\nonumber
&\lim_{\tau\rightarrow 0}\int_{\bs{s}}\int_{\bs{u}}s^{(0)}s^{(1)}\exp \left(\bs{s}^{\text{T}}\hat{\bs{Q}}_S\bs{s}\right)\mathcal{P}(\bs{s}|\bs{u})\mathcal{N}(\bs{u}|\bs{0},N\bs{Q}_H\odot \bs{Q}_X)\text{d}\bs{u}\text{d}\bs{s}\\
\nonumber
=&\lim_{\tau\rightarrow 0}C\int_{\bs{s}}s^{(0)}s^{(1)}\int_{\bs{u}}\mathcal{P}(\bs{s}|\bs{u})\left[\int \sqrt{\frac{\eta}{2\pi}}\exp \left(-\frac{1}{2}E\sum_{a=0}^{\tau}(u^{(a)})^2-\frac{\eta}{2}\xi^2+\sqrt{\eta F}\xi\sum_{a=0}^{\tau}u^{(a)}\right)\text{d}\xi\right]\\
&\qquad \times \left[\int \sqrt{\frac{\varrho}{2\pi}}\exp \left(-\hat{q}_s\sum_{a=0}^{\tau}(s^{(a)})^2-\frac{\varrho}{2}\zeta^2+\sqrt{2\varrho\hat{q}_s}\zeta\sum_{a=0}^{\tau}s^{(a)}\right)\text{d}\zeta\right]\text{d}\bs{u}\text{d}\bs{s}\\
\nonumber
=&\lim_{\tau\rightarrow 0}C\int_{\zeta}\int_{\xi}\left[\int_s\int_u s \mathcal{P}(s|u)\exp \left(-\frac{1}{2}Eu^2+\sqrt{\eta F}u\xi\right)\exp \left(-\hat{q}_ss^2+\sqrt{2\varrho\hat{q}_s}\zeta s\right)\text{d}u\text{d}s \right]^2\\
\nonumber
&\qquad \times  \left[\int_s\int_u \mathcal{P}(s|u)\exp \left(-\frac{1}{2}Eu^2+\sqrt{\eta F}u\xi\right)\exp \left(-\hat{q}_ss^2+\sqrt{2\varrho\hat{q}_s}\zeta s\right)\text{d}u\text{d}s \right]^{\tau-1}\\
&\qquad  \times \sqrt{\frac{\eta}{2\pi}}\exp \left(-\frac{\eta}{2}\xi^2\right)\sqrt{\frac{\varrho}{2\pi}}\exp \left(-\frac{\varrho}{2}\zeta^2\right)\text{d}\xi \text{d}\zeta\\
\nonumber
=&\lim_{\tau\rightarrow 0}C\int_{\zeta}\int_{\xi}\frac{\left[\int_s\int_us \mathcal{P}(s|u)\mathcal{N}\left(u|\frac{\sqrt{\eta F}}{E}\xi,\frac{1}{E}\right)\mathcal{N}\left(s|\sqrt{\frac{\varrho}{2\hat{q}_s}}\zeta,\frac{1}{2\hat{q}_s}\right)\text{d}u\text{d}s\right]^2}
{\int_s\int_u \mathcal{P}(s|u)\mathcal{N}\left(u|\frac{\sqrt{\eta F}}{E}\xi,\frac{1}{E}\right)\mathcal{N}\left(s|\sqrt{\frac{\varrho}{2\hat{q}_s}}\zeta,\frac{1}{2\hat{q}_s}\right)\text{d}u\text{d}s}\\
&\qquad \times \sqrt{\frac{\varrho}{2\hat{q}_s}}\sqrt{\frac{2\pi}{E-F}}\mathcal{N}\left(\xi|0,\frac{E}{\eta(E-F)}\right)\text{d}\xi\text{d}\zeta.
\label{Equ:Numqs}
\end{align}
Let $\zeta\leftarrow \sqrt{\frac{\varrho}{2\hat{q}_s}}\zeta$, $\xi\leftarrow \sqrt{\frac{\eta(E-F)}{E}}\xi$, we have
\begin{align}
(\ref{Equ:Numqs})=\lim_{\tau\rightarrow 0}C\sqrt{\frac{2\pi}{E-F}}\int_{\zeta}\int_{\xi}\frac{\left[\int_s\int_us \mathcal{P}(s|u)\mathcal{N}\left(u|\sqrt{\frac{F}{E(E-F)}}\xi,\frac{1}{E}\right)\mathcal{N}\left(s|\zeta,\frac{1}{2\hat{q}_s}\right)\text{d}u\text{d}s\right]^2}
{\int_s\int_u \mathcal{P}(s|u)\mathcal{N}\left(u|\sqrt{\frac{F}{E(E-F)}}\xi,\frac{1}{E}\right)\mathcal{N}\left(s|,\frac{1}{2\hat{q}_s}\right)\text{d}u\text{d}s}\text{D}\xi\text{d}\zeta.
\label{Equ:qs_Num}
\end{align}
Combining (\ref{Equ:qs_Num}) and (\ref{Equ:S_Den}) gets
\begin{align}
q_s
&=\lim_{\tau\rightarrow 0}\int_{\zeta}\int_{\xi}\frac{\left[\int_s\int_us \mathcal{P}(s|u)\mathcal{N}\left(u|\sqrt{\frac{F}{E(E-F)}}\xi,\frac{1}{E}\right)\mathcal{N}\left(s|\zeta,\frac{1}{2\hat{q}_s}\right)\text{d}u\text{d}s\right]^2}
{\int_s\int_u \mathcal{P}(s|u)\mathcal{N}\left(u|\sqrt{\frac{F}{E(E-F)}}\xi,\frac{1}{E}\right)\mathcal{N}\left(s|,\frac{1}{2\hat{q}_s}\right)\text{d}u\text{d}s}\text{D}\xi\text{d}\zeta\\
&=\int_{\zeta}\int_{\xi}\frac{\left[\int_s\int_u s\mathcal{P}(s|u)\mathcal{N}\left(u|\sqrt{Nq_xq_h}\xi,N(\chi_h\chi_x-q_hq_x)\right)\mathcal{N}(s|\zeta,\frac{1}{2\hat{q}_s})\text{d}u\text{d}s\right]^2}
{\int_s\int_u s\mathcal{P}(s|u)\mathcal{N}\left(u|\sqrt{Nq_xq_h}\xi,N(\chi_h\chi_x-q_hq_x)\right)\mathcal{N}(s|\zeta,\frac{1}{2\hat{q}_s})\text{d}u\text{d}s}
\text{D}\xi\text{d}\zeta\\
&=\int_{\zeta}\int_{\xi}\frac{\left[\int s\mathcal{N}_{s|u}\left(\sqrt{Nq_xq_h}\xi,N(\chi_h\chi_x-q_hq_x),\zeta,\frac{1}{2\hat{q}_s}\right)\text{d}u\text{d}s\right]^2}{\int \mathcal{N}_{s|u}\left(\sqrt{Nq_xq_h}\xi,N(\chi_h\chi_x-q_hq_x),\zeta,\frac{1}{2\hat{q}_s}\right)\text{d}u\text{d}s}\text{D}\xi\text{d}\zeta,
\end{align}
where $\mathcal{N}_{s|u}(\cdot)=\mathcal{P}(s|u)\mathcal{N}(u|\cdot)\mathcal{N}(s|\cdot)$.

We then move to calculating $\chi_u$ and $q_u$. The number of $\chi_u$ is given by
\begin{align}
\nonumber
&\lim_{\tau\rightarrow 0}\int_{\bs{s}}\int_{\bs{u}}(u^{(0)})^2\exp \left(\bs{s}^{\text{T}}\hat{\bs{Q}}_S\bs{s}\right)\mathcal{P}(\bs{s}|\bs{u})\mathcal{N}(\bs{u}|\bs{0},N\bs{Q}_H\odot \bs{Q}_X)\text{d}\bs{u}\text{d}\bs{s}\\
=& \lim_{\tau\rightarrow 0}C\sqrt{\frac{2\pi}{E-F}}\int_{\zeta}\int_{\xi}\int_{s}\int_u u^2 \mathcal{P}(s|u)\mathcal{N}\left(s|\zeta,\frac{1}{2\hat{q}_s}\right)\mathcal{N}\left(u|\sqrt{\frac{F}{E(E-F)}}\xi,\frac{1}{E}\right)\text{d}u\text{d}s\text{D}\xi\text{d}\zeta\\
=&\lim_{\tau \rightarrow 0}C\sqrt{\frac{2\pi}{E-F}}\int_{\xi}\int_u u^2\mathcal{N}\left(u|\sqrt{\frac{F}{E(E-F)}}\xi,\frac{1}{E}\right)\text{d}u\text{D}\xi\\
=&\lim_{\tau\rightarrow 0}C\sqrt{\frac{2\pi}{E-F}}\int_{\xi}\int_u u^2\mathcal{N}\left(u|0,\frac{1}{E-F}\right)\text{d}u\\
=&\lim_{\tau\rightarrow 0}C\sqrt{\frac{2\pi}{E-F}}\frac{1}{E-F}.
\end{align}
Then the following could be obtained
\begin{align}
\chi_u=\lim_{\tau\rightarrow 0}\frac{1}{E-F}=N\chi_h\chi_x.
\end{align}

The calculation of $q_u$ is the same as $q_s$. After some algebras, we get
\begin{align}
q_u=\int_{\zeta}\int_{\xi}\frac{\left[\int u\mathcal{N}_{s|u}\left(\sqrt{Nq_xq_h}\xi,N(\chi_h\chi_x-q_hq_x),\zeta,\frac{1}{2\hat{q}_s}\right)\text{d}s\text{d}u\right]^2}{\int \mathcal{N}_{s|u}\left(\sqrt{Nq_xq_h}\xi,N(\chi_h\chi_x-q_hq_x),\zeta,\frac{1}{2\hat{q}_s}\right)\text{d}s\text{d}u}\text{D}\xi\text{d}\zeta.
\end{align}

\section{Proof for partial derivation of Gaussian}
\label{Appendix:G}
Given a Gaussian distribution
\begin{align}
\mathcal{N}\left(\mathbf{x}|\mathbf{0},\chi \mathbf{Q}_H\odot \mathbf{Q}_X\right)
=(2\pi)^{-\frac{N}{2}}\det(\chi \mathbf{Q}_H\odot \mathbf{Q}_X)^{-\frac{1}{2}}\exp \left[-\frac{1}{2}\mathbf{x}^{\text{T}}(\chi \mathbf{Q}_H\odot \mathbf{Q}_X)^{-1}\mathbf{x}\right],
\end{align}
where `$\odot$' denotes the element-wise multiply, its partial derivation w.r.t. $\mathbf{Q}_H$ denotes
\begin{align}
\nonumber
&\frac{\partial \mathcal{N}(\mathbf{x}|\mathbf{0},\chi \mathbf{Q}_H\odot \mathbf{Q}_X)}{\partial \mathbf{Q}_H}\\
\nonumber
&=(2\pi)^{-\frac{N}{2}}\frac{\partial \det(\chi \mathbf{Q}_H\odot \mathbf{Q}_X)^{-\frac{1}{2}}}{\partial \mathbf{Q}_H}\exp \left[-\frac{1}{2}\mathbf{x}^{\text{T}}(\chi \mathbf{Q}_H\odot \mathbf{Q}_X)^{-1}\mathbf{x}\right]\\
&\quad +(2\pi)^{-\frac{N}{2}}\det(\chi \mathbf{Q}_H\odot \mathbf{Q}_X)^{-\frac{1}{2}}\frac{\partial }{\partial \mathbf{Q}_H}\exp \left[-\frac{1}{2}\mathbf{x}^{\text{T}}(\chi \mathbf{Q}_H\odot \mathbf{Q}_X)^{-1}\mathbf{x}\right],
\end{align}
The partial derivation in the equation above are as follows
\begin{align}
\frac{\partial \det(\chi \mathbf{Q}_H\odot \mathbf{Q}_X)^{-\frac{1}{2}}}{\partial \mathbf{Q}_H}
&=-\frac{1}{2}\det(\chi\mathbf{Q}_H\odot \mathbf{Q}_X)^{-\frac{1}{2}}(\mathbf{Q}_H\odot \mathbf{Q}_X)^{-1}\odot \mathbf{Q}_X,
\end{align}
and
\begin{align}
\nonumber
&\frac{\partial }{\partial \mathbf{Q}_H}\exp \left[-\frac{1}{2}\mathbf{x}^{\text{T}}(\chi \mathbf{Q}_H\odot \mathbf{Q}_X)^{-1}\mathbf{x}\right]\\
&=-\frac{1}{2}\exp \left(-\frac{1}{2}\mathbf{x}^{\text{T}}(\chi \mathbf{Q}_H\odot \mathbf{Q}_X)^{-1}\mathbf{x}\right)\frac{\partial }{\partial \mathbf{Q}_H}(\mathbf{x}^{\text{T}}(\chi \mathbf{Q}_H\odot \mathbf{Q}_X)^{-1}\mathbf{x}),
\end{align}
where
\begin{align}
\frac{\partial }{\partial \mathbf{Q}_H}\mathbf{x}^{\text{T}}(\chi \mathbf{Q}_H\odot \mathbf{Q}_X)^{-1}\mathbf{x}
&=
 \left[
\begin{matrix}
\frac{\partial \mathbf{x}^{\text{T}}(\chi \mathbf{Q}_H\odot \mathbf{Q}_X)^{-1}\mathbf{x}}{\partial [Q_H]_{11}}& \cdots &\frac{\partial \mathbf{x}^{\text{T}}(\chi \mathbf{Q}_H\odot \mathbf{Q}_X)^{-1}\mathbf{x}}{\partial [Q_H]_{1N}}\\
\vdots &\ddots &\vdots\\
\frac{\partial \mathbf{x}^{\text{T}}(\chi \mathbf{Q}_H\odot \mathbf{Q}_X)^{-1}\mathbf{x}}{\partial [Q_H]_{N1}}&\cdots &\frac{\partial \mathbf{x}^{\text{T}}(\chi \mathbf{Q}_H\odot \mathbf{Q}_X)^{-1}\mathbf{x}}{\partial [Q_H]_{NN}}
\end{matrix}
\right].
\end{align}

Using the fact\footnote{$\frac{\partial g(\mathbf{U})}{\partial x}=\text{Tr}\left\{\frac{\partial g(\mathbf{U})}{\partial \mathbf{U}}\frac{\partial \mathbf{U}}{\partial x}\right\}$ and $\frac{\partial \mathbf{U}^{-1}}{\partial x}=-\mathbf{U}^{-1}\frac{\partial \mathbf{U}}{\partial x}\mathbf{U}^{-1}$, where $\bs{U}$ is square matrix with argument $x$.} we have
\begin{align}
\frac{1}{\chi}\frac{\partial \mathbf{x}^{\text{T}}( \mathbf{Q}_H\odot \mathbf{Q}_X)^{-1}\mathbf{x}}{\partial [Q_H]_{ij}}
&=\frac{1}{\chi}\text{Tr}\left\{\frac{\partial \mathbf{x}^{\text{T}}( \mathbf{Q}_H\odot \mathbf{Q}_X)^{-1}\mathbf{x}}{\partial ( \mathbf{Q}_H\odot \mathbf{Q}_X)^{-1}}\frac{\partial ( \mathbf{Q}_H\odot \mathbf{Q}_X)^{-1}}{\partial [Q_H]_{11}}\right\}\\
&=\frac{1}{\chi}\text{Tr}\left\{\mathbf{xx}^{\text{T}}\frac{\partial (\chi \mathbf{Q}_H\odot \mathbf{Q}_X)^{-1}}{\partial [Q_H]_{11}}\right\}\\
&=-\frac{1}{\chi}\text{Tr}\left\{\mathbf{xx}^{\text{T}}(\mathbf{Q}_H\odot \mathbf{Q}_X)^{-1}\frac{\partial \mathbf{Q}_H\odot \mathbf{Q}_X}{\partial [Q_H]_{ij}}( \mathbf{Q}_H\odot \mathbf{Q}_X)^{-1}\right\}\\
&=-\frac{1}{\chi}\text{Tr}\left\{\mathbf{xx}^{\text{T}}( \mathbf{Q}_H\odot \mathbf{Q}_X)^{-1} [Q_X]_{ij}\boldsymbol{e}_i\boldsymbol{e}_j^T( \mathbf{Q}_H\odot \mathbf{Q}_X)^{-1}\right\}\\
&=-\frac{1}{\chi}[Q_X]_{ij}\boldsymbol{e}_j^{\text{T}}( \mathbf{Q}_H\odot \mathbf{Q}_X)^{-1}\mathbf{xx}^T(\mathbf{Q}_H\odot \mathbf{Q}_X)^{-1}\boldsymbol{e}_i,
\end{align}
where $\boldsymbol{e}_j$ is column vector with all elements being zeros expect $j$-th element being 1. We then have
\begin{align}
\frac{\partial }{\partial \mathbf{Q}_H}\mathbf{x}^{\text{T}}(\chi \mathbf{Q}_H\odot \mathbf{Q}_X)^{-1}\mathbf{x}=-\frac{1}{\chi} \mathbf{Q}_X\odot [( \mathbf{Q}_H\odot \mathbf{Q}_X)^{-1}\mathbf{xx}^{\text{T}}( \mathbf{Q}_H\odot \mathbf{Q}_X)^{-1}].
\end{align}

As a result, we obtain
\begin{align}
\nonumber
\frac{\partial \mathcal{N}(\mathbf{x}|\mathbf{0},\chi \mathbf{Q}_H\odot \mathbf{Q}_X)}{\partial \mathbf{Q}_H}
&=-\frac{\mathcal{N}(\mathbf{x}|\mathbf{0},\chi \mathbf{Q}_H\odot \mathbf{Q}_X)}{2}\\
& \qquad \times \mathbf{Q}_X\odot \left[(\mathbf{Q}_H\odot \mathbf{Q}_X)^{-1}-\frac{1}{\chi}(\mathbf{Q}_H\odot \mathbf{Q}_X)^{-1}\mathbf{xx}^{\text{T}}(\mathbf{Q}_H\odot \mathbf{Q}_X)^{-1}\right].
\end{align}

\end{appendices}

\bibliographystyle{IEEEtran}
\bibliography{ZQY_bib}

\end{document}